\date{}
\newcommand{\ketbra}[2]{\lvert #1 \rangle \! \langle #2 \rvert} %
\newcommand{\norm}[1]{\lVert#1\rVert}
\newcommand{\Exp}{\mathbb{E}}
\newcommand{\FF}{\mathbb{F}}
\def\01{\{0,1\}}
\newtheorem{theorem}{Theorem}
\numberwithin{theorem}{section}
\newtheorem{problem}[theorem]{Problem}
\newtheorem{lemma}[theorem]{Lemma}
\newtheorem{corollary}[theorem]{Corollary}
\newtheorem{proposition}[theorem]{Proposition}
\newtheorem{fact}[theorem]{Fact}
\newtheorem{claim}[theorem]{Claim}
\newtheorem{result}[theorem]{Result}
\theoremstyle{definition}
\DeclareMathOperator{\poly}{poly}
\renewcommand{\Pr}{\mbox{\rm Pr}}
\DeclareMathOperator{\Tr}{Tr}
\newcommand{\ra}{\rangle}
\newcommand{\la}{\langle}
\newcommand{\R}{\mathbb{R}} 
\newcommand{\Cc}{\mathcal{C}} 
\newcommand{\Se}{\mathcal{S}} 
\newcommand{\F}{\mathbb{F}} 
\newcommand{\pmset}[1]{\{-1,1\}^{#1}} 
\newcommand{\Id}{\ensuremath{\mathop{\rm Id}\nolimits}}
\newcommand{\eps}{\varepsilon}
\DeclareMathOperator{\Var}{Var}
\DeclareMathOperator{\spec}{Spec}
\DeclareMathOperator{\op}{op}
\newcommand{\id}{I}
\newcommand{\calH}{\mathcal H}
\newcommand{\calT}{\mathcal T}
\newcommand{\calE}{{\cal E }}
\newcommand{\wt}{\mathsf{wt}}
\newcommand{\commutant}{\mathsf{C}}
\newcommand{\commutantperp}{\mathsf{A}}
\newcommand{\topenergy}{\mathsf{TopEnergy}}
\newcommand{\beq}{\begin{equation}}
\newcommand{\eeq}{\end{equation}}
\newcommand{\beqn}{\begin{equation*}}
\newcommand{\eeqn}{\end{equation*}}
\newcommand{\beqr}{\begin{eqnarray}}
\newcommand{\eeqr}{\end{eqnarray}}
\newcommand{\beqrn}{\begin{eqnarray*}}
\newcommand{\eeqrn}{\end{eqnarray*}}
\newcommand{\bmline}{\begin{multline}}
\newcommand{\emline}{\end{multline}}
\newcommand{\bmlinen}{\begin{multline*}}
\newcommand{\emlinen}{\end{multline*}}
\newtheorem*{rep@theorem}{\rep@title}
\newcommand{\newreptheorem}[2]{%
\newenvironment{rep#1}[1]{%
\def\rep@title{#2 \ref{##1}}%
\begin{rep@theorem}}%
{\end{rep@theorem}}}
\definecolor{applegreen}{rgb}{0.0, 0.5, 0.0}
\def\widebreve{\mathpalette\wide@breve}
\def\wide@breve#1#2{\sbox\z@{$#1#2$}%
     \mathop{\vbox{\m@th\ialign{##\crcr
\kern0.08em\brevefill#1{0.8\wd\z@}\crcr\noalign{\nointerlineskip}%
                    $\hss#1#2\hss$\crcr}}}\nolimits}
\def\brevefill#1#2{$\m@th\sbox\tw@{$#1($}%
  \hss\resizebox{#2}{\wd\tw@}{\rotatebox[origin=c]{90}{\upshape(}}\hss$}
\title{Testing and learning structured quantum Hamiltonians}
\begin{document}
\author{
Srinivasan Arunachalam\\[2mm]
IBM Quantum\\
\small Almaden Research Center, CA, USA\\
\small \texttt{Srinivasan.Arunachalam@ibm.com}
\and
Arkopal Dutt\\[2mm]
IBM Quantum\\
\small   Cambridge, MA, USA\\
\small \texttt{arkopal@ibm.com}
\and
Francisco Escudero \\ Gutiérrez \\[2mm] 
Qusoft and CWI \\ 
\small Amsterdam, Netherlands \\
\small \texttt{feg@cwi.nl}
}

\maketitle 
\begin{abstract}
    We consider the problems of testing and learning an $n$-qubit  Hamiltonian $H=\sum_x \lambda_x \sigma_x$ expressed in its Pauli basis, from queries to its evolution operator $U=e^{-iHt}$. To this end, we prove the following results.
    \begin{enumerate}
        \item \textbf{Testing}: We give a \emph{tolerant}  testing protocol to decide if a Hamiltonian is $\eps_1$-close to $k$-local or $\eps_2$-far from $k$-local in the $\ell_2$ norm of the coefficients, with $O(1/(\eps_2-\eps_1)^{4})$ queries, thereby solving two open questions posed in a recent work by Bluhm, Caro and Oufkir~\cite{bluhm2024hamiltonianv1}.  We give a protocol for testing whether a Hamiltonian is $\eps_1$-close to being $s$-sparse or $\eps_2$-far from being $s$-sparse in the $\ell_2$ norm of the coefficients, with $O(s^{6}/(\eps_2^2-\eps_1^2)^{6})$ queries. 
        \item \textbf{Learning}: We give a protocol to $\eps$-learn unstructured Hamiltonian in the $\ell_\infty$ norm of the coefficients with $O(1/\eps^4)$ queries. Combining this with the non-commutative Bohnenblust-Hille inequality, we obtain an algorithm for learning $k$-local Hamiltonians in $\ell_2$ norm of the coefficients that only uses $O(\exp(k^2+k\log(1/\eps)))$ queries. For Hamiltonians that are $s$-sparse in the Pauli basis, we can learn them in the $\ell_2$ norm with $O(s^2/\eps^4)$ queries. 
        \item \textbf{Learning without quantum memory}: The learning results stated above have no dependence on the system size $n$, but require $n$-qubit quantum memory. 
        We give subroutines that  allow us to reproduce all the above learning results without quantum memory; squaring the query complexity and paying a $(\log n)$-factor in the local case and an $n$-factor in the sparse case. 
        \item \textbf{Testing without quantum memory}: We give a new subroutine called \emph{Pauli hashing}, which allows one to tolerantly test $s$-sparse Hamiltonians in $\ell_2$ norm using $\tilde{O}(s^{14}/(\varepsilon_2^2-\varepsilon_1^2)^{18})$ query complexity. A key ingredient is showing that $s$-sparse Pauli channels can be tested in a tolerant fashion as being $\varepsilon_1$-close to being $s$-sparse or $\varepsilon_2$-far under the diamond norm, using $\tilde{O}(s^2/(\varepsilon_2-\varepsilon_1)^6)$ queries via Pauli hashing.
    \end{enumerate}

In order to prove these results, we prove new structural theorems for local Hamiltonians, sparse Pauli channels and sparse Hamiltonians. We complement our learning algorithms with lower bounds that are polynomially weaker. Furthermore, our algorithms use short time evolutions and do not assume prior knowledge of the terms on which the Pauli spectrum is supported on, i.e., we do not require prior knowledge about the \emph{support} of the Hamiltonian terms. 
\end{abstract}
\newpage 
{\small \tableofcontents}
\newpage

\section{Introduction}
A fundamental and important challenge with building quantum devices is being able to characterize and calibrate its behavior. One approach to do so is \emph{Hamiltonian learning} which seeks to learn the Hamiltonian governing the dynamics of a quantum system given finite classical and quantum resources. Beyond system characterization, it is also carried out during validation of physical systems and designing control strategies for implementing quantum gates~\cite{innocenti2020supervised}. However, learning an $n$-qubit Hamiltonian is known to be difficult, requiring complexity that scales exponential in the number of qubits unless a coarse metric is used~\cite{caro2023learning}. 

In practice, prior knowledge on the structure of Hamiltonians is available e.g., those of engineered quantum devices~\cite{sheldon2016procedure} where the underlying Hamiltonians primarily involve local interactions with few non-local interactions, and even naturally occurring physical quantum systems such as those with translationally invariant Hamiltonians. To highlight these structural properties, consider an $n$-qubit Hamiltonian $H$ (which is a self-adjoint operator acting on $(\mathbb C^{2})^{\otimes n}$) expanded in terms of the $n$-qubit Pauli operators:
\begin{equation*}
    H=\sum_{x\in \01^{2n}} \lambda_x\sigma_x,
\end{equation*}
where $\lambda_x$ are real-valued coefficients (also called interaction strengths) of the Pauli operators $\sigma_{x}$ denoted by the string $x=(a,b)\in \F_2^{2n}$ with  $\sigma_{(a,b)}= i^{a \cdot b} \otimes_{i=1}^n X^{a_i} Z^{b_i}$. We call the set of Paulis with non-zero coefficients $\lambda_x$ as the Pauli spectrum of the Hamiltonian denoted by $\mathcal{S}=\{x\in \01^{2n}:\ \lambda_x\neq 0\}$. Of particular relevance are \emph{$k$-local} Hamiltonians which involve Pauli operators that act non-trivially on all but at most $k$ qubits and \emph{$s$-sparse} Hamiltonians whose Pauli expansion contains at most $s$ non-zero Pauli operators i.e., $|\mathcal{S}|\leq s$.

There has thus been a growing suite of Hamiltonian learning results that have shown that when the underlying $n$-qubit Hamiltonian $H$ satisfies these structural properties, learning can be performed with only $\poly(n)$ query complexity, either by making ``queries" to the unitary evolution operator $U(t)=\exp(-iHt)$~\cite{Silva2011Practical, holzapfel2015scalable, Zubida2021Optimal, haah2022optimal, yu2023robust, Dutkiewicz.2023, huang2023heisenberg, li2023heisenberglimited, franca2024efficient, Gu2022Practical,zhao2024learning,hu2025ansatz}, or by assuming one has access to Gibbs state~\cite{anshu2021sample, haah2022optimal, rouze2023learning, onorati2023efficient, bakshi2023learning, Gu2022Practical}. Notably, \cite{bakshi2024structure} considered the problem of learning Hamiltonians that are both local and sparse, without prior knowledge of the support. Several of the learning algorithms mentioned above however require assumptions on the support of the Hamiltonian beyond locality or sparsity, such as \cite{huang2023heisenberg} which considers \emph{geometrically-local} Hamiltonians (a subset of local Hamiltonians) and \cite{yu2023robust} which requires assumptions on the~support.

Moreover, before learning, it might be desirable to uncover \emph{what is the structure} of an unknown Hamiltonian in order to choose specialized learning algorithms. Even deciding if a Hamiltonian has a particular structure is a fundamental challenge and constitutes the problem of \emph{testing} if an unknown Hamiltonian satisfies a certain structural property. As far as we know, this line of investigation is nascent with only a few works on Hamiltonian \emph{property} testing \cite{she2022unitary,aharonov2022quantum,laborde2022quantum} with Blum et al.~\cite{bluhm2024hamiltonianv2} having considered the problem of testing local Hamiltonians and the problem of testing sparse Hamiltonians yet to be tackled. This leads us to the motivating question of our~work:
\begin{quote}
\center \emph{{What is the query complexity of learning and testing structured Hamiltonians?}}
\end{quote}

\subsection{Problem statement}
Before we state our results answering the question above, we clearly mention our learning and testing problems first. 
If $H$ is the Hamiltonian describing the dynamics of a certain physical system, then the state of that system evolves according to the \emph{time evolution operator} $U(t)=e^{-iHt}$. This means that if $\rho(0)$ is the state at time $0$, at time $t$ the state would have evolved to $\rho(t)=U(t)\rho(0)  U^{\dagger}(t)$. Hence, to test and learn a Hamiltonian one can do the following: prepare a desired state, apply $U(t)$ or tensor products of $U(t)$ with identity to the state, and finally measure in a chosen basis. From here onwards, this is what we mean by \emph{querying} the unitary $U(t)$. It is usual to impose the normalization condition $\norm{H}_{\mathrm{op}}\leq 1$ (i.e., that the eigenvalues of $H$ are bounded in absolute value by $1$). We will assume this normalization unless otherwise stated, but we will also work out the dependence on $\norm{H}_{\op}$ for our learning algorithms. Throughout this paper, we will consider the normalized Frobenius norm as the distance between Hamiltonians, unless otherwise stated. This distance is 
$$
d(H,H')=\norm{H-H'}_2=\sqrt{\frac{\Tr[(H-H')^2]}{2^n}},
$$and it equals the $\ell_2$-norm of the Pauli spectrum, $d(H,H')=\sqrt{\sum |\lambda_x-\lambda'_{x}|^2}$. A \emph{property} of a Hamiltonian, denoted $\mathcal{H}$ is a class of Hamiltonians that satisfy the property (here we will be interested in sparse and local properties). We say that $H$ is $\eps$-far from having a property $\mathcal{H}$ if $d(H,H')>\eps$ for every $H'\in\mathcal{H}$, and otherwise is $\eps$-close. Now, we are ready to state the testing and learning problems. 
\vspace{2mm}

\fbox{\begin{minipage}{40em}
Let $\mathcal{H}$ be a property and let $H$ be an unknown Hamiltonian with $\norm{H}_{\mathrm{op}}\leq 1$ and $\Tr[H]=0$.
\begin{problem}[Tolerant testing]\label{prob:localitytesting}
    Promised $H$ is either $\eps_1$-close or $\eps_2$-far from satisfying property $\mathcal{H}$, decide which is the case  by making queries to $U(t)$.
\end{problem}
\begin{problem}[Hamiltonian learning]\label{prob:locallearning}
     Promised $H\in \mathcal{H}$, output a classical description of $\widetilde{H}\in\mathcal H$ such that  $\|H-\widetilde{H}\|_2\leq \eps$ by making queries to $U(t)$.
\end{problem}
\end{minipage}
}

\subsection{Summary of results}

The main results of this work are query-efficient algorithms for testing and learning Hamiltonians that are local and/or sparse. We can reproduce these results without using quantum memory by increasing the number of queries.  We summarize our results in the \cref{tab:summary_results_paper}.

{\renewcommand{\arraystretch}{1.3}\begin{table}[h]
\footnotesize
\centering
\begin{tabular}{c | c  | c  |c|c|} 
\cline{2-5}
\multirow{2}{*}{}& \multicolumn{2}{c|}{\textbf{Testing}} & \multicolumn{2}{c|}{\textbf{Learning}} \\
\cline{2-5}
& \emph{with memory} &\emph{w/o memory} & \emph{with memory} & \emph{w/o memory} \\ [0.3ex] 
\hline
\multicolumn{1}{|c|}{ $s$-sparse} & \makecell{$\poly(s)$ \\ \cref{theo:sparsitytesting}}  & \makecell{$\poly(s)$\\\cref{thm:sparsity_testing_pauli_hashing_ham}}   & \makecell{$\poly(s)$\\ \cref{theo:sparselearning}} & \makecell{$n\cdot \poly(s)$\\ \cref{theo:learnnomemory}}\\ 
\hline
\multicolumn{1}{|c|}{$k$-local}   & \makecell{$O(1)$\\ \cref{theo:localitytesting}} & \makecell{$O(1)$\\ \cite{bluhm2024hamiltonianv2}} & \makecell{$\exp(k^2)$\\ \cref{theo:locallearning}} &\makecell{$(\log n)\cdot \exp(k^2)$\\ \cref{theo:learnnomemory} } \\ \hline
\multicolumn{1}{|c|}{$k$-local \& $s$-sparse}   & $\poly(s)$ & $\poly(s)$ & $\min\{\exp(k^2),\poly(sk)\}$ & $(\log n)\cdot \min\{\exp(k^2),\poly(sk)\}$ \\ \hline
\end{tabular}
\caption{Query complexity for learning and testing $n$-qubit structured Hamiltonians. Dependence on $n$ and the structural property is shown for constant accuracy. Results are indicated with quantum memory (i.e.,  an $n$-qubit ancillary system is available) and without quantum memory.} 
\label{tab:summary_results_paper}
\end{table}
}

Before we discuss our results in more detail, we make a few remarks about our main results.
\begin{enumerate}
    \item [$(i)$] As far as we know, this is the first work: $(a)$  with complexities that are \emph{independent} of~$n$ (with memory)\footnote{There are a few works that achieve $n$-independent complexities for learning local Hamiltonians in the $\infty$-norm of the Pauli coefficients, but when transformed into $2$-norm learners they yield complexities depending on $n^k$.}, and
    $(b)$ that does not assume knowledge of the support.\footnote{Soon after the third-author's work~\cite{gutierrez2024simple}, Bakshi et al.~\cite{bakshi2024structure} presented a learning algorithm that does not require prior knowledge of the support, achieving Heisenberg scaling using heavy machinery.}
    \item [$(ii)$] We give the first learning algorithm for Hamiltonians that are only promised to be sparse, and not necessarily local. Similarly, our local Hamiltonian learning problem doesn't assume geometric locality which was assumed in several prior works.
    \item [$(iii)$] Our  testing algorithms are tolerant, i.e., they can handle the setting where $\varepsilon_1\neq 0$. As far as we know, there are only a handful of polynomial-time tolerant~testers for quantum objects.
    \item [$(iv)$] We show that all our learning protocols with quantum memory can be translated to ones which require no quantum memory. In the case of learning structured Hamiltonians, we obtain a protocol with only a factor $\log n$ overhead for local Hamiltonians and a protocol with a factor $n$ overhead for sparse Hamiltonians. 
    \item [$(v)$] We also give a tolerant testing algorithm for $s$-sparse Hamiltonians that requires no quantum memory based on a new subroutine called Pauli hashing. The query complexity is $O(\poly(s))$ and is notably independent of dimension $n$. 
    \item [$(vi)$] Our learning algorithms are based on a subroutine that estimates arbitrary $n$-qubit Hamilotmians with $O(1/\eps^4)$ queries, albeit in the coarser metric of the $\ell_\infty$-norm of the Pauli coefficients. As far as we know, this is the best result for unstructured Hamiltonians.  Notably, it is also the first time-efficient proposal for this problem.
\end{enumerate}
We remark that most  previous work on Hamiltonian learning (that we highlighted earlier) are done under the distance induced by the supremum norm of the Pauli spectrum and with extra constraints apart from locality \cite{Silva2011Practical, holzapfel2015scalable, Zubida2021Optimal, haah2022optimal, wilde2022scalably, yu2023robust, caro2023learning, Dutkiewicz.2023, huang2023heisenberg, li2023heisenberglimited, möbus2023dissipationenabled, franca2024efficient, Gu2022Practical}. When transformed into learning algorithms under the finer distance induced by the  $\ell_2$-norm of the Pauli spectrum, these proposals yield complexities that depend polynomially on $n^k$ and only work for a restricted family of $k$-local Hamiltonians. The works that explicitly consider the problem of learning under the $\ell_2$-norm have complexities depending on $n$ and assume a stronger access model \cite{castaneda2023hamiltonian,bakshi2024structure}.

\subsection{Results}
\paragraph{Testing.} Recently, Bluhm, Caro and Oufkir proposed a non-tolerant testing algorithm, meaning that it only works for the case $\eps_1=0,$ whose query complexity is $O(n^{2k+2}/(\eps_2-\eps_1)^4)$ and with total evolution time $O(n^{k+1}/(\eps_2-\eps_1)^3)$. They posed as open questions whether the dependence on $n$ could be removed and whether an efficient tolerant-tester was possible \cite[Section 1.5]{bluhm2024hamiltonianv1}. Our first result gives positive answer to both questions. 

\begin{result}
     There is an algorithm that solves~\cref{prob:localitytesting} for $k$-local Hamiltonians by making $\poly(1/(\eps_2-\eps_1))$ queries to the evolution operator and with $\poly(1/(\eps_2-\eps_1))$ total evolution time. 
\end{result}

See Theorem~\ref{theo:localitytesting} for a formal statement of this result. Our algorithm to test for locality is simple. It consists of repeating the following process $1/(\eps_2-\eps_1)^4$ times: prepare $n$ EPR pairs, apply $U(\eps_2-\eps_1)\otimes \Id_{2^n}$ to them and measure in the Bell basis. Each time that we repeat this process, we sample from the  Pauli sprectrum of $U(\eps_2-\eps_1)$.\footnote{The Pauli spectrum of a unitary $U=\sum_x \widehat{U}_x\sigma_x$ determines a probability distribution because $\sum_x |\widehat{U}_x|^2=1$.} As $\eps_2-\eps_1$ is  small, Taylor expansion ensures that $U(\eps_2-\eps_1)\approx \Id_{2^n}-i(\eps_2-\eps_1) H$, so sampling from the Pauli spectrum of $U(\eps_2-\eps_1)$  allows us to estimate the weight of the non-local terms of $H$. If that weight is big, we output that $H$ is far from $k$-local, and otherwise we conclude that $H$ is close to $k$-local.

Classically there have been many papers studying the problem of testing and learning sparse Boolean functions \cite{gopalan2011sparsity,negahban2012learning,yaroslavtsev2020fast,eskenazis2022low}, however there are not many results on learning \emph{sparse Hamiltonians} (and not necessarily local). The only testing result that we are aware of requires $O(sn)$ queries and time $O(4^{ns})$~\cite[Remark B.2]{bluhm2024hamiltonianv2}. Here, we present  the first sparsity time-efficient testing algorithm whose query complexity does not depend on $n$.

\begin{result}
     There is an algorithm that solves~\cref{prob:localitytesting} for $s$-sparse Hamiltonians by making $\poly(s/(\eps_2-\eps_1))$ queries to the evolution operator and with $\poly(s/(\eps_2-\eps_1))$ total evolution time. 
\end{result}
See Theorem~\ref{theo:sparsitytesting} for a formal statement. This testing algorithm consists on performing Pauli sampling of $U(\sqrt{(\eps_2^2-\eps_1^2)/s})$ a total of $O(s^4/(\eps_2^2-\eps_1^2)^{4})$ times. From these samples one can estimate the sum of the squares of the top $s$ Pauli coefficients of $U$. If this quantity is big enough, we output that the Hamiltonian is close to $s$-sparse, and otherwise that is far. Although from this high-level description the algorithm seems similar to the locality testing one, the analysis is more involved and requires taking the second order Taylor expansion, which is the reason why the dependence on $(\eps_2-\eps_1)$ is worse in this case. 

Additionally, we provide a sparsity tester (\cref{theo:sparsitytestnottol}) that  only makes $O(s^2/\eps_2^4)$ queries with $O(s^{1.5}/\eps_2^3)$ total evolution time, but only works in the regime $\eps_1=O(\eps_2/\sqrt{s})$. 

\paragraph{Learning.} We first propose a protocol to learn unstructured Hamiltonians efficiently in the coarser $\ell_\infty$ norm of the Pauli coefficients. Then, we turn it into a learner in the $\ell_2$ norm for local and sparse Hamiltonians. In particular, we propose the first learning algorithm for sparse~Hamiltonians which does not make any assumptions regarding the support of the Hamiltonian beyond sparsity.\footnote{A concurrent work also dealt with the problem of learning sparse Hamiltonians \cite{zhao2024learning}. See \cref{tab:comparison} for a comparison.} 

\begin{result}\label{res:unslearn}
    There is an algorithm that outputs estimates $\widetilde \lambda_x$ such that $|\lambda_x-\widetilde\lambda_x|\leq \eps$ for every $x\in\{0,1\}^{2n}$ by making $O(1/\eps^4)$ queries to the evolution operator with $O(1/\eps^3)$  total evolution time. 
\end{result}

See~\cref{theo:unstructuredlearning} for a formal result. The learning algorithm  has two stages. In the first stage one samples from the Pauli distribution of $U(\eps)$, as in the testing algorithm, and from that one can detect which are the big Pauli coefficients of $H$. In the second stage we learn the large Pauli coefficients via a subroutine based on Clifford Shadows (see \cref{lem:ShadowHamEstimation}). This subroutine allows to learn a set of $m$ Pauli coefficients of a Hamiltonian with $\log(m)$ queries to its time evolution operator, which may be of independent interest. For Hamiltonians that are $k$-local, we have the following learning result in the $\ell_2$-norm. 

\begin{result}\label{res:loclearn}
    There is an algorithm that solves~\cref{prob:locallearning}  for $k$-local Hamiltonians by making $\exp(k^2+k\log (1/\eps))$ queries to the evolution operator with $\exp(k^2+k\log (1/\eps))$  total evolution time. 
\end{result}

See Theorem~\ref{theo:locallearning} for a formal statement of this result. In the case that the Hamiltonian is $k$-local, one can ensure that the coefficients not detected as big in the first stage of the algorithm of \cref{res:unslearn} have a neglectable contribution to the $\ell_2$-norm, from which \cref{res:loclearn} follows. To argue this formally, we use the non-commutative Bohnenblust-Hille inequality, which has been used recently for various quantum learning algorithms~\cite{huang2023learning,volberg2023noncommutative}.  For Hamiltonians that are $s$-sparse, we have the following learning result in the $\ell_2$-norm. 

\begin{result}\label{res:spalearn}
    There is an algorithm that solves~\cref{prob:locallearning}  for $s$-sparse Hamiltonians by making $\poly(s/\varepsilon)$ queries to the evolution operator with $\poly(s/\varepsilon)$  total evolution time. 
\end{result}

See Theorem~\ref{theo:sparselearning} for a formal statement. \cref{res:spalearn} follows by adding a rounding step to the algorithm of \cref{res:unslearn} that ensures that all zero coefficients of the Hamiltonians are also zero for the approximating Hamiltonian.

\vspace{-1em}
\paragraph{Learning and testing without quantum memory.}
Motivated by the limitations of current devices, there has been a series of recent works to understand the power of quantum memory in testing and learning tasks, exhibiting exponential separations in some cases~\cite{chen2022exponential,chen2024optimal,chen2024optimalent}. A natural question is, if the problems that we mentioned above become harder without quantum memory? 

\textbf{Learning without memory.} We surprisingly show that, the learning protocols that we mention above, can be implemented efficiently when one  has \emph{ no quantum memory}.  To this end, we provide two crucial subroutines for $(i)$ estimating the Pauli spectrum of a unitary, $(ii)$ estimating a single Pauli coefficient to make our protocols work in the memory-less setting. Subroutine $(ii)$ incurs in no extra query-cost, and subroutine $(i)$ only incurs in a factor-$n$ overhead in the case of learning $s$-sparse Hamiltonians and a factor $\log(n)$ in the case of learning $k$-local Hamiltonians. These subroutines can also be useful in other contexts. In particular, we propose tolerant tester to decide if an unknown unitary is a $k$-junta  which uses $O(4^k)$ queries (see \cref{prop:testkjuntaunitaries}), making progress on a question of Chen et al.\ \cite[Section 1.3]{chen2023testing}, and then we use subroutine $(i)$ to turn it into a memory-less tester that only makes $O(4^kn)$ queries. 

\paragraph{Testing sparse Pauli channels via Pauli hashing.} In order to test for sparsity of Hamiltonian without memory we reduce to the problem of testing sparsity of a Pauli channel $\Phi:\rho \mapsto\sum_x p(x) \sigma_x \rho \sigma_x$, which is of independent interest. To do that, we introduce a new technique called \emph{Pauli Hashing} which allows to construct random partitions of Pauli operators. 
The high-level idea is to bucket the error rates $p(x)$ and thereby the corresponding Pauli operators: for this, we choose a random subgroup $G$ of the $n$-qubit Pauli group with dimension $t = O(\log s)$. Pauli hashing allows us to partition all the Pauli operators into cosets of the centralizer of $G$,  which is the set of all the Paulis that commute with the elements in $G$. The buckets are then the $O(s)$ cosets of the centralizer of $G$.
The main work then goes into arguing that the sum of the weights of the top $s$ buckets is a good estimate of the top $s$ error rates, and then a structural lemma we prove shows this is a good proxy for indicating whether the Pauli channel is close to being $s$-sparse or not. Putting everything together, with some careful analysis, we get an efficient tolerant tester for $s$-sparse Pauli~channels. 
\begin{result}
    There is an algorithm with no quantum memory that tests if a Pauli channel is $\eps_1$-close to or $\eps_2$-far-from being $s$-sparse in diamond norm by making $\widetilde{O}(s^2/(\varepsilon_2-\varepsilon_1)^6)$  queries to the channel.
\end{result}
See \cref{thm:testing_sparse_PC} for a formal statement. We remark that Pauli Hashing only requires the preparation of Pauli eigenstates and Pauli measurements, making it suitable for the near-term.

\paragraph{Testing sparse Hamiltonians without memory.} We provide a memory-less testing algorithm for $s$-sparse Hamiltonians that uses Pauli hashing, that is completely independent of our tester with memory and only requires $\poly(s/\eps)$ queries and total evolution time, notably avoiding any dependence on $n$. To do this, we reduce the problem of testing Hamiltonian sparsity to testing the sparsity of an associated Pauli channel. To be precise, given the time evolution channel $\mathcal{H}_t:\rho\to U(t)\rho U^\dagger (t)$, we define its \emph{Pauli-twirled channel} via $$\mathcal{H}_t^\calT(\rho)=\mathbb E_{x}[\sigma_x\mathcal{H}_t(\sigma_x\rho\sigma_x)\sigma_x],$$ 
and we prove that $\mathcal{H}_t$ is sparse if and only if $\mathcal{H}_t^\calT$ is sparse. Our result is then as follows. See Theorem~\ref{thm:sparsity_testing_pauli_hashing_ham} for a formal statement. 
\begin{result}
    There is an algorithm with no quantum memory for~\cref{prob:locallearning} for $s$-sparse Hamiltonians by making $\poly(s/\varepsilon)$ queries to the evolution operator with $\poly(s/\varepsilon)$ total evolution~time. 
\end{result}

\paragraph{Lower bounds.} One drawback of our learning and testing algorithms is the exponent of the sparsity parameter $s$, locality parameter $k$ and the tolerance $(\varepsilon_2-\varepsilon_1)$. Reducing to classical Boolean functions, we show lower bounds without memory that certify that the dependence on these parameters cannot be completely avoided, but an interesting and important future direction is to obtain the optimal results for these near-term relevant problems.\footnote{We remark that Bakshi et al.~\cite{bakshi2024structure} used highly non-trivial ideas to get Heisenberg scaling for their learning algorithm, and potentially similar ideas could be useful here.}

\textbf{Direct comparison to previous work.} 
Comparing the plethora of Hamiltonian learning algorithms can be challenging due to the different assumptions on the structure of the Hamiltonians (local, sparse, geometrical structures, etc.), the different distances to measure the error ($\ell_\infty$ norm of the coefficients, $\ell_2$ norm, etc.), the different complexity measures (queries, total evolution time, number of experiments, etc.), the different access models (coherent/non-coherent queries, with/without memory, etc.) and the different goals of the algorithm (minimizing the dependence on the dimension parameters like $n, s, k$, achieving the Heisenberg scaling $1/\eps$, etc.). Thus, we only include a direct comparison in \cref{tab:comparison} with the works that explicitly consider the same structure and the same error metric as us. As a summary, one can say that for constant $\eps$ our results achieve better dependence on the parameters $n,s,k$ than previous work, while also using the weaker model of incoherent queries, where one can perform only one query before measuring, as opposed to the coherent query model. We also want to remark that our result for learning unstructured Hamiltonian is time efficient, while the, to the best of our knowledge, only previous one is not~\cite{caro2023learning}.

\begin{table}[h!]
    \centering
    \makebox[\linewidth][c]{%
    \begin{tabular}{|c| c c c c |}
        \toprule
         Hamiltonians & Reference & $t_{total}$ & Queries    & Access model \\[0.5mm]
        \hline
        \multirow{2}{*}{Unstructured, $\ell_\infty$ error} &\cite{caro2023learning}& $n/\varepsilon^{4}$ & $n/\varepsilon^{4}$  & Coherent queries 
        \\[0.5mm]
         & \cref{theo:unstructuredlearning}  &   $1/\eps^3$ & $1/\varepsilon^4$     &   Incoherent queries \\ [0.5mm]
        \hline 
        \multirow{3}{*}{$s$-sparse, $\ell_\infty$ error} 
        &\cite{zhao2024learning}$^*$  & $1/\varepsilon^4$ & $1/\varepsilon^8$  &  Coherent queries\\[0.5mm]
        &\cite{hu2025ansatz}$^\dagger$  & $s^2/\varepsilon$ & $s^2/\varepsilon$  &  Coherent queries\\[0.5mm]      
        & \cref{theo:sparselearning}  &   $1/\eps^3$ & $1/\varepsilon^4$     &   Incoherent queries \\[0.5mm] 
        \hline 
         \multirow{3}{*}{$k$-local, $\ell_2$ error} 
         &\cite{castaneda2023hamiltonian}  & $n^k/\varepsilon^2$ & $n^k/\varepsilon^2$   &  Controlled and inverse queries \\[0.5mm]
        &\cite{ma2024learning}$^\circ$  & $(9n)^k/\varepsilon$ & $(27n^3)^{k}/\varepsilon^2$  &  Coherent queries \\[0.5mm]
        & \cref{theo:locallearning}  &   $\exp(k^2)/\eps^k$ & $\exp(k^2)/\eps^k$   &   Incoherent queries       \\ 
        \bottomrule
    \end{tabular}
    }
    \captionsetup{justification=raggedright, singlelinecheck=false} 
    \caption{Comparison of  algorithms for learning Hamiltonians with $\norm{H}_{\op}\leq 1$.\newline
    \footnotesize \textsuperscript{*} It can be improved to $O(1/\eps^{2+o(1)})$ total evolution time and $O(1/\eps^{6+o(1)})$ queries by paying huge constant factors.\\
    \textsuperscript{$\dagger$} This algorithm works for Hamiltonians with $\sup_x|\lambda_x|\leq 1,$ a weaker constraint than $\norm{H}_{\op}\leq 1$.\\
    \textsuperscript{$\circ$} This algorithm is the only one in the table that uses no quantum memory. We provide an algorithm with no quantum memory for $k$-local learning that performs as the one in the last row, but with an extra factor $\log n.$}
    \label{tab:comparison}
\end{table}

\subsection{Discussion and open questions}
Our work opens up several interesting directions which we state here and leave for future work.
\begin{enumerate}
    \item \textbf{Dependence on parameters $\varepsilon_1,\varepsilon_2$.} Our main objective in this work was to obtain query complexities for testing and learning with good dependence on the structural parameters. It is natural to ask if we could improve the dependence on the error parameters and perhaps achieve Heisenberg limited scaling as has been shown to be possible in some particular cases for Hamiltonian learning~\cite{huang2023heisenberg,bakshi2024structure}.
    \item \textbf{Robustness to SPAM noise.} It would be desirable to make the protocols introduced in this work to be robust to SPAM noise. A potential approach is to adapt strategies in~\cite{flammia2020efficient}.
    \item \textbf{Adaptivity.} For learning structured Hamiltonians, adaptive strategies~\cite{granade2012robust,dutt2023active} can improve query complexity by shedding constant factors over baseline learning algorithms, thereby improving performance in practice. Another direction is to then explore adaptive protocols for testing structured Hamiltonians and the performance gains they may bring.
    \item \textbf{Testing and learning with limited quantum memory.} For estimating properties of quantum states, Chen et al. \cite{chen2022exponential} showcased the utility of the resource of quantum memory or a $k$-qubit ancillary system ($k < n$). Large separations in query complexity when learning with memory (even for $k \ll n$) and without memory have been reported for learning Pauli channels~\cite{chen2022channel,chen2023paulichannels} and shadow tomography~\cite{chen2024optimal}. We could thus imagine having access to only limited quantum memory during learning or testing structured Hamiltonians as well. However, it should be noted that given the separation between the query complexities (see Table~\ref{tab:summary_results_paper}) with access to $n$-qubit quantum memory and without any, only marginal gains in complexity are expected from having access to limited quantum memory.
    \item \textbf{Testing and learning Hamiltonians from Gibbs states.} Another natural learning model is that of having access to copies of the Gibbs state of a quantum Hamiltonian at a certain inverse temperature. There has been a suite of work investigating learning local Hamiltonians from Gibbs states~\cite{anshu2021sample,bakshi2023learning} but answering the question of testing structured Hamiltonians given access to copies of the Gibbs state remains wide open.
\end{enumerate}
\vspace{-1em}
\paragraph{Note added.} After sharing \cref{theo:localitytesting} with Bluhm et al., they independently improved the analysis of their testing algorithm and showed that it only requires $O(1/(\eps_2-\eps_1)^3\eps_2)$ queries and $O(1/(\eps_2-\eps_1)^{2.5}\eps_2^{0.5})$ total evolution time, which is very similar to our 
\cref{theo:localitytesting} \cite{bluhm2024hamiltonianv2}. In addition, for a wide range of $k=O(n)$, their algorithm does not require the use of auxiliary qubits.
\vspace{-2em}
\paragraph{Acknowledgements.} S.A. and A.D. thank the Institute for Pure and Applied Mathematics (IPAM) for its hospitality throughout the long program “Mathematical and Computational Challenges in Quantum Computing” in Fall 2023 during which part of this work was initiated. This work was done in part while S.A. was visiting the Simons Institute for the Theory of Computing, supported by DOE QSA grant \#FP00010905. This research was supported by the Europea union’s Horizon 2020 research and innovation programme under the Marie Sk{\l}odowska-Curie grant agreement no. 945045, and by the NWO Gravitation project NETWORKS under grant no. 024.002.003. We thank Amira Abbas, Francesco Anna Mele, Andreas Bluhm, Jop Briët, Matthias Caro, Nunzia Cerrato, Aadil Oufkir, and Daniel Liang for useful comments and discussions. A.D. thanks Patrick Rall for multiple conversations on stabilizer subgroups and Pauli twirling. A.D. thanks Isaac Chuang for discussions on the problem of testing Hamiltonians. F.E.G. is funded by the Deutsche Forschungsgemeinschaft (DFG, German Research Foundation) under Germany's Excellence Strategy – EXC-2047/1 – 390685813. We thank the referees for their detailed comments.
\section{Preliminaries}
\subsection{Notation}
In this section we collect a few well-known facts that we will repeatedly use in our proofs. 
The $1$-qubit Pauli matrices matrices are defined as follows
$$\id=\begin{pmatrix}
1 & 0\\
0 & 1
\end{pmatrix},\ X=\begin{pmatrix}
0 & 1\\
1 & 0
\end{pmatrix},\ Y=\begin{pmatrix}
0 & -i\\
i & 0
\end{pmatrix}\ \mathrm{and}\  Z=\begin{pmatrix}
1 & 0\\
0 & -1
\end{pmatrix}.
$$
It is well-known that the $n$-qubit Pauli matrices $\{\id,X,Y,Z\}^{\otimes n}$ form an  {orthonormal basis} for $\mathbb{C}^{2^n}$.    In particular, for every $x=(a,b)\in \mathbb{F}_2^{2n}$, one can define the \emph{Pauli operator}
$$
\sigma_x=i^{a\cdot b} (X^{a_1}Z^{b_1}\otimes X^{a_2}Z^{b_2} \otimes \cdots \otimes X^{a_n}Z^{b_n}).
$$  
and these operators $\{\sigma_x\}_{x \in \FF_2^{2n}}$ are orthonormal.  Hence, every $n$-qubit operator $H$ can be written down in its Pauli decomposition  as
$$
H=\sum_{x\in \01^{2n}}\lambda_x\sigma_x,
$$
where the real-valued coefficients $\lambda_x$ are given by $\lambda_x=\frac{1}{2^n}\Tr(H\sigma_x)$. Parseval's identity states that the normalized Frobenius norm of $H$, denoted as $\norm{H}_2$, is the  $\ell_2$-norm of its Pauli spectrum,~i.e.,
$$
\norm{H}_2=\sqrt{\frac{\Tr [H^\dagger H]}{2^n}}=\sqrt{\sum_{x\in\01^{2n}}|\lambda_x|^2}.
$$
We will repeatedly use that $\norm{H}_2\leq \norm{H}_{\mathrm{op}}$, which holds because $\norm{H}_2^2$ is the average of the squares of the eigenvalues of $H$. We will also consider the $\ell_{\infty}$ norm of the Pauli coefficients of an operator, which is given by 
$$ \norm{H}_{\ell_\infty}=\sup_x|h_x|.$$
Additionally, we will use $\norm{H}:=\max\{\norm{H}_{\mathrm{op}},1\}$. 

Throughout this work, we will also use the following correspondence between $\01^{2n}$ and $\{0,1,2,3\}^n$: i.e., for $a,b\in \F_2^n$ and $(a,b)\in \01^{2n}$, consider the string $\big((a_1,b_1),\ldots,(a_n,b_n)\big)\subseteq (\F_2^2)^n$ and one can write out $(a_i,b_i)$ as an element in $\{0,1,2,3\}$.

Given $x\in\{0,1,2,3\}^{n}$, define $|x|$ as the number of indices $i\in [n]$ where $x_i\neq 0$,~define 
$$
H_{>k}=\sum_{|x|>k}\lambda_x\sigma_x
$$ and $H_{\leq k}$ as $\sum_{|x|\leq k}\lambda_x\sigma_x$. From the formulation of the 2-norm in terms of the Pauli coefficients it follows that $\norm{H_{>k}}_2\leq \norm{H}_2$.
We note that the distance of a Hamiltonian $H$ from the space of $k$-local Hamiltonians is given by $\norm{H_{>k}}_2$, as $H_{\leq k}$ is the  $k$-local Hamiltonian closest to $H$. The $\ell_2$-distance of $H$ to being $s$-sparse also has a nice expression. Assign labels from $[4^n]$ to $x\in \01^{2n}$ in a way that  and $|\lambda_{x_1}|\geq |\lambda_{x_2}|\dots \geq |\lambda_{x_{4^n}}|$. Then, $\sum_{i\in[s]}\lambda_{x_i}\sigma_{x_i}$ is the closest $s$-sparse Hamiltonian to $H$, so the $\ell_2$-distance of $H$ to the space of $s$-sparse Hamiltonians is $\sqrt{\sum_{i>s}|\lambda_{x_i}|^2}$.

\subsection{Necessary subroutines}
 Suppose $U$ is a unitary and we write out its Pauli decomposition as $U=\sum_x \widehat{U}_x \sigma_x$, then by Parseval's identity  $\sum|\widehat{U}_{x}|^2=\Tr[U^\dagger U]/2^n=1$, i.e.,  $\{|\widehat{U}_{x}|^2\}_x$ is a  \emph{probability distribution}. We will be using the fact below extensively.
 \begin{fact}
 \label{fact:bellsamplingU}
     Given access to a unitary $U$, one can sample from the distribution  $\{|\widehat{U}_x|^2\}_x$.
 \end{fact}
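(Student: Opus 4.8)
The plan is to implement the standard "Bell sampling" (also called "Pauli sampling" or the "Choi–Jamiołkowski" trick) for a unitary $U$ acting on $n$ qubits. First I would prepare $n$ EPR pairs, i.e.\ the maximally entangled state $|\Phi\rangle = 2^{-n/2}\sum_{j\in\{0,1\}^n}|j\rangle_A|j\rangle_B$ on a $2n$-qubit system $AB$. I would then apply $U\otimes \Id_{2^n}$ to $|\Phi\rangle$, obtaining the (normalized) Choi state $|\psi_U\rangle = (U\otimes \Id)|\Phi\rangle$. Finally I would measure both registers in the Bell basis $\{(\sigma_x\otimes \Id)|\Phi\rangle\}_{x\in\{0,1\}^{2n}}$ — concretely, apply the inverse of the circuit that creates the EPR pairs (a layer of CNOTs followed by Hadamards) and measure in the computational basis, reading off the outcome string as an element $x\in\{0,1\}^{2n}$.

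The key computation is to verify that the outcome distribution is exactly $\{|\widehat U_x|^2\}_x$. Using the Pauli decomposition $U=\sum_x \widehat U_x \sigma_x$ and the "ricochet"/transpose identity $(M\otimes \Id)|\Phi\rangle = (\Id\otimes M^{\mathsf T})|\Phi\rangle$, one gets $(U\otimes \Id)|\Phi\rangle = \sum_x \widehat U_x (\sigma_x\otimes \Id)|\Phi\rangle$, and since the states $(\sigma_x\otimes\Id)|\Phi\rangle$ are orthonormal (this is precisely the orthonormality of the Pauli basis together with $\langle\Phi|(\sigma_x^\dagger\sigma_y\otimes\Id)|\Phi\rangle = 2^{-n}\Tr[\sigma_x^\dagger\sigma_y] = \delta_{xy}$), the amplitude on the $x$-th Bell vector is $\widehat U_x$ up to a global phase coming from the $i^{a\cdot b}$ convention, and the corresponding outcome probability is $|\widehat U_x|^2$. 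The fact that $\sum_x|\widehat U_x|^2=1$, already noted in the excerpt via Parseval, confirms this is a genuine probability distribution, so the measurement is well-defined and a single query to $U$ yields one sample.

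I expect no real obstacle here — the statement is essentially a repackaging of well-known facts about the Choi isomorphism and Bell measurements. The only point requiring a little care is bookkeeping the phase conventions: the $i^{a\cdot b}$ factor in the definition of $\sigma_x$ and the choice of which Bell vectors one labels by $x$ must be matched up so that the measured string indexes the same $x$ as in $U=\sum_x\widehat U_x\sigma_x$; but since only $|\widehat U_x|^2$ matters for the sampled distribution, any consistent labelling works. One could alternatively phrase the whole argument channel-theoretically: the Choi state of the unitary channel $\rho\mapsto U\rho U^\dagger$ has diagonal entries in the Bell basis equal to $|\widehat U_x|^2$, and measuring it in that basis realizes the sample. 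Either way, the proof is a couple of lines once the ricochet identity and Pauli orthonormality are invoked.
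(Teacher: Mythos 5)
Your proposal is correct and follows essentially the same route as the paper: prepare $n$ EPR pairs, apply $U\otimes \Id_{2^n}$ to obtain the Choi state $\sum_x \widehat{U}_x(\sigma_x\otimes\Id)\ket{\mathrm{EPR}_n}$, and measure in the Bell basis, whose elements are exactly the orthonormal states $(\sigma_x\otimes\Id)\ket{\mathrm{EPR}_n}$. Your additional remarks on verifying orthonormality via $2^{-n}\Tr[\sigma_x^\dagger\sigma_y]=\delta_{xy}$ and on phase bookkeeping are fine but not needed beyond what the paper already states.
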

 \begin{proof}
     The proof simply follows by applying $U\otimes \Id_{2^n}$ to $n$ EPR pairs (i.e., preparing the Choi-Jamiolkowski state of $U$) and measuring in the Bell basis, because $$
 U\otimes\Id_{2^n}\ket{\mathrm{EPR}_n}=\sum_{x\in\01^{2n}}\widehat{U}_x \mathop{\bigotimes}_{i\in [n]}(\sigma_{x_i}\otimes \Id_2\ket{\mathrm{EPR}}),
 $$
and the Bell states can be written as $\sigma_x\otimes\Id_{2}\ket{\mathrm{EPR}}$ for $x\in\{0,1,2,3\}$.
 \end{proof}

 We will also use that given a Hamiltonian $H$, the Taylor expansion of the exponential allows us to approximate the time evolution operator as 

\begin{equation}\label{eq:TaylorOrder1}
    U(t)=e^{-itH}=\Id_{2^n}-itH+ct^2 R_1(t)\norm{H}_{\mathrm{op}}^2
\end{equation}
for $t\leq 1/2,$ where the first order remainder $R_1(t)$ is bounded $\norm{R_1(t)}_{\mathrm{op}}\leq 1$ and $c>1$ is a constant. 

\begin{fact}[{\cite{canonne2020short}}]
    Let  $p:\mathcal{X}\to [0,1]$ be a  distribution for a finite set $\mathcal{X}$.  Then, with  prob.~$\geq 1-\delta$, $$O\left(\frac{\log(1/\delta)}{\eps^2}\right)$$
 samples from $p$, the empirical distribution $\tilde p$ satisfies $\max_{x\in\mathcal{X}}|p(x)-\tilde p(x)|\leq~\eps$
\label{thm:canoneproof}
\end{fact}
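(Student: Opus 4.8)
The plan is to analyze the obvious empirical estimator and show that its $\ell_\infty$-error concentrates, the only real subtlety being that both the typical error and its fluctuations must come out \emph{independent of $|\mathcal X|$}. I would obtain this from a variance computation (controlling the mean error) together with a bounded-differences inequality (controlling the tail), thereby avoiding any union bound over $\mathcal X$.

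Concretely, take i.i.d.\ samples $X_1,\dots,X_m\sim p$ and set $\tilde p(x)=\frac1m\sum_{i=1}^m\mathbf 1[X_i=x]$, so $m\tilde p(x)\sim\mathrm{Bin}(m,p(x))$ and $\E[(\tilde p(x)-p(x))^2]=p(x)(1-p(x))/m$. First I would bound the expected error: since the supremum of a family of nonnegative numbers is at most their sum,
\[
\E\Big[\sup_x(\tilde p(x)-p(x))^2\Big]\ \le\ \E\Big[\textstyle\sum_x(\tilde p(x)-p(x))^2\Big]\ =\ \sum_x\frac{p(x)(1-p(x))}{m}\ \le\ \frac1m ,
\]
so by Jensen $\E\big[\sup_x|\tilde p(x)-p(x)|\big]\le 1/\sqrt m$, with no $|\mathcal X|$-dependence. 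Next I would note that $(X_1,\dots,X_m)\mapsto\sup_x|\tilde p(x)-p(x)|$ has the bounded-differences property with constant $1/m$: altering one sample shifts exactly two coordinates of $\tilde p$ by $1/m$ and leaves the rest fixed, so the supremum changes by at most $1/m$. McDiarmid's inequality then gives, for every $t>0$,
\[
\Pr\Big[\sup_x|\tilde p(x)-p(x)|\ \ge\ \E\big[\sup_x|\tilde p(x)-p(x)|\big]+t\Big]\ \le\ \exp(-2mt^2).
\]
To finish I would take $m=C\log(1/\delta)/\eps^2$ with $C$ a large enough absolute constant so that $1/\sqrt m\le\eps/2$ and $\exp(-2m(\eps/2)^2)\le\delta$; combining the two displays with $t=\eps/2$ yields $\Pr[\sup_x|\tilde p(x)-p(x)|\ge\eps]\le\delta$, which is the claim (the requirement $1/\sqrt m\le \eps/2$ is the harmless $1/\eps^2$ ``burn-in'' already present in the stated bound).

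The step I expect to be the crux is exactly this dimension-freeness: a direct Chernoff-plus-union-bound over the at most $1/\eps$ coordinates with $p(x)\ge\eps$ would leak an extra $\log(1/\eps)$ factor, so I deliberately route around union bounds using the $\ell_2$-to-$\ell_\infty$ slack in the variance bound together with McDiarmid concentration about the (already small) mean. A more hands-on alternative would be to bucket the heavy coordinates dyadically according to the value of $p(x)$ and sum a multiplicative Chernoff bound per bucket, handling light coordinates via the one-sided estimate $|\tilde p(x)-p(x)|\le\max\{\tilde p(x),\eps\}$; this also works but I would present the McDiarmid argument, which is shorter and cleaner.
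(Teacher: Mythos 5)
The paper does not prove this fact; it is imported verbatim as Theorem~9 of Canonne's note on learning discrete distributions and used as a black box, so there is no in-paper proof to compare your argument against. Your derivation is nonetheless correct and self-contained, and it achieves the crucial $|\mathcal X|$-independence. Each step checks out: the variance-sum bound $\E[\sup_x(\tilde p(x)-p(x))^2]\le\sum_x p(x)(1-p(x))/m\le 1/m$ is valid and gives $\E[\sup_x|\tilde p(x)-p(x)|]\le 1/\sqrt m$ via Jensen; the bounded-differences constant $1/m$ is right (resampling one point shifts exactly two coordinates of $\tilde p$ by $\pm 1/m$, and $q\mapsto\sup_x|q(x)-p(x)|$ is $1$-Lipschitz in $\ell_\infty$ by the reverse triangle inequality); and McDiarmid's one-sided tail around the already-small mean closes the argument with $m=O(\log(1/\delta)/\eps^2)$, under the usual convention $\delta\le 1/2$ which absorbs the $m\gtrsim 1/\eps^2$ burn-in you correctly flag. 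For comparison, the cited source obtains the same dimension-free rate by reducing to the Dvoretzky--Kiefer--Wolfowitz inequality: fixing a linear order on $\mathcal X$ gives $|p(x)-\tilde p(x)|\le 2\sup_t|F(t)-\tilde F(t)|$, and DKW supplies the sub-Gaussian tail for the empirical CDF. Your variance-plus-McDiarmid route is a genuine and arguably more elementary alternative, trading a named empirical-process inequality for two textbook concentration tools; at a deeper level the two are cousins, since DKW itself admits a bounded-differences proof, but as presented they are distinct.
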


\begin{theorem}[Clifford shadows \cite{huang2020predicting}]\label{theo:CliffShadows}
    Let $\rho$ be an $n$-qubit state and let $\{O_i\}_{i\in [M]}$ be $n$-qubit traceless observables. Assume that $\sup_i\Tr[O_i^2]=O(1)$. Then, \cref{algo:CliffShad} obtains estimates  $\widetilde O_{i,\rho}$ such that, with probability $1-\delta,$ satisfy $$|\Tr[O_i\rho]-\widetilde O_{i,\rho}]|\leq \eps$$ for every $i\in [M]$. The algorithm uses $O
    \left(\frac{\log(M/\delta)}{\eps^2}\right)$ copies of $\rho$.
\end{theorem}

\begin{algorithm}
\textbf{Input:} Copies of a quantum state $\rho$, target set of observables $\{O_i\}_{i\in [M]}$, error parameter $\eps\in (0,1)$, and failure parameter $\delta\in (0,1)$
\begin{algorithmic}[1]
    \State Set $T= O(\log(M/\delta)/\eps^2)$ and $J=O(\log(M/\delta))$
    \For{$j\in [J]$}
        \For{$k\in [T/J]$}
            \State Apply a uniformly random Clifford gate $C$ to a copy of $\rho$
            \State Measure in the computational basis. Let $\ket{b_{j,k}}$ be the outcome
            \For{$i\in [M]$}
                \State Let $ \widetilde{O}_{i,j,k}=(2^n+1)\bra{b_{j,k}}C^{-1}O_iC\ket{b_{j,k}}$
            \EndFor
        \EndFor
        \For{$i\in [M]$}
            \State Let $\widetilde O_{i,j}=\mathrm{Mean}((\widetilde O_{i,j,k})_k)$
        \EndFor
    \EndFor
    
    \For{$i\in [M]$}
            \State Set $\widetilde O_i:=\mathrm{Median}(( { O}_{i,j})_j)$ 
    \EndFor
\end{algorithmic}
\textbf{Output}: $( \widetilde O_{i})_{i\in\mathcal [M]}$
\caption{Clifford shadows}\label{algo:CliffShad}
\end{algorithm}

\subsection{Concentration inequalities}

We state a few concentration inequalities that we use often. 

\begin{lemma}[Hoeffding bound]\label{lem:hoeffding}
Let~$X_1,\dots,X_m$ be independent-random variables that satisfy $-a_i\leq |X_i|\leq a_i$ for some $a_i>0$.
Then, for any $\tau > 0$, we have
$$
\Pr\Big[\Big|\sum_{i\in [m]} X_i-\sum_{i\in [m]}\mathbb E[X_i]\Big| > \tau\big]
\leq
2\exp\left(-\frac{\tau^2}{2(a_1^2 + \cdots + a_m^2)}\right).
$$
\end{lemma}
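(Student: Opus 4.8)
The plan is to run the standard Chernoff/Cram\'er exponential-moment argument. First note that the hypothesis $-a_i\le |X_i|\le a_i$ is just $X_i\in[-a_i,a_i]$, and that it suffices to prove the one-sided bound
$$
\Pr\!\left[\textstyle\sum_{i\in[m]} X_i-\sum_{i\in[m]}\E[X_i] > \tau\right]\le \exp\!\left(-\frac{\tau^2}{2(a_1^2+\cdots+a_m^2)}\right),
$$
since applying it to the variables $-X_i$ (which satisfy the same hypothesis) and taking a union bound over the two tails produces the factor $2$ in the statement. Write $S=\sum_i X_i$ and $Y_i=X_i-\E[X_i]$, so each $Y_i$ is mean-zero and lies in an interval of length at most $2a_i$. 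For any $s>0$, Markov's inequality applied to the nonnegative variable $e^{s(S-\E[S])}$ gives $\Pr[S-\E[S]>\tau]\le e^{-s\tau}\,\E[e^{s(S-\E[S])}]$, and by independence this factorizes as $e^{-s\tau}\prod_{i\in[m]}\E[e^{sY_i}]$.

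The one nontrivial ingredient is Hoeffding's lemma: if $Y$ is mean-zero and supported in an interval of length $L$, then $\E[e^{sY}]\le e^{s^2L^2/8}$. I would prove this by using convexity of $t\mapsto e^{st}$ to dominate $e^{sy}$ on the interval by the chord through its endpoints, which reduces the bound to the case of a two-point (Bernoulli-type) distribution; writing the logarithm of that two-point moment generating function as $\varphi(s)$, one checks $\varphi(0)=\varphi'(0)=0$ and $\varphi''(s)\le L^2/4$ (it is the variance of a Bernoulli law), so Taylor's theorem gives $\varphi(s)\le s^2L^2/8$. Applying this with $L=2a_i$ yields $\E[e^{sY_i}]\le e^{s^2 a_i^2/2}$.

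Combining the two steps, $\Pr[S-\E[S]>\tau]\le \exp\!\left(-s\tau+\frac{s^2}{2}\sum_{i} a_i^2\right)$ for every $s>0$; minimizing this quadratic in $s$ at $s=\tau/\sum_i a_i^2>0$ gives exactly $\exp(-\tau^2/(2\sum_i a_i^2))$, and the union bound over the two tails finishes the proof. I do not expect a genuine obstacle here: the result is classical and the only piece with any content is Hoeffding's lemma, which one may either cite or reproduce via the short convexity argument above; the rest is the routine Chernoff recipe.
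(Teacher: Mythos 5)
Your proof is correct: it is the standard Chernoff/Hoeffding's-lemma argument, with the constants worked out properly (interval length $2a_i$ gives $\E[e^{sY_i}]\le e^{s^2a_i^2/2}$, and optimizing at $s=\tau/\sum_i a_i^2$ yields the stated bound). The paper states this lemma as a classical fact without proof, so there is nothing to compare against; your writeup is exactly the textbook derivation one would supply.
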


\begin{lemma}[Bernstein inequality]\label{lem:Bernstein} Let $X_1,\dots,X_m$ be independent-random variables with $|X_i|\leq M$ for some $M>0$. Then, 
$$
\Pr\Big[\Big|\sum_{i\in [m]} X_i-\sum_{i\in [m]}\mathbb E[X_i]\Big| > \tau\big]
\leq
2\exp\left(-\frac{\tau^2/2}{\sum_{i\in [m]}\mathrm{Var}[X_i]+\tau M/3}\right).
$$  
\end{lemma}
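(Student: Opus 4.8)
This is the classical Bernstein inequality, and the plan is to prove it by the Cramér–Chernoff (exponential-moment) method in three stages: reduce to a one-sided tail of the centered sum, bound the moment generating function of each centered summand by its variance using boundedness, and then take the product and optimize the free exponential parameter.

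First I would set $Y_i=X_i-\mathbb{E}[X_i]$, so that $\mathbb{E}[Y_i]=0$ and $\mathrm{Var}[Y_i]=\mathrm{Var}[X_i]$, write $\sigma^2=\sum_{i\in[m]}\mathrm{Var}[X_i]$, and let $c$ be a constant with $|Y_i|\le c$ (one may take $c=M$, reading the hypothesis as a bound on the centered variables; taking $c=2M$ instead only changes a universal constant in the sub-exponential term). If $\sigma^2=0$ the $X_i$ are almost surely constant and there is nothing to prove, so assume $\sigma^2>0$. For any $\lambda>0$, Markov's inequality applied to $e^{\lambda\sum_{i\in[m]} Y_i}$ together with independence gives
\[
\Pr\Big[\textstyle\sum_{i\in[m]} Y_i>\tau\Big]\le e^{-\lambda\tau}\prod_{i\in[m]}\mathbb{E}\big[e^{\lambda Y_i}\big].
\]

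The main step is the per-term MGF bound. Expanding $\mathbb{E}[e^{\lambda Y_i}]=1+\sum_{k\ge2}\lambda^k\mathbb{E}[Y_i^k]/k!$ (the linear term vanishes since $\mathbb{E}[Y_i]=0$), using $|\mathbb{E}[Y_i^k]|\le c^{k-2}\mathbb{E}[Y_i^2]=c^{k-2}\mathrm{Var}[X_i]$ for $k\ge2$, and the elementary bound $k!\ge 2\cdot 3^{k-2}$, one obtains
\[
\mathbb{E}\big[e^{\lambda Y_i}\big]\le 1+\frac{\lambda^2\mathrm{Var}[X_i]}{2}\sum_{j\ge0}\Big(\frac{\lambda c}{3}\Big)^{j}=1+\frac{\lambda^2\mathrm{Var}[X_i]}{2(1-\lambda c/3)}\le\exp\!\Big(\frac{\lambda^2\mathrm{Var}[X_i]}{2(1-\lambda c/3)}\Big),
\]
valid for $0<\lambda<3/c$, where the last step is $1+x\le e^x$. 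Multiplying over $i$ and substituting back,
\[
\Pr\Big[\textstyle\sum_{i\in[m]} Y_i>\tau\Big]\le\exp\!\Big(-\lambda\tau+\frac{\lambda^2\sigma^2}{2(1-\lambda c/3)}\Big).
\]

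Finally I would choose $\lambda=\tau/(\sigma^2+c\tau/3)$, which indeed lies in $(0,3/c)$ since $\sigma^2>0$; substituting and simplifying (using $1-\lambda c/3=\sigma^2/(\sigma^2+c\tau/3)$) collapses the exponent to $-\frac{\tau^2/2}{\sigma^2+c\tau/3}$, giving the one-sided bound $\Pr[\sum_{i\in[m]} Y_i>\tau]\le\exp\big(-\frac{\tau^2/2}{\sigma^2+c\tau/3}\big)$. Running the same argument with each $X_i$ replaced by $-X_i$ controls the lower tail, and a union bound over the two events produces the factor $2$ and the claimed inequality. There is no genuine obstacle here beyond routine bookkeeping; the only point to watch is the boundedness constant in the sub-exponential term (the $M$ versus $2M$ issue noted above), which is immaterial for every downstream use in this paper and can be sidestepped by stating the hypothesis directly as a bound on the centered variables.
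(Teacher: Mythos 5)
Your proof is correct: the paper states this Bernstein inequality as a standard fact without proof, and your Cramér--Chernoff argument (centering, the per-term MGF bound via $k!\ge 2\cdot 3^{k-2}$, and the choice $\lambda=\tau/(\sigma^2+M\tau/3)$) is exactly the textbook derivation one would cite. You are also right to flag the $M$ versus $2M$ issue — as literally stated, $|X_i|\le M$ only gives $|X_i-\mathbb{E}[X_i]|\le 2M$, so the clean constant $\tau M/3$ requires reading the hypothesis as a bound on the centered variables — and right that this is immaterial for every application in the paper.
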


\subsection{Symplectic Fourier analysis}
Consider $x,y \in \FF_2^{2n}$ with $x=(x_1,x_2)$ and $y=(y_1,y_2)$. Define the symplectic inner product~as 
\begin{equation}
    [x,y] = \la x_1, y_2 \ra + \la x_2, y_1 \ra \mod 2.
    \label{eq:symplectic_inner_product}
\end{equation}
Under this notation, observe that
\begin{equation}
    \sigma_{x}\sigma_y=(-1)^{[x,y]}\sigma_y\sigma_x.
\end{equation}
The symplectic Fourier decomposition  for functions $f: \FF_2^{2n} \rightarrow \R$ is defined as
\begin{align*}
    f(x) = \sum_{a \in \FF_2^{2n}} (-1)^{[a,x]} \widebreve{f}(a),
\end{align*}    
where the symplectic Fourier coefficients are defined as
$$
    \widebreve{f}(a) = \frac{1}{4^n} \sum_{x \in \FF_2^{2n}} (-1)^{[a,x]} f(x).
$$

\subsection{Subspaces, stabilizer groups and stabilizer states}
Let $V \subseteq \{0,1\}^{2n}$ be a set of strings corresponding to Paulis. We will denote $\commutant(V)$ as the commutant of the set $V$ (also called centralizer of $V$ when $V$ is a group), defined as $\commutant(V) = \{c : [c,h] =0 \ \forall\, h \in V\}$. We define $\commutantperp(V)$ to be the quotient space $\FF^{2n} / \commutant(V)$. 

\begin{fact}\label{fact:sum_over_symplectic_subspace}
Suppose $V$ is a subspace of $\FF^{2n}$. Then,
$$
\frac{1}{|V|} \sum_{x \in V} (-1)^{[a,x]} = [a \in \commutant(V)],
$$
where $[\cdot]$ is the indicator of the event in the parenthesis.
\end{fact}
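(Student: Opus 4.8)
This is a standard character-sum identity, so the proof should be short.

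The plan is to view the symplectic form $[a,\cdot]$ as a character of the additive group $V$. Fix $a \in \FF_2^{2n}$ and consider the map $\chi_a: V \to \{\pm 1\}$ given by $\chi_a(x) = (-1)^{[a,x]}$. Since $[a,x+y] = [a,x] + [a,y] \bmod 2$ (bilinearity of the symplectic form), $\chi_a$ is a group homomorphism from $(V,+)$ to $(\{\pm 1\},\cdot)$, i.e., a character of $V$.

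First I would split into two cases according to whether $\chi_a$ is the trivial character. If $[a,x] = 0$ for all $x \in V$ — equivalently $a \in \commutant(V)$ by definition — then every term in the sum equals $1$, so $\frac{1}{|V|}\sum_{x\in V}(-1)^{[a,x]} = 1$, matching the right-hand side $[a \in \commutant(V)] = 1$. Otherwise, $a \notin \commutant(V)$, so there exists $x_0 \in V$ with $[a,x_0] = 1$, i.e., $\chi_a(x_0) = -1$. Then I would use the standard orthogonality trick: re-index the sum by $x \mapsto x + x_0$, which is a bijection of $V$ onto itself since $V$ is a subspace, to get
$$
\sum_{x\in V}(-1)^{[a,x]} = \sum_{x\in V}(-1)^{[a,x+x_0]} = (-1)^{[a,x_0]}\sum_{x\in V}(-1)^{[a,x]} = -\sum_{x\in V}(-1)^{[a,x]},
$$
whence the sum is $0$, which again matches $[a\in\commutant(V)] = 0$. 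Dividing by $|V|$ in both cases gives the claimed identity.

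There is no real obstacle here — the only thing to be careful about is that $V$ being a \emph{subspace} (hence closed under addition and containing $0$) is exactly what makes the translation $x \mapsto x+x_0$ a bijection of $V$, and that the symplectic form is bilinear over $\FF_2$ so that $\chi_a$ is genuinely a homomorphism; both are immediate from the definitions in the preceding subsection.
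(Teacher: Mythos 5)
Your proof is correct: it is the standard character-orthogonality argument (trivial character when $a\in\commutant(V)$, translation by an $x_0$ with $[a,x_0]=1$ otherwise), and the paper in fact states this as a Fact without proof, so your argument supplies exactly the routine justification the authors implicitly rely on. No gaps.
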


We will denote a  stabilizer state corresponding to a stabilizer subgroup $G$ of dimension $k \leq n$~as 
\begin{equation}
    \label{eq:stab_state}
    \rho_G = \frac{1}{2^n} \sum_{g \in G} \sigma_g,
\end{equation}
which will be a pure stabilizer state when dimension of $G$ is $k=n$ i.e., $G$ is an $n$-qubit stabilizer group. When $k < n$, then this will be a mixed state. Another useful state will be the state obtained by action of a Pauli $\sigma_x$ on $\rho_G$ which we will denote by $\rho_{G,x}$ and is given by
\begin{equation}
    \label{eq:stab_state_x}
    \rho_{G,x} = \sigma_x \rho_G \sigma_x = \frac{1}{2^n} \sum_{g \in G} (-1)^{[x,g]} \sigma_g.
\end{equation}
Note that if $x \in \commutant(G)$, then $\rho_{G,x}$ equals $\rho_G$. 

\subsection{Pauli Channels}
An $n$-qubit channel $\cal E$ is called a \emph{Pauli channel} if it can be written as
\begin{equation}
\label{eq:pauli_channel}
\calE(\rho) = \sum_{x \in \{0,1\}^{2n}} p(x) \sigma_x \rho \sigma_x,
\end{equation}
where $p(x)$ are referred to as the \emph{error rates} corresponding to $x \in \{0,1\}^{2n}$. The eigenvalues of a Pauli channel, also called \emph{Pauli fidelities}, are easy to compute as follows
\begin{equation}
\label{eq:pauli_fidelity}
\lambda(y) = \frac{1}{2^{n}} \Tr\left(\sigma_y  \cdot \calE(\sigma_y)\right),
\end{equation}
where we have denoted the Pauli fidelity corresponding to $y \in \{0,1\}^{2n}$ as $\lambda(y)$. Of relevance to us is that the Pauli fidelities and error rates are related via the symplectic Fourier transform as $\widebreve{\lambda}(\alpha) = p(\alpha)$  for all $\alpha \in \{0,1\}^{2n}$. We define $\spec(\mathcal E)=\{x\in\{0,1\}^{2n}:p(x)\neq 0\}.$

We say that a Pauli channel is $s$-sparse if the corresponding set of Pauli coefficients (or error rates) $\{p(x)\}_{x \in \FF_2^{2n}}$ contains at most $s$ non-zero values. We define the \emph{energy} of the top $s$ Pauli coefficients of a given Pauli channel $\calE$ as $\mathsf{Energy}(\calE;s)$, expressed as
\begin{equation}
    \mathsf{Energy}(\calE;s):=\max_{\substack{T\subseteq \FF_2^{2n}:\\|T|=s}}\Big\{\sum_{x \in T} p(x) \Big\}.
    \label{eq:energy}
\end{equation}
We denote the distance between two Pauli channels $\calE_1$ (with error rates $\{p_1(x)\}$) and $\calE_2$ (with error rates $\{p_2(x)\}$) as the total variation distance between the error rates, namely
$$
\mathrm{dist}(\calE_1,\calE_2) =  \frac{1}{2}\norm{p_1 - p_2}_1.
\label{eq:distance_PC}
$$
This distance is equivalent to the diamond distance up a factor 2~\cite[Section A]{magesan2012character}.
We now give a simple formula for the distance of Pauli channels to the set of sparse Pauli~channels.
\begin{claim} \label{lemma:sparse_paulis_struct}
Let $\calE $ be an $n$-qubit Pauli channel. Then, distance $d(\mathcal E,s\text{-sparse})$ of $\calE$ to the set of  $s$-sparse Pauli channels satisfies 
\begin{enumerate}
    \item [$(a)$] $d(\mathcal E,s\text{-sparse})\leq \eps_1\implies \mathsf{Energy}(\mathcal{E};s)\geq 1-2\eps_1$,
    \item [$(b)$] $d(\mathcal E,s\text{-sparse})\geq \eps_2\implies \mathsf{Energy}(\mathcal{E};s)\leq 1-\eps_2.$
\end{enumerate}

\end{claim}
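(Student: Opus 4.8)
The plan is to work directly with the error rates $\{p(x)\}$, using two facts: that $\{p(x)\}$ is a probability distribution (so $\sum_x p(x) = 1$), and that the closest $s$-sparse Pauli channel to $\calE$ is obtained by keeping the $s$ largest error rates and redistributing the remaining mass. Concretely, if $T^\star$ is a set of size $s$ achieving $\mathsf{Energy}(\calE;s) = \sum_{x\in T^\star} p(x)$, then $1 - \mathsf{Energy}(\calE;s) = \sum_{x\notin T^\star} p(x)$ is the "tail mass" outside the top $s$ coefficients. So both parts reduce to relating $d(\calE, s\text{-sparse})$ to this tail mass.

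For part $(b)$, suppose $d(\calE, s\text{-sparse}) \geq \eps_2$. I would construct an explicit $s$-sparse channel $\calE'$: take the top-$s$ set $T^\star$, set $p'(x) = p(x)$ for $x \in T^\star$ except on one distinguished element $x_0 \in T^\star$ where we dump all the leftover mass, i.e. $p'(x_0) = p(x_0) + \sum_{x\notin T^\star} p(x)$, and $p'(x) = 0$ otherwise. Then $\calE'$ is a valid $s$-sparse Pauli channel and $\tfrac12\|p - p'\|_1 = \sum_{x\notin T^\star} p(x) = 1 - \mathsf{Energy}(\calE;s)$ (the $\ell_1$ distance counts the tail twice — once for the coordinates outside $T^\star$ going to zero, once for the compensating increase at $x_0$ — giving a factor that the $\tfrac12$ cancels). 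Hence $\eps_2 \leq d(\calE,s\text{-sparse}) \leq \mathrm{dist}(\calE,\calE') = 1-\mathsf{Energy}(\calE;s)$, which rearranges to $\mathsf{Energy}(\calE;s)\leq 1-\eps_2$.

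For part $(a)$, suppose $d(\calE, s\text{-sparse}) \leq \eps_1$, witnessed by some $s$-sparse $\calE'$ with error rates $\{p'(x)\}$ supported on a set $S$ of size at most $s$. Then $\mathsf{Energy}(\calE;s) \geq \sum_{x\in S} p(x)$ since $S$ is one admissible choice in the max. Now $\sum_{x\in S} p(x) = \sum_{x\in S} p'(x) - \sum_{x\in S}(p'(x)-p(x)) \geq 1 - \sum_{x} |p'(x)-p(x)| = 1 - 2\,\mathrm{dist}(\calE,\calE') \geq 1 - 2\eps_1$, using $\sum_{x\in S} p'(x) = 1$ and bounding the correction by the full $\ell_1$ distance. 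This gives $\mathsf{Energy}(\calE;s)\geq 1-2\eps_1$.

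The argument is essentially bookkeeping with the $\ell_1$ norm, so I do not expect a genuine obstacle; the only point requiring a little care is the factor of $2$ asymmetry between $(a)$ and $(b)$ — it arises because in $(b)$ we get to choose the optimal comparison channel (mass is moved, not lost, so the $\tfrac12$ exactly cancels the double-counting), whereas in $(a)$ the comparison channel is adversarial and we must crudely bound the overlap deficit by the entire $\ell_1$ distance, which is $2\,\mathrm{dist}$. One should also note the correspondence (stated earlier) that $\mathrm{dist}(\calE_1,\calE_2) = \tfrac12\|p_1-p_2\|_1$ agrees with the diamond distance up to a factor $2$, so the claim transfers to the diamond-norm formulation used in the sparse-channel tester.
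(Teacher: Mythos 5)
Your proof is correct and takes essentially the same route as the paper's: for part $(a)$ you bound the tail mass by the $\ell_1$ distance to a witness (the paper does this too, just phrased as $\sum_{x\notin S}p(x)\le\|p-q\|_1\le 2\eps_1$); for part $(b)$ you construct an explicit $s$-sparse comparison channel and compute its distance. The one cosmetic difference is in $(b)$: you dump the tail mass onto a single distinguished coordinate $x_0\in T^\star$, whereas the paper rescales the top-$s$ weights by $1/\mathsf{Energy}(\calE;s)$; both constructions give the identical total-variation distance $1-\mathsf{Energy}(\calE;s)$, so they are interchangeable here.
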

\begin{proof} We begin with $(a)$. If $d(\mathcal E,s\text{-sparse})\leq \eps_1,$ then there is a $s$-sparse probability distribution $\{q(x)\}_x$ such that $$\sum_{x\in\{0,1\}^{2n}}|p(x)-q(x)| \leq 2\eps_1.$$
In particular, if $S$ is the support of $q(x)$, then 
$\sum_{x\notin S}|p(x)| \leq 2\eps_1$,  so $\sum_{x\in S}|p(x)| \geq 1-2\eps_1$, as desired.  Now, we prove $(b).$ Assume $d(\mathcal E,s\text{-sparse})\geq \eps_2.$ Let $S\subseteq \{0,1\}^{2n}$ be the set of the $s$ largest Pauli coefficients of $\mathcal{ E}$. Then, $\mathcal{E}'(\cdot)=\sum_{x\in S}\frac{p(x)}{\mathsf{Energy}(\mathcal E,s)}\sigma_x\cdot\sigma_x$ is a $s$-sparse Pauli channel. As such, it satisfies that 
\begin{align*}
    2\eps_2&\leq \sum_{x\in S}\left|p(x)-\frac{p(x)}{\mathsf{Energy}(\mathcal E,s)}\right|+\sum_{x\notin S}p(x)\\
    &=\frac{1-\mathsf{Energy}(\mathcal E,s)}{\mathsf{Energy}(\mathcal E,s)}\sum_{x\in S}p(x)+\left(1-\sum_{x\in S}p(x)\right)\\
    &=2(1-\mathsf{Energy}(\mathcal E,s)),
\end{align*}
where in the first line we have used that $\mathsf{Energy}(\mathcal E,s)\leq 1$ and that $\sum_{x\in\{0,1\}^{2n}}p(x)=1.$ Now, $(b)$ follows by dividing by $2$ in both sides.
\end{proof}

\section{Technical results}
In this section, we will first prove our main structural theorems for Hamiltonians and provide subroutines which will be used later for testing and learning these structured~Hamiltonians. 

\subsection{Structural lemma for local Hamiltonians}
First, we prove a lemma regarding the discrepancy on the weights of non-local terms of the short-time evolution operator for close-to-local and far-from-local Hamiltonians. 
\begin{lemma}\label{lem:testingdicotomy}
    Let $0\leq \eps_1<\eps_2$. Let $\alpha=(\eps_2-\eps_1)/(3c)$ and $H$ be an $n$-qubit Hamiltonian with $\norm{H}_{\mathrm{op}}\leq 1$, where $c$ is the constant appearing in~\cref{eq:TaylorOrder1}. If $H$ is $\eps_1$-close $k$-local, then 
    $$
    \norm{U(\alpha)_{>k}}_2\leq (\eps_2-\eps_1)\frac{2\eps_1+\eps_2}{9c},
    $$ and if $H$ is $\eps_2$-far from being $k$-local, then  $$\norm{U(\alpha)_{>k}}_2\geq (\eps_2-\eps_1)\frac{\eps_1+2\eps_2}{9c}.$$
\end{lemma}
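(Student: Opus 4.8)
The plan is to expand $U(\alpha) = e^{-i\alpha H}$ using the first-order Taylor estimate from \cref{eq:TaylorOrder1}, namely $U(\alpha) = \Id_{2^n} - i\alpha H + c\alpha^2 R_1(\alpha)$ with $\norm{R_1(\alpha)}_\infty \le 1$, and then project onto the non-local part. Writing $U(\alpha)_{>k}$ for the sum of Pauli coefficients of $U(\alpha)$ on strings of weight $>k$, note that the identity contributes nothing to $U(\alpha)_{>k}$ (it is $0$-local), so $U(\alpha)_{>k} = -i\alpha H_{>k} + c\alpha^2 (R_1(\alpha))_{>k}$. Taking normalized Frobenius norms and using the triangle inequality together with $\norm{(R_1(\alpha))_{>k}}_2 \le \norm{R_1(\alpha)}_2 \le \norm{R_1(\alpha)}_\infty \le 1$ gives the two-sided bound
\begin{equation*}
    \alpha \norm{H_{>k}}_2 - c\alpha^2 \;\le\; \norm{U(\alpha)_{>k}}_2 \;\le\; \alpha \norm{H_{>k}}_2 + c\alpha^2.
\end{equation*}

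Next I would substitute the definition $\alpha = (\eps_2-\eps_1)/(3c)$, so that $c\alpha^2 = (\eps_2-\eps_1)^2/(9c)$ and $\alpha \norm{H_{>k}}_2 = (\eps_2-\eps_1)\norm{H_{>k}}_2/(3c)$. Recall from the preliminaries that $\norm{H_{>k}}_2$ is exactly the distance of $H$ to the set of $k$-local Hamiltonians. In the close case, $\norm{H_{>k}}_2 \le \eps_1$, so using the upper bound above,
\begin{equation*}
    \norm{U(\alpha)_{>k}}_2 \le \frac{(\eps_2-\eps_1)\eps_1}{3c} + \frac{(\eps_2-\eps_1)^2}{9c} = \frac{(\eps_2-\eps_1)}{9c}\big(3\eps_1 + (\eps_2-\eps_1)\big) = (\eps_2-\eps_1)\frac{2\eps_1+\eps_2}{9c}.
\end{equation*}
In the far case, $\norm{H_{>k}}_2 > \eps_2$ (strictly, but the stated inequality is non-strict so this is fine), and the lower bound gives
\begin{equation*}
    \norm{U(\alpha)_{>k}}_2 \ge \frac{(\eps_2-\eps_1)\eps_2}{3c} - \frac{(\eps_2-\eps_1)^2}{9c} = \frac{(\eps_2-\eps_1)}{9c}\big(3\eps_2 - (\eps_2-\eps_1)\big) = (\eps_2-\eps_1)\frac{\eps_1+2\eps_2}{9c}.
\end{equation*}

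The one point that needs a small check before the Taylor bound can be invoked is that $\alpha \le 1/2$, which is required for \cref{eq:TaylorOrder1} to hold; since $c > 1$ and $\eps_2 - \eps_1 < \eps_2 \le 1$ (as distances here are at most $1$, the Hamiltonian being traceless with $\norm{H}_\infty \le 1$), we have $\alpha = (\eps_2-\eps_1)/(3c) < 1/3 < 1/2$. I do not anticipate a genuine obstacle here: the argument is a direct truncation-of-Taylor-expansion estimate, and the only mild subtlety is bookkeeping the constant so that both the close and far thresholds come out symmetric around the "midpoint" value $(\eps_1+\eps_2)$ times $(\eps_2-\eps_1)/(3c)$, which is precisely what the choice $\alpha = (\eps_2-\eps_1)/(3c)$ arranges — the gap between the two thresholds is $(\eps_2-\eps_1)^2/(9c)$, leaving room for the sampling-based estimator in the testing algorithm to distinguish the two cases.
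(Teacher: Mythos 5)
Your proof is correct and follows essentially the same route as the paper's: first-order Taylor expansion of $U(\alpha)$, projection onto the weight-$>k$ Pauli terms (killing the identity), triangle inequality in both directions with $\norm{(R_1)_{>k}}_2\le\norm{R_1}_\infty\le 1$, and the same algebra after substituting $\alpha=(\eps_2-\eps_1)/(3c)$. The only addition is your explicit check that $\alpha\le 1/2$, which the paper leaves implicit.
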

\begin{proof}
Recall that 
$  U(\alpha)=\Id_{2^n}-i\alpha H+c\alpha^2 R(\alpha)$ by Eq~\eqref{eq:TaylorOrder1} where $\|R\|_{\mathrm{op}}\leq 1$. For simplicity, we set $U=U(\alpha)$ and $R=R_1(\alpha)$. First, assume that $H$ is $\eps_1$-close $k$-local, then by definition we have that $\|H_{>k }\|_2\leq \varepsilon_1$. Then  
$$
\norm{U_{>k}}_2\leq \alpha\norm{H_{>k}}_2+ c\alpha^2\norm{R_{>k}}_2\leq \frac{\eps_2-\eps_1}{3c}\eps_1+c\left(\frac{\eps_2-\eps_1}{3c}\right)^2 =(\eps_2-\eps_1)\frac{2\eps_1+\eps_2}{9c},
$$
    where in the first inequality we have used the triangle inequality, and in the second that $H$ is $\eps_1$-close to $k$-local and that $\norm{R_{>k}}_2\leq \norm{R}_2\leq  \norm{R}_{\mathrm{op}}\leq 1$. Now, assume that $H$ is $\eps_2$-far from being $k$-local (i.e., $\|H_{>k}\|_2\geq \varepsilon_2$). Then 
    $$
    \norm{U_{>k}}_2\geq \alpha\norm{H_{>k}}_2-c\alpha^2\norm{R_{>k}}_2\geq \frac{\eps_2-\eps_1}{3c}\eps_2-c\left(\frac{\eps_2-\eps_1}{3c}\right)^2 \geq (\eps_2-\eps_1)\frac{\eps_1+2\eps_2}{9c},
    $$
    where in first inequality we have used  triangle inequality on $i\alpha H=ct^2 R(\alpha)-U(\alpha)$ to conclude $\alpha\norm{H_{>k}}_2\leq \norm{U_{>k}}_2 + c\alpha^2\norm{R_{>k}}_2$, and in the second the fact that $H$ is $\eps_2$-far from $k$-local.
\end{proof}

\subsection{Structural lemma for sparse Hamiltonians}
Similar to local Hamiltonians, we show a discrepancy in the  sum of the top Pauli coefficients of the short-time evolution operator for close-to-sparse and far-from-sparse Hamiltonians. To formally state this result we need to introduce the concept of \emph{top energy}. Let $U(t)$ the time evolution operator at time $t$ and let $\{\widehat{U}(t)\}_x$ be its Pauli coefficients. We assign labels from $\{x_0,\dots,x_{4^n-1}\}$ to $x\in \01^{2n}$ in a way that $\widehat{U}_{x_0}=\widehat{U}_{0^n}$ and $|\widehat{U}_{x_1}|\geq |\widehat{U}_{x_2}|\geq \cdots \geq |\widehat{U}_{x_{4^n-1}}|$. Now, we define the top energy at time $t$ as 
$$\topenergy(t;s):=|\widehat{U}_{x_0}(t)|^2+\sum_{i\in [s]}|\widehat{U}_{x_i}(t)|^2,$$
\begin{lemma}\label{lem:sparsitydiscrepancy}
    Let $H$ be a $n$-qubit Hamiltonian with $\norm{H}_{\mathrm{op}}\leq 1$ and $\Tr[H]=0$. Let $t\in (0,1)$. On the one hand, if $H$ is $\eps_1$-close to $s$-sparse, then
    $$\topenergy(t;s)\geq 1-\eps_1^2t^2-O(t^3s).$$
    On the other hand, if $H$ is $\eps_2$-far from $s$-sparse, then
    $$\topenergy(t;s)\leq 1-\eps_2^2t^2+O(t^3s).$$
\end{lemma}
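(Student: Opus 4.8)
The plan is to Taylor-expand $U(t)=e^{-itH}$ and observe that at leading order the non-identity Pauli coefficients of $U(t)$ are just $-it$ times those of $H$; consequently $\topenergy(t;s)$ reads off how much of $\|H\|_2^2$ sits on the $s$ largest Pauli coefficients of $H$, and what is left over is exactly the squared distance of $H$ to the $s$-sparse Hamiltonians.

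First I would compute the Pauli coefficients of $U(t)$. For the identity coefficient, diagonalize $H=\sum_j E_j\kb{v_j}$ and Taylor-expand each phase $e^{-itE_j}$ to second order (valid since $|E_j|\le\|H\|_\infty\le 1$); using $\tfrac1{2^n}\sum_j E_j=\tfrac{\Tr[H]}{2^n}=0$ and $\tfrac1{2^n}\sum_j E_j^2=\|H\|_2^2$, this gives $\widehat U_{x_0}(t)=\tfrac1{2^n}\Tr[U(t)]=1-\tfrac{t^2}{2}\|H\|_2^2+O(t^3)$, hence $|\widehat U_{x_0}(t)|^2=1-t^2\|H\|_2^2+O(t^3)$ (using $t<1$, $\|H\|_2\le 1$). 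For $x\neq x_0$ I would use the first-order expansion \eqref{eq:TaylorOrder1}, $U(t)=\Id_{2^n}-itH+ct^2R_1(t)$ with $\|R_1(t)\|_\infty\le 1$: writing $R_1(t)=\sum_x r_x(t)\sigma_x$ (so $|r_x(t)|\le\|R_1(t)\|_\infty\le 1$) we get $\widehat U_x(t)=-it\lambda_x+ct^2 r_x(t)$ and
\[
|\widehat U_x(t)|^2=t^2\lambda_x^2+c^2t^4|r_x(t)|^2-2ct^3\lambda_x\,\Im r_x(t).
\]
Since $|\lambda_x|\le\|H\|_\infty\le 1$ and $t<1$, this gives $\big|\,|\widehat U_x(t)|^2-t^2\lambda_x^2\,\big|=O(t^3)$ per coefficient, and hence $\big|\sum_{x\in T}|\widehat U_x(t)|^2-t^2\sum_{x\in T}\lambda_x^2\big|=O(t^3 s)$ for any set $T$ of at most $s$ non-identity Paulis.

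Next I would connect $\topenergy(t;s)$ with the distance. Let $\mathcal T^U$ be the $s$ non-identity Paulis with largest $|\widehat U_x(t)|$, so $\topenergy(t;s)=|\widehat U_{x_0}(t)|^2+\sum_{x\in\mathcal T^U}|\widehat U_x(t)|^2$, and let $\mathcal T^H$ be the $s$ Paulis with largest $|\lambda_x|$ (these are non-identity since $\lambda_{x_0}=0$). These two $s$-subsets need not agree, so I would sandwich using the maximality of each: on one side $\sum_{x\in\mathcal T^U}|\widehat U_x(t)|^2\ge\sum_{x\in\mathcal T^H}|\widehat U_x(t)|^2\ge t^2\sum_{x\in\mathcal T^H}\lambda_x^2-O(t^3 s)$, and on the other $\sum_{x\in\mathcal T^U}|\widehat U_x(t)|^2\le t^2\sum_{x\in\mathcal T^U}\lambda_x^2+O(t^3 s)\le t^2\sum_{x\in\mathcal T^H}\lambda_x^2+O(t^3 s)$, where the last inequality holds because $\mathcal T^H$ maximizes $\sum_{x\in T}\lambda_x^2$ over $s$-subsets. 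Combining these with the identity-coefficient estimate and the Preliminaries formula identifying the distance of $H$ to the $s$-sparse Hamiltonians with $\sqrt{\|H\|_2^2-\sum_{x\in\mathcal T^H}\lambda_x^2}$ (recall $\|H\|_2^2-\sum_{x\in\mathcal T^H}\lambda_x^2=\sum_{x\notin\mathcal T^H}\lambda_x^2$), we obtain
\[
\topenergy(t;s)=1-t^2\Big(\|H\|_2^2-\sum_{x\in\mathcal T^H}\lambda_x^2\Big)\pm O(t^3 s)=1-t^2\, d(H,s\text{-sparse})^2\pm O(t^3 s).
\]
The two claims follow by substituting $d(H,s\text{-sparse})\le\eps_1$ and $d(H,s\text{-sparse})\ge\eps_2$, respectively.

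The main obstacle is the second step: we only have access to the top-$s$ energy of $U(t)$, whereas we want to compare with the top-$s$ mass of $H$, and since the maximizing $s$-subsets for the two objects differ one has to apply the maximality of each subset in the correct direction (the two-sided bound above). The only other point needing care is that the leading correction to $|\widehat U_{x_0}(t)|^2$ is $-t^2\|H\|_2^2$, which is second order in $t$, so the identity coefficient must be expanded one order beyond \eqref{eq:TaylorOrder1}; the rest is routine control of $O(t^3)$ remainders.
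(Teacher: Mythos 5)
Your proposal is correct and follows essentially the same route as the paper's proof: second-order Taylor expansion for the identity coefficient, first-order expansion with an $O(t^3)$ per-coefficient error for the rest, and a two-sided comparison between the top-$s$ set of $U(t)$ and the top-$s$ set of $H$ (the paper's $\topenergy_H$) before identifying the leftover mass with the squared distance to $s$-sparse. The only cosmetic differences are that you derive the identity coefficient by diagonalizing $H$ rather than expanding the matrix exponential directly, and you bound the cross terms explicitly instead of via $|a^2-b^2|=|a-b|\,|a+b|$.
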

\begin{proof}
    For this proof we need to consider the $2$nd order Taylor expansion of $U(t)$,
    \begin{equation*}
    U(t)=\Id-itH-t^2H^2/2+O(t^3)R_2,
    \end{equation*}
    where $R_2$ is the remainder of the series of order 2 that satisfies $\norm{R_2}_{\mathrm{op}}\leq 1,$ because $\norm{H}_{\mathrm{op}}\leq 1$. Since $\Tr[H]=0$ (so $\lambda_{0^n}=0$), we have 
    $$
    \widehat{U}_0(t)=1-\frac{t^2}{2}\cdot \sum_{x\in\01^{2n}}\lambda_x^2+O(t^3),
    $$
    so, using that $|a^2-b^2|= |a-b||a+b|$, we have that
    \begin{equation}\label{eq:U0Expression}
        \Big||\widehat{U}_0(t)|^2-\big(1-t^2\sum_{x\in\01^{2n}}\lambda_x^2\big)\Big|=O(t^3).
    \end{equation}
    To control $|U_x(t)|$ for $x\neq 0^{2n}$, we use the first order Taylor expansion  of $  U(t)=\Id_{2^n}-it H+ct^2 R_1(t)$ and get 
    \begin{align}\label{eq:Uxapprox1}
        \big||\widehat{U}_x(t)|-|t \lambda_x|\big|\leq |\widehat U_x(t)-(-it\lambda_x)|\leq \norm{U(t)-(-itH)}_2\leq O(t^2) \norm{R_1}_2\leq O(t^2),
    \end{align}
    where we again used that $\|R_1\|_2\leq 1$. From this it follows that 
    \begin{align}
        \big||\widehat{U}_x(t)|^2-t^2 \lambda_x^2\big|=\Big|\big(|\widehat{U}_x(t)|-|t \lambda_x|\big)\cdot \big(|\widehat{U}_x(t)|+|t \lambda_x|\big)\Big|\nonumber&= O(t^2)(|U_x|+|t \lambda_x|)\nonumber\\
        &= O(t^2)(2|t \lambda_x|+O(t^2))=O(t^3),
    \end{align}
    where  the second and third equality both used \cref{eq:Uxapprox1}; and in the last line used $| \lambda_x|\leq \norm{H}_{\mathrm{op}}\leq 1$. In particular, the above implies that
    \begin{align}
    \label{eq:Uxapprox2}
         |\widehat{U}_x(t)|^2\geq t^2 |\lambda_x|^2-O(t^3)
    \end{align}

    Now we will define a quantity similar to the top energy, but now we will define the top coefficients as the top coefficients of $H$. To be precise, we assign labels to $\{y_0,\dots,y_{4^n-1}\}$ to the elements of $\{0,1\}^{2n}$ in a way such that $y_0=0^{2n}$ and $|\lambda_{y_1}|\geq \dots\geq |\lambda_{y_{4^n-1}}|$.  We now define  
    \begin{equation*}
        \topenergy_H(t;s):=\Big(1-t^2\sum_{x\in\01^{2n}}\lambda_x^2\Big)+\sum_{i\in [s]}(t\lambda_{y_i})^2.
    \end{equation*}
    If the top $s$ Pauli coefficients of $H$ coincided with the ones of $U(t)$ and there was no error in the Taylor expansion, then $\topenergy_H(t;s)(t)=\topenergy(t;s)$. However, this may not be true in general. Nevertheless, we show that both quantities are close to each other. To this end, 
    \begin{align*}
        \topenergy(t;s)&=|\widehat{U}_{x_0}(t)|^2+\sum_{i\in [s]}|\widehat{U}_{x_i}(t)|^2\\
        &\geq |\widehat{U}_{y_0}(t)|^2+\sum_{i\in [s]}|\widehat{U}_{y_i}(t)|^2\\
        &\geq \big(1-t^2\sum_{x\in\01^{2n}}\lambda_x^2\big)+\sum_{i\in [s]}(t\lambda_{y_i})^2-(s+1)O(t^3)\\
        &=\topenergy_H(t;s)-(s+1)O(t^3),
    \end{align*}
    where in the first inequality we used that $x_1,\dots,x_s$ correspond to the $s$ largest coefficients of $U(t)$, so $\sum_{i\in [s]}|\widehat{U}_{x_i}(t)|^2$ is larger than the sum of the squares of any other $s$ coefficients of $U$; in the second inequality  we used \cref{eq:U0Expression,eq:Uxapprox2}. Similarly, one can check that $\topenergy_H(t;s) \geq \topenergy(t;s)-(s+1)O(t^3),$ so
    $$
    |\topenergy_H(t;s) - \topenergy(t;s)|\leq O(st^3).
    $$
    Now, the claimed result follows by noticing that $$\topenergy_H(t;s)=1-t^2\sum_{i>s}|\lambda_{y_i}|^2,$$ and that $\sum_{i>s}|\lambda_{y_i}|^2$ is the square of the $\ell_2$-distance of $H$ to the space of $s$-sparse Hamiltonians, because $\sum_{i\in [s]}\lambda_{y_i}\sigma_{y_i}$ is the $s$-sparse Hamiltonian closest to $H$.
\end{proof}

\subsection{Subroutines for learning without memory}
\label{sec:tecnicalsubroutinesmemory}
Motivated by the difficulty of accessing quantum memory in the NISQ era \cite{preskill2018quantum}, we propose two subroutines that serve the purpose of substituting Pauli sampling (i.e., sampling from $\{|\widehat{U}_x|^2\}_x$ by creating the Choi state corresponding to $U$ as in Fact~\ref{fact:bellsamplingU}) and the \textsf{SWAP} test in our learning~algorithms, by protocols which do not require memory.
\subsubsection{Estimating Pauli distribution}
Our first lemma constructs an algorithm to estimate the Pauli distribution determined by a $n$-qubit unitary in $\ell_\infty$-error with just $O(n)$ queries to the unitary. These queries are performed on a random stabilizer state and the measurement are also random stabilizer measurements, as in other quantum protocols used for testing and learning \cite{flammia2020efficient,yu2021sample,fawzi2023lower,bluhm2024hamiltonianv2}. Here we propose a novel classical post-processing that allows us to emulate Pauli sampling in our context and others. In particular, we propose an algorithm to tolerantly test if a unitary is a $k$-junta, making progress in a question by Chen, Nadimpalli and Yuen \cite[Section 1.3]{chen2023testing}, and show that can be implemented without quantum memory.

\begin{lemma}\label{lem:memorylessPaulisampling}
    Let $U$ be a $n$-qubit unitary, let $\mathcal{S}\subseteq \01^{2n}$. There is an memory-less algorithm that makes $O\big(\log(|\mathcal{S}|/\delta)/\eps^2\big)$ queries to the unitary on stabilizer states and performing Clifford measurements can provides estimates $|\alpha_x|^2$ such that 
    $$
    \big||\widehat{U}_x|^2-|\alpha_x|^2\big|\leq\eps
    $$
    for every $x\in\mathcal{S}$.
\end{lemma}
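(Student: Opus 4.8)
The plan is to estimate each $|\widehat{U}_x|^2$ via the identity $|\widehat{U}_x|^2 = \mathbb{E}_{\rho}\big[\tfrac{1}{2^n}\Tr(\sigma_x \,U\rho U^\dagger \sigma_x \,\rho^{-1}?)\big]$ — more precisely, the standard fact that $|\widehat{U}_x|^2$ can be written as an expectation over random stabilizer input states and Clifford (stabilizer) measurements of a bounded, classically computable random variable. Concretely, prepare a Haar-random (or uniformly random) stabilizer state $\ket{\psi}$, apply $U$, and measure in a random stabilizer basis; the outcome together with the descriptions of $\ket{\psi}$ and the basis can be post-processed classically (tracking the stabilizer group through the Clifford circuit via the Gottesman–Knill formalism) to produce an unbiased estimator of $|\widehat{U}_x|^2$ simultaneously for all $x$, whose value is bounded by some absolute constant (after appropriate normalization). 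This is the ``novel classical post-processing'' the text alludes to: one run of the experiment yields, for free, an estimator for every $x$ at once, so there is no per-$x$ query overhead.

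The key steps, in order, are: (1) write down the explicit formula expressing $|\widehat{U}_x|^2$ as $\mathbb{E}[f_x(\text{outcome})]$ where the expectation is over the random stabilizer state and random stabilizer measurement, and $f_x$ is computable in $\poly(n)$ time from the stabilizer data and uniformly bounded, say $|f_x| \le C$; (2) observe that a single experimental run produces one sample of the random outcome, from which $f_x$ can be evaluated for all $x \in \mathcal{S}$ at once; (3) repeat the experiment $m = O(C^2 \log(|\mathcal{S}|/\delta)/\eps^2)$ times, form the empirical averages $|\alpha_x|^2 := \tfrac{1}{m}\sum_{j} f_x(\text{outcome}_j)$; (4) apply Hoeffding's inequality (\cref{lem:hoeffding}) to bound $\Pr[\,|\,|\alpha_x|^2 - |\widehat{U}_x|^2\,| > \eps\,] \le \delta/|\mathcal{S}|$ for each fixed $x$, and then a union bound over the $|\mathcal{S}|$ values of $x$ to get the claim with probability $\ge 1-\delta$. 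The query count is $m = O(\log|\mathcal{S}|/\eps^2)$ for constant $\delta$, as stated, and each query is a single use of $U$ on a stabilizer state followed by a Clifford measurement.

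I expect the main obstacle to be step (1): pinning down the correct unbiased estimator and proving it is both unbiased \emph{and} uniformly bounded. The natural candidate comes from the $2$-design property of the stabilizer ensemble — averaging $U \ketbra{\psi}{\psi} U^\dagger$ over random stabilizer $\ket{\psi}$ reproduces enough of the second moment of the Haar measure to extract $\sum_x |\widehat{U}_x|^2 \ketbra{x}{x}$-type quantities — combined with the fact that a random stabilizer measurement basis, conditioned on its outcome, lets one reconstruct $\Tr(\sigma_x \,U\rho U^\dagger\, \sigma_y \,\rho)$ terms. The delicate points are: ensuring the off-diagonal ($x \ne y$) contributions vanish in expectation (using that Paulis $\sigma_x$ for $x \ne 0$ are traceless and that the stabilizer $3$-design-like averaging kills cross terms), and controlling the variance/range of $f_x$ so that Hoeffding applies with the claimed $1/\eps^2$ scaling rather than something worse. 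Once the estimator is in hand, the concentration and union-bound steps are routine.
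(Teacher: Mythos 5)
Your overall architecture matches the paper's: a random stabilizer input state, one use of $U$, a random Clifford (stabilizer-basis) measurement, classical post-processing that turns a single run into a bounded estimator for every $x\in\mathcal{S}$ simultaneously, and then Hoeffding plus a union bound over $\mathcal{S}$ to get the $O(\log(|\mathcal{S}|/\delta)/\eps^2)$ query count. Your steps (2)--(4) are correct and routine as written.

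The gap is precisely where you flag it: step (1), the explicit estimator, is the entire substance of the lemma and is never constructed (your opening identity with the ``$\rho^{-1}?$'' is not a formula). For the record, here is what the paper does. Let $N=2^n$ and take the $N+1$ mutually unbiased stabilizer bases $\mathcal{B}_i=\{\ket{\phi_{i,j}}\}_{j\in[N]}$ arising from maximal abelian subgroups $G_i$ of the Pauli group; these form a projective $2$-design, $\sum_{i,j}\kb{\phi_{i,j}}^{\otimes 2}=\Id+F$ after normalization (\cref{eq:2design}). For each $(i,j,x)$ there is a unique index $\ell(i,j,x)$ with $\sigma_x\kb{\phi_{i,j}}\sigma_x=\kb{\phi_{i,\ell(i,j,x)}}$, computable classically from the coset structure of $\FF_2^{2n}/G_i$ (\cref{eq:lijx}). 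The estimator is then simply the indicator $M_x=[\ell=\ell(i,j,x)]$ of the outcome $\ell$ obtained by preparing $U\ket{\phi_{i,j}}$ for uniformly random $(i,j)$ and measuring in $\mathcal{B}_i$. The $2$-design identity yields $\mathbb{E}[M_x]=\tfrac{1}{N+1}+\tfrac{N}{N+1}|\widehat{U}_x|^2$ via $\Tr[(\sigma_xU)^{*}\otimes\sigma_xU\cdot(\Id+F)]=2^n+|\Tr[\sigma_xU]|^2$, so $M_x$ is $\{0,1\}$-valued and \emph{affine} in $|\widehat{U}_x|^2$ rather than unbiased for it; one inverts the affine map at the end, as you anticipated with ``after appropriate normalization.'' Two smaller corrections to your sketch: only the $2$-design property is needed (no ``$3$-design-like'' averaging), and there are no off-diagonal $x\neq y$ cross terms to kill --- the computation never introduces them.
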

To this end, we will look at its Pauli expansion from the $\{0,1\}^{2n}$ point of view. Namely, we will consider the expansion
\begin{equation*}
    U=\sum_{x\in \{0,1\}^{2n}}\widehat{U}_x \sigma_x,
\end{equation*}
where $\sigma_{(a,b)}=i^{a\cdot b}X^aZ^b$. Before presenting our algorithm, we introduce a few facts. Let $N=2^n$. There exists subspaces $G_1,\dots, G_{N+1} \subseteq \{0,1\}^{2n}$ such~that 
\begin{itemize}
    \item $G_i=\{x\in\{0,1\}^{2n}:\ [ x,y]=0,\ \forall y\in G_i\},$ $\forall i\in [N+1]$,
    \item $G_i\cap G_j=\{ \sigma_{0^n}\}$ if $i\neq j,$
    \item $|G_i|=N$. 
\end{itemize}
Let $i\in [N]$ \cite{bandyopadhyay2002new}. Then, $|\{0,1\}^{2n}/G_i|=N$. Let $r_j^i\in \{0,1\}^{2n}$ for $j\in [N]$ be representatives of the different equivalence classes of  $\{0,1\}^{2n}/G_i$. Then, for every $i,j\in [N]$ the following matrix determines a pure state 
\begin{equation}
    \ket{\phi_{i,j}}\bra{\phi_{i,j}}=\frac{1}{N}\sum_{x\in G_i}(-1)^{[r_j^i,x]}\sigma_x.
\end{equation}
An important property is that $\mathcal{B}_i=\{\ket{\phi_{i,j}}\}_{j\in [N]}$ is an orthonormal basis
for every $i\in [N+1]$. Also, $\mathcal B_i$ are mutually unbiased bases and form a $2$-design, meaning that 
\begin{align}\label{eq:2design}
    \frac{1}{N(N+1)}\sum_{i\in [N+1]}\sum_{j\in [N]}\ket{\phi_{i,j}}\bra{\phi_{i,j}}\otimes\ket{\phi_{i,j}}\bra{\phi_{i,j}}=\frac{\mathbb{I}+F}{N(N+1)},
\end{align}
where $F$ is the swap operator, i.e., $F=\sum \ket{xy}\bra{yx}$ \cite[Proposition 2.3]{bluhm2024hamiltonianv2}.  We make an extra remark: given $i\in [N],\ j\in [N+1]$ and $x\in\{0,1\}^{2n}$, there exists a unique $\ell(i,j,x)$ such that
\begin{equation}\label{eq:lijx}
    |\langle \phi_{i,\ell(i,j,x)}|\sigma_x|\phi_{i,j}\rangle|=1.
\end{equation}
Indeed, 
\begin{equation*}
    \sigma_x\ket{\phi_{i,j}}\bra{\phi_{i,j}}\sigma_x=\sum_{y\in G_i}(-1)^{[r_{j}^i,y]}\sigma_x\sigma_y\sigma_x=\sum_{y\in G_i}(-1)^{[r_{j}^i+x,y]}\sigma_y,
\end{equation*}
so $\sigma_x\ket{\phi_{i,j}}\bra{\phi_{i,j}}\sigma_x=\ket{\phi_{i,\ell(i,j,x)}}\bra{\phi_{i,\ell(i,j,x)}}$ for $\ell(i,j,x)$ such that $r_{\ell(i,j,x)}^i\sim_{G_i}r_j^i+x$. 
Now, we are ready to introduce \cref{algo:memory-lessPauliSampling} and prove \cref{lem:memorylessPaulisampling}. 
\begin{algorithm}
\textbf{Input:} Query access to a unitary $U$, error parameter $\eps\in (0,1)$, a set  $\mathcal S\subseteq\{0,1\}^{2n}$, failure parameter $\delta\in (0,1)$
\begin{algorithmic}[1]
    \State Set $T=O(\log(|\mathcal S|/\delta)/\eps^2)$
    \State Initialize $|\alpha_x|^2=|\beta_x|^2=0$ for $x\in \mathcal S$
    \For{$t=1,...,T$}
        \State Sample $i\in [N+1]$ and $j\in [N]$ uniformly at random 
        \State Prepare $U\ket{\phi_{i,j}}$
        \State Measure in the basis $\mathcal B_i$ and let $l$ be the outcome
        \For{$x\in\mathcal S$}
            \If{$l=l(i,j,x)$}
            \State $|\beta_x|^2\leftarrow |\beta_x|^2+1/T$
            \EndIf
        \EndFor
    \EndFor
    \For{$x\in\mathcal S$}
        $|\alpha_x|^2=(N+1)/N|\beta_x|^2-1/N$
    \EndFor
\end{algorithmic}
\textbf{Output}: $(|\alpha_x|^2)_{x\in\mathcal{ S}}$
\caption{Memory-less Pauli sampling}\label{algo:memory-lessPauliSampling}
\end{algorithm}

\begin{proof}[Proof of \cref{lem:memorylessPaulisampling}]
One iteration of \cref{algo:memory-lessPauliSampling} consists of the following: 
\begin{itemize}
    \item Pick uniformly at random $i\in [N+1]$ and $j\in [N]$:
    \item Prepare $U\ket{\phi_{i,j}}$:
    \item Measure the state in the basis according $\mathcal{B}_i$. Suppose we obtain outcome $\ell$.
\end{itemize}
For every $x\in \{0,1\}^{2n}$, we define the following random variable that takes different values depending on the outcome of the random iteration we just have described: 
\begin{equation}
    M_{x}=\left\{\begin{array}{ll}
         0 & \text{if }\ell\neq \ell(i,j,x), \\
         1 & \text{if }\ell=\ell(i,j,x).
    \end{array}\right. 
\end{equation}
Note that given that $M_x$ takes values within a bounded interval, by the Hoeffding bound and a union bound, we can estimate the expectation of $M_{x}$ for every $x\in\mathcal S$ within error $\eps$ with probability $\geq 1-\delta$ from $O(\log(|\mathcal S|/\delta)/\eps^2)$ repetitions of the iteration above, which requires 1 query and no quantum memory. Thus, it only remains to show that these expectations are closely related to $|\widehat{U}_x|^2$.  To this end, observe that
  \begin{align*}
        \mathbb E [M_x]&=\mathbb P[\ell=\ell(i,j,x)]\\
        &= \frac{1}{N(N+1)}\sum_{i,j} |\langle \phi_{i,\ell(i,j,x)}|U|\phi_{i,j}\rangle |^2\\
        &=\frac{1}{N(N+1)}\sum_{i,j} |\langle \phi_{i,j}|\sigma_xU|\phi_{i,j}\rangle |^2\\
        &=\frac{1}{N(N+1)} \Tr\Big((\sigma_x U)^* \otimes \sigma_x U \cdot \sum_{i,j} \ketbra{\phi_{i,j}}{\phi_{i,j}}^{\otimes 2}\Big)\\
         &=\frac{1}{N(N+1)} \Tr\Big((\sigma_x U)^* \otimes \sigma_x U \cdot (\id+F)\Big)\\
        &=\frac{1}{N(N+1)}(\Tr[U^*\sigma_x\sigma_xU]+|\Tr[\sigma_xU]|^2)\\
        &=\frac{1}{N+1}+\frac{N}{N+1}|\widehat{U}_x|^2,
    \end{align*}
    where in the third line we used the definition of $\ell(i,j,x)$ (see \cref{eq:lijx}), in the fourth line we have used \cref{eq:2design}.
 The equation above implies that, $$|\widehat U_x|^2=\frac{N+1}{N}\mathbb E[M_x]-\frac{1}{N},$$ so $((N/N+1)\eps)$-estimates of $M_x$ yield $\eps$-estimates of~$|\widehat{U}_x|^2$. 
\end{proof}

\paragraph{Bonus application: Testing $k$-junta quantum unitaries.}
Now, we use Pauli sampling to show that $O(\exp(k)\poly(1/(\eps_2-\eps_1)))$ queries are sufficient to test whether an $n$-qubit unitary is $\eps_2$-far or $\eps_1$-close to $k$-junta. In addition, using \cref{lem:memorylessPaulisampling} shows that $O(n\exp(k)\poly(1/(\eps_2-\eps_1)))$ queries are sufficient for the same task in the absence of memory. To the best of our knowledge, this is the first tolerant $k$-junta tester for unitaries. Despite $\tilde O(\sqrt k)$ queries being sufficient for non-tolerant testing~\cite{chen2023testing}, the worse $k$-dependence of our tolerant tester should not come as a surprise, as classical tolerant testing of $k$-junta Boolean functions requires $\Omega(2^{\sqrt k})$ samples, but the non-tolerant tester requires only $O(k)$ samples~\cite{blais2009testing,pallavoor2022approximating}. We will make use of the following lemma by Wang \cite{wang2011testing}.

\begin{lemma}[\cite{wang2011testing}]\label{lem:juntalemmawang}
    Let $U$ be an $n$-qubit unitary. If $U$ is $\eps_1$-close to $k$-junta in $2$-norm, then there exists $K\subseteq [n]$ of size $k$ such that $$\sum_{\mathbf{supp}(x)\subseteq K}|\widehat U_x|^2\geq 1-\eps_1^2.$$
    On the other hand, if $U$ is $\eps_2$-far from $k$-local in 2-norm, then for every $K\subseteq [n]$ of size $k$
    $$\sum_{\mathbf{supp}(x)\subseteq K}|\widehat U_x|^2\leq 1-\frac{\eps^2_2}{4}.$$
\end{lemma}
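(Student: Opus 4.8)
The plan is to route everything through the \emph{restriction onto a set of qubits}. Recall that an $n$-qubit unitary is a $k$-junta precisely when it has the form $V\otimes\Id_{\bar K}$ for some $K\subseteq[n]$ with $|K|=k$ and some unitary $V$ on the qubits of $K$, and that such a $V\otimes\Id_{\bar K}$ has Pauli expansion supported entirely on $\{x:\mathbf{supp}(x)\subseteq K\}$. For a fixed $K$ I would set $\Pi_K(A)=\sum_{\mathbf{supp}(x)\subseteq K}\widehat A_x\,\sigma_x$; averaging $\sigma_g A\sigma_g$ over all Paulis $\sigma_g$ supported on $\bar K=[n]\setminus K$ gives $\Pi_K(A)=\mathbb E_g[\sigma_g A\sigma_g]$, so $\Pi_K$ is a mixture of unitary conjugations. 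Two consequences I will use: $\|\Pi_K(U)\|_\infty\le1$ for any unitary $U$, and by Parseval $\|U-\Pi_K(U)\|_2^2=1-\sum_{\mathbf{supp}(x)\subseteq K}|\widehat U_x|^2$.

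For the first claim I would take a $k$-junta $V\otimes\Id_{\bar K}$ witnessing that $U$ is $\eps_1$-close (padding $K$ to size exactly $k$ if needed). Since its Pauli spectrum lives on $\{\mathbf{supp}(x)\subseteq K\}$, Parseval immediately gives $\eps_1^2\ge\|U-V\otimes\Id_{\bar K}\|_2^2\ge\sum_{\mathbf{supp}(x)\not\subseteq K}|\widehat U_x|^2$, i.e.\ $\sum_{\mathbf{supp}(x)\subseteq K}|\widehat U_x|^2\ge1-\eps_1^2$.

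For the second claim I would argue the contrapositive: assume some $K$ of size $k$ has $\sum_{\mathbf{supp}(x)\subseteq K}|\widehat U_x|^2>1-\eps_2^2/4$, and show $U$ is $\eps_2$-close to a $k$-junta. Put $W=\Pi_K(U)$, which factorizes as $W=W_K\otimes\Id_{\bar K}$; the facts above give $\|U-W\|_2<\eps_2/2$, $\|W_K\|_\infty\le1$, and $\|W_K\|_2^2=\|W\|_2^2>1-\eps_2^2/4$. Then I would round $W_K$ to a unitary using its polar decomposition $W_K=V_KP$ with $V_K$ unitary and $P=(W_K^\dagger W_K)^{1/2}\succeq0$; writing $s_1,\dots,s_{2^k}\in[0,1]$ for the singular values of $W_K$ and using unitary invariance of the Frobenius norm, $\|W_K-V_K\|_2^2=\|P-\Id\|_2^2=2^{-k}\sum_i(1-s_i)^2\le2^{-k}\sum_i(1-s_i^2)=1-\|W_K\|_2^2<\eps_2^2/4$, where the inequality is the elementary bound $(1-s)^2\le1-s^2$ for $s\in[0,1]$. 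A triangle inequality then yields $\|U-V_K\otimes\Id_{\bar K}\|_2\le\|U-W\|_2+\|W_K-V_K\|_2<\eps_2$, contradicting $\eps_2$-farness.

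The bulk of the work is the bookkeeping in the last paragraph; the one step needing care is the rounding, where one must simultaneously exploit that $\|W_K\|_\infty\le1$ (so the convexity bound $(1-s)^2\le1-s^2$ applies singular-value-wise) and that the off-support Fourier mass is small. The constant $1/4$ is exactly the price of splitting the budget $\eps_2$ into the halves ``$U\to W$'' and ``$W\to V_K\otimes\Id_{\bar K}$''; I do not expect this precise constant to be tight, but any constant suffices for the tolerant $k$-junta tester built on top of this lemma.
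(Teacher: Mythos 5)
Your argument is correct: the first direction is immediate from Parseval against a junta witness supported on $K$, and the second direction correctly projects $U$ onto the Paulis supported on $K$ via the twirl $\mathbb{E}_g[\sigma_g U\sigma_g]$ (which preserves $\|\cdot\|_\infty\le 1$) and then rounds to a unitary via the polar decomposition, using $(1-s)^2\le 1-s^2$ for singular values $s\in[0,1]$. The paper does not prove this lemma itself but imports it from Wang, and your proof is essentially the standard argument from that source, so there is nothing further to compare.
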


\begin{proposition}\label{prop:testkjuntaunitaries}
    Let $U$ be an $n$-qubit unitary and $0<2\eps_1<\eps_2$. One can test whether $U$ is $\eps_2$-far or $\eps_1$-close to $k$-junta in $2$-norm by making $O(16^k/(\eps_2-\eps_1)^4)$ queries to $U$ with $n$ qubits of memory. In absence of memory, $O(16^kn/(\eps_2-\eps_1)^4)$ queries are sufficient.
\end{proposition}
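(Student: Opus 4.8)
The plan is to combine \cref{lem:juntalemmawang} (Wang's structural dichotomy for $k$-juntas) with Pauli sampling as the estimation primitive, mirroring the locality-testing strategy but at the level of the support rather than the Hamming weight. First I would observe that by \cref{fact:bellsamplingU} one can, with memory, draw i.i.d.\ samples from the distribution $\{|\widehat U_x|^2\}_{x\in\01^{2n}}$ by preparing the Choi state, applying $U\otimes\Id$, and measuring in the Bell basis. For each of the $\binom{n}{k}$ subsets $K\subseteq[n]$ of size $k$, define $W_K=\sum_{\mathbf{supp}(x)\subseteq K}|\widehat U_x|^2$; Wang's lemma says the ``close'' case has some $K$ with $W_K\geq 1-\eps_1^2$ while the ``far'' case has $W_K\leq 1-\eps_2^2/4$ for all $K$. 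Since $2\eps_1<\eps_2$ we have $1-\eps_1^2 > 1-\eps_2^2/4$, so there is a genuine gap; set the threshold at the midpoint $\tau = 1 - \tfrac12(\eps_1^2+\eps_2^2/4)$, whose half-width is $\gamma := \tfrac12(\eps_2^2/4-\eps_1^2) = \Theta((\eps_2-\eps_1)^2)$ using $2\eps_1<\eps_2$.

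The key estimation step: I would take $m = O(\log(\binom{n}{k}/\delta)/\gamma^2)$ Pauli samples and form, for every size-$k$ subset $K$ simultaneously, the empirical estimate $\widetilde W_K$ of $W_K$ (each sample $x$ contributes to $\widetilde W_K$ iff $\mathbf{supp}(x)\subseteq K$). Because $W_K$ is the probability mass of a fixed event under the sampled distribution, a Hoeffding bound (\cref{lem:hoeffding}) together with a union bound over all $\binom{n}{k}\leq n^k$ subsets gives $|\widetilde W_K - W_K|\leq \gamma$ for all $K$ with probability $\geq 1-\delta$. The tester then accepts (``close'') iff $\max_K \widetilde W_K \geq \tau$. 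Correctness is immediate from the gap: in the close case the good $K$ has $\widetilde W_K \geq 1-\eps_1^2-\gamma > \tau$, and in the far case every $\widetilde W_K \leq 1-\eps_2^2/4+\gamma < \tau$. The query count is $m = O\big(k\log n/(\eps_2-\eps_1)^4\big)$; absorbing the $k\log n$ factor — or, more conservatively, bounding $\log\binom nk = O(k\log n)$ crudely and noting the claimed bound is stated as $O(16^k/(\eps_2-\eps_1)^4)$ — one gets the stated complexity. (Strictly, $k\log n$ is not $O(16^k)$ for all $n$; the intended reading is presumably that the classical post-processing over $\binom nk$ subsets is what costs $16^k$ elsewhere, and the sample size depends on $\log\binom nk$; I would state the bound as $\widetilde O(k/(\eps_2-\eps_1)^4)$ queries with memory and flag the discrepancy.)

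For the memory-less version I would simply replace the Pauli-sampling primitive by \cref{lem:memorylessPaulisampling}: run \cref{algo:memory-lessPauliSampling} with target set $\mathcal S = \{x : \mathbf{supp}(x)\subseteq K \text{ for some } |K|=k\}$ — note $|\mathcal S| = \binom nk 4^k \leq (4n)^k$ — to obtain estimates $|\alpha_x|^2$ with $\big||\widehat U_x|^2-|\alpha_x|^2\big|\leq \eps'$ for all $x\in\mathcal S$, using $O((\log|\mathcal S|)/\eps'^2)$ queries and no memory. However, summing these per-coordinate $\ell_\infty$-estimates over the (up to $4^k$) elements with support in a fixed $K$ blows up the additive error by a factor $4^k$, so I must take $\eps' = \gamma/4^k = \Theta((\eps_2-\eps_1)^2/4^k)$, yielding $O(16^k \log|\mathcal S|/(\eps_2-\eps_1)^4) = O(16^k n \log(4n) \,k/(\eps_2-\eps_1)^4)$ queries — again matching the claimed $O(16^k n/(\eps_2-\eps_1)^4)$ up to the logarithmic/$k$ factors suppressed in the paper's big-$O$. \textbf{The main obstacle} is precisely this last point: the memory-less subroutine only gives coordinate-wise $\ell_\infty$ estimates, whereas the tester needs an additive estimate of a sum of up to $4^k$ such coordinates, so the error budget per coordinate must shrink by $4^k$ and the query complexity picks up the $16^k$ factor (rather than the $4^k$ one might naively hope for). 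An alternative that avoids this blow-up would be to estimate each aggregate $W_K$ directly as a single Bernoulli mean in the memory-less model — which is essentially what Pauli hashing / a suitable choice of measurement basis $\mathcal B_i$ adapted to the subgroup $G_K = \{\sigma_x : \mathbf{supp}(x)\subseteq K\}$ accomplishes — but I would present the straightforward union-bound argument first and remark on the improvement.
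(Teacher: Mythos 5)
Your proposal is correct in structure and uses the same key ingredients (Wang's dichotomy, Pauli sampling, and \cref{lem:memorylessPaulisampling}), but for the with-memory case you take a genuinely different estimation route than the paper, and the bound you obtain is incomparable to the one claimed. The paper first obtains an $\ell_\infty$-accurate estimate of the \emph{entire} Pauli distribution $\{|\widehat U_x|^2\}_x$ to accuracy $\gamma/4^k$ using Fact~\ref{thm:canoneproof} --- a DKW-type bound whose sample complexity is independent of the domain size --- and then sums the (at most $4^k$) coordinate-wise estimates over each candidate support set $K$, losing the factor $4^k$ in the error budget; this gives $O(16^k/(\eps_2-\eps_1)^4)$ queries with \emph{no} $n$-dependence, which is one of the paper's stated selling points for the with-memory regime. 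You instead estimate each aggregate $W_K$ directly as a Bernoulli mean and union-bound over the $\binom nk$ subsets, yielding $O\big(\log\binom nk/\gamma^2\big)=O(k\log n/(\eps_2-\eps_1)^4)$ queries. Both arguments are valid; your bound wins for moderate $n$ and loses for superexponentially large $n$, and you correctly flag that $k\log n$ is not $O(16^k)$ in general. Your guess that the paper's $16^k$ ``comes from classical post-processing over $\binom nk$ subsets'' is not right, however: it comes from splitting the error budget over $4^k$ coordinates, exactly the mechanism your own memory-less analysis identifies. For the memory-less case your proof mirrors the paper's (there one is forced to aggregate $\ell_\infty$-estimates from \cref{lem:memorylessPaulisampling}), and your observation that $\log|\mathcal S|=O(k\log n)$ by restricting $\mathcal S$ to strings of support size at most $k$ is actually tighter than the paper's crude factor of $n$; just note your final displayed complexity carries an extraneous factor of $n$ --- it should read $O(16^k\,k\log n/(\eps_2-\eps_1)^4)$, not $O(16^k\,n\log(4n)\,k/(\eps_2-\eps_1)^4)$.
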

\begin{proof}
    Assume for the moment that we have a distribution $(\alpha_x)_{x\in\{0,1,2,3\}^{n}}$ such that 
    \begin{equation}\label{eq:approxtestjunta}
        |\alpha_x-|\widehat U_x|^2|\leq \frac{\eps_2^2/4-\eps_1^2}{2\cdot 4^k} 
    \end{equation}
    for every $x\in\{0,1,2,3\}^n$. Then, for every $K\subseteq [n]$ of size $k$ we can approximate $\sum_{\text{supp}(x)\subseteq K}|\widehat U_x|^2$ via $\sum_{\text{supp}(x)\subseteq K}\alpha_x$ up to error $(\eps_2^2/4-\eps_1^2)/2.$ Thanks to \cref{lem:juntalemmawang}, this is enough for testing if $U$ is $\eps_2$-close or $\eps_1$-far from local. 

    If we have access to $n$-qubits of quantum memory, then we can sample $O(16^k/(\eps_2^2-\eps_1^2)^2)$ times from $(|\widehat U_x|^2)_x$ with $O(16^k/(\eps_2^2-\eps_1^2)^2)$ queries via \cref{fact:bellsamplingU}, and the empirical distribution will satisfy \cref{eq:approxtestjunta} thanks to \cref{thm:canoneproof}. If we have no access to quantum memory, then $O(16^kn/(\eps_2^2-\eps_1^2)^2)$ queries are sufficient thanks to \cref{lem:memorylessPaulisampling}.
\end{proof}

\subsubsection{Estimating Pauli coefficients}
Below we give a protocol that allows one to estimate the Pauli coefficients  of the unknown unitary without quantum memory. It is based on the fact that any non-identity Pauli operator can be written as the product of two other anti-commuting Pauli operators, a fact which has been previously used in \cite[Lemma 6.1]{caro2023learning}.

\begin{lemma}\label{lem:memorylesscoefestimation}
    Let $x\in\01^{2n}$ and let $H$ be an $n$-qubit traceless Hamiltonian. There is a memory-less algorithm (Algorithm~\ref{algo:memory_less_pauli_est}) that by making $O(\norm{H}^4/\eps^4)$ queries to $U(\eps/\norm{H}^2)$ on a product state and making Pauli measurements can output an estimate $\tilde \lambda_x$ such that $|\lambda_x-\tilde \lambda_x|\leq\eps$.  
\end{lemma}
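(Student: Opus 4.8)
The plan is to reduce estimating $\lambda_x$ to measuring a single Pauli observable on $U(\eps)$ applied to a stabilizer state: the short-time expansion \cref{eq:TaylorOrder1} linearises the evolution channel, and an anticommuting decomposition of $\sigma_x$ isolates the one coefficient $\lambda_x$ from all the others. If $x=0^{2n}$ then $\lambda_x=\Tr[H]/2^n=0$ and we output $0$, so assume $x\neq 0^{2n}$. Since $\sigma_x\neq I$, pick $q\in\01^{2n}$ with $[x,q]=1$, so $\sigma_q\sigma_x=-\sigma_x\sigma_q$; because $\sigma_x\sigma_q$ is anti-Hermitian, there is a Hermitian Pauli $\sigma_p$ and $\omega\in\{+i,-i\}$ with $\sigma_p=\omega\,\sigma_x\sigma_q$, equivalently $\sigma_x=\omega^{-1}\sigma_p\sigma_q$ with $\{\sigma_p,\sigma_q\}=0$. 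All of $p,q,\omega$ are computed classically from $x$, and $p\notin\{0^{2n},q\}$ (else $\sigma_x\propto I$).

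The estimator is as follows. Let $\rho_p:=(I+\sigma_p)/2^n$, the normalized projector onto the $+1$-eigenspace of $\sigma_p$; it is a mixed stabilizer state and, up to a Clifford change of basis, a tensor product of $\kb{0}$ with maximally mixed qubits, so it is prepared without quantum memory (equivalently, it is a uniform mixture of pure stabilizer states). The algorithm repeats $m=O(\log(1/\delta)/\eps^4)$ times: prepare $\rho_p$, apply $U(\eps)$, measure $\sigma_q$, and record the $\pm1$ outcome; let $\widehat\mu$ be the empirical mean. We output $\widetilde\lambda_x:=s\cdot\widehat\mu/(2\eps)$ with $s=+1$ if $\omega=i$ and $s=-1$ if $\omega=-i$.

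For the analysis, write $U=U(\eps)$. From \cref{eq:TaylorOrder1} and $\norm{H}_\infty,\norm{R_1}_\infty\le 1$ one gets $U\rho_p U^\dagger=\rho_p-i\eps[H,\rho_p]+\eps^2 E$ with $\norm{E}_1\le C_0$ for a universal constant $C_0$. Since $|\Tr[\sigma_q A]|\le\norm{A}_1$,
\begin{equation*}
\mathbb{E}[\widehat\mu]=\Tr[\sigma_q U\rho_p U^\dagger]=\Tr[\sigma_q\rho_p]-i\eps\,\Tr\!\big[[\sigma_q,H]\rho_p\big]+O(\eps^2).
\end{equation*}
The Pauli support of $\rho_p$ is $\{0^{2n},p\}$ and $q\notin\{0^{2n},p\}$, so $\Tr[\sigma_q\rho_p]=0$. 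Expanding $[\sigma_q,H]=2\sum_{z:[z,q]=1}\lambda_z\,\sigma_q\sigma_z$, we have $\Tr[\sigma_q\sigma_z\rho_p]=2^{-n}\big(\Tr[\sigma_q\sigma_z]+\Tr[\sigma_q\sigma_z\sigma_p]\big)$; the first term vanishes for every $z$ in this sum (there $z\neq q$), and the second is nonzero only when $\sigma_q\sigma_z\propto\sigma_p$, i.e.\ only for $z=x$ (as $\sigma_q\sigma_p=-\omega\sigma_x$), in which case $\sigma_q\sigma_x\sigma_p=\omega I$ and $\Tr[\sigma_q\sigma_x\rho_p]=\omega$. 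Hence $\Tr[[\sigma_q,H]\rho_p]=2\omega\lambda_x$ and
\begin{equation*}
\mathbb{E}[\widehat\mu]=-2i\omega\,\eps\lambda_x+O(\eps^2)=2s\,\eps\lambda_x+O(\eps^2).
\end{equation*}
By Hoeffding (\cref{lem:hoeffding}), $|\widehat\mu-\mathbb{E}[\widehat\mu]|\le\eps^2$ with probability $\ge 1-\delta$, so $|\widetilde\lambda_x-\lambda_x|\le\eps^2/(2\eps)+O(\eps)=O(\eps)$; rerunning with $\eps$ replaced by a small enough constant multiple of the target accuracy yields error $\le\eps$ using $O(1/\eps^4)$ queries to $U(\eps)$, each on a stabilizer state followed by a Pauli measurement.

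The conceptual crux, and the main obstacle, is designing a memoryless estimator that isolates a \emph{single} Pauli coefficient: with only stabilizer inputs and Pauli measurements one can access at best quantities quadratic in $U$, so one must use the short-time expansion to linearise, and the anticommuting decomposition $\sigma_x=\omega^{-1}\sigma_p\sigma_q$ together with the eigenspace state $\rho_p$ — whose Pauli support is exactly $\{0^{2n},p\}$ — to annihilate every cross term $\lambda_z$ with $z\neq x$. The remaining steps are routine but load-bearing: bounding the Taylor remainder $E$ by a constant independent of $H$ and $n$ (so the $O(\eps^2)$ bias is universal), and noting that the $1/\sqrt m$ statistical error forces $m=\Theta(1/\eps^4)$ in order to resolve a signal of size $\Theta(\eps)$. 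A minor point is realizing $\rho_p$ without quantum memory, which is handled by preparing it as a Clifford image of a product of $\kb{0}$ and maximally mixed qubits, or equivalently as a random pure stabilizer state in the $+1$-eigenspace of $\sigma_p$.
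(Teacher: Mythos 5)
Your proposal is correct and follows essentially the same route as the paper: decompose $\sigma_x$ as a product of two anticommuting Paulis, prepare the normalized eigenspace projector of one as the input state, evolve for time $\eps$, measure the other Pauli, and use the first-order Taylor expansion plus Hoeffding with $O(1/\eps^4)$ samples to resolve the $\Theta(\eps)$ signal. The only differences are cosmetic (you use $(\Id+\sigma_p)/2^n$ where the paper uses $(\Id-\sigma_{x'})/2^n$, and you organize the trace computation via the commutator $[\sigma_q,H]$ rather than expanding both exponentials directly).
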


\begin{algorithm}
\textbf{Input:} Query access to the time evolution of $U(t)=e^{-itH}$, error parameter $\eps\in (0,1)$,  $x\in\{0,1\}^{2n}$, failure parameter $\delta\in (0,1)$
\begin{algorithmic}[1]
    \State Set $T=O((\norm{H}^4/\eps^4)\log(1/\delta))$
    \State Pick $x',x''\in\{0,1\}^{2n},$ such that $\sigma_{x'}\sigma_{x''}=a \sigma_x$ for some $a\in\{\pm i\}$, and $\sigma_{x'}\sigma_{x''}=-\sigma_{x''}\sigma_{x'}.$
    Set $\alpha=0$
    \For{$j=1,...,T$}
        \State Prepare $\rho=(\Id-\sigma_{x'})/2^n$
        \State Apply $U(\Theta(\eps/\norm{H}^2))$ to $\rho$
        \State Measure in a eigenbasis of $\sigma_{x''}$. Let $\alpha_j$ be the $\pm 1$ eigenvalue of the measured eigenvector of $\sigma_{x''}$
        \State $\alpha\leftarrow\alpha+\alpha_j/T$
    \EndFor
\end{algorithmic}
\textbf{Output}: $\alpha/(2i\eps a)$
\caption{Memory-less Pauli coefficient estimation}\label{algo:memory-lessCoefficientEstimation}
\label{algo:memory_less_pauli_est}
\end{algorithm}

\begin{proof} Let $x',x''\in \{0,1\}^{2n}$ and $a\in\{-i,i\}$ such that $\sigma_{x'}\sigma_{x''}=a \sigma_{x}$ and $\sigma_{x'}\sigma_{x''}=-\sigma_{x''}\sigma_{x'}$. Then,
\begin{align}
    \nonumber&\Tr\Big[\sigma_{x''}\cdot e^{-itH}\cdot \frac{\Id-\sigma_{x'}}{2^n}\cdot e^{itH}\Big]\\
    \nonumber&=\Tr\Big[\sigma_{x''}\cdot \big(\Id_{2^n}-itH+ct^2 R_1(t)\big)\cdot \frac{\Id-\sigma_{x'}}{2^n}\cdot \big(\Id_{2^n}+itH+ct^2 R_1(t)\big)\Big]\\
    \nonumber&=\frac{1}{2^n}\Big(\Tr[\sigma_{x''}(\Id-\sigma_{x'})]-it\Tr[\sigma_{x''}H(\Id-\sigma_{x'})]+it\Tr[\sigma_{x''}(\Id-\sigma_{x'})H]+O(t^2)\Tr[\sigma_{x''}R]\Big)\\
    \nonumber&=\frac{1}{2^n}\Big(-it\Tr[\sigma_{x''}H(\Id-\sigma_{x'})]+it\Tr[\sigma_{x''}(\Id-\sigma_{x'})H]\Big)+O(t^2)\norm{H}^2\\
    \nonumber&=\frac{it}{2^n}(-\Tr[\sigma_{x''}H]+\Tr[\sigma_{x'}\sigma_{x''}H]+\Tr[\sigma_{x''}H]-\Tr[\sigma_{x''}\sigma_{x'}H])+O(t^2)\norm{H}^2\\
    \nonumber&=\frac{2ita}{2^n}\Tr[\sigma_xH]+O(t^2)\norm{H}^2\\
    &=2ita \lambda_x+O(t^2)\norm{H}^2\label{eq:approxnomemo}
\end{align}
where in the first line used the Taylor expansion, third inequality used $\norm{R}_{\mathrm{op}}\leq \norm{H}$, $\Tr[\sigma_{x''}]=\Tr[\sigma_x]=0$ and $\Tr[\sigma_{x''}R]\leq \norm{\sigma_{x''}}_1\norm{R}_{\mathrm{op}}\leq 2^n\norm{H}$, and in the fifth inequality we used that
$$
\sigma_{x'}\sigma_{x''}=-\sigma_{x''}\sigma_{x'}=a \sigma_x.
$$ Thus, by taking $t=\eps/\norm{H}^2$ and dividing by $2i\eps/\norm{H}^2 a$ on both sides of \cref{eq:approxnomemo}, we have that 
\begin{align*}
    \left|\lambda_x-\frac{\norm{H}^2}{{2i\eps a}}\cdot {\Tr\big[\sigma_{x''}e^{-i\eps H}\frac{\Id-\sigma_{x'}}{2^n}e^{i\eps H}\big]}\right|=O(\eps). 
\end{align*}
Hence, if we estimate $\Tr[\sigma_{x''}e^{-i\eps/\norm{H}^2 H}\frac{\Id-\sigma_{x'}}{2^n}e^{i\eps/\norm{H}^2 H}]$ up to error $\eps^2/\norm{H}^2$, which can be done with $O(\norm{H}^4/\eps^4)$ queries to $e^{-i\eps/\norm{H}^2 H}$,
we obtain an $\eps$-estimate of $\lambda_x$. 
\end{proof}

\subsection{Pauli Hashing}\label{sec:pauli_hashing}
In this section, we describe how the $n$-qubit Pauli operators may be \emph{hashed} or isolated into separate sets (which we will call \emph{buckets}) based on their commutation relations with a subgroup of Pauli operators that will be defined shortly. We will then observe that given a function $f:\{0,1\}^{2n} \rightarrow [-1,1]$ taking $2n$-bit strings corresponding to the Weyl operators to values in $[-1,1]$, the hashing process gives us access to projections of these functions onto each of these buckets described by its symplectic Fourier transform. This entire hashing process can thus be viewed as \emph{symplectic} Fourier hashing, analogous  to Fourier hashing used by Gopalan et al.~\cite{gopalan2011sparsity}. 

\subsubsection{Hashing to cosets of a random subgroup} \label{sec:hashing_random_cosets}
We now describe our approach for pairwise independently hashing the Weyl operators (or the symplectic Fourier characters of $f$ defined earlier). Consider a set of $2n$-bit strings $\beta_1,\ldots,\beta_t$ sampled independently and uniformly at random from $\FF_2^{2n}$. We define $H$ to be the subspace $H = \text{span}\{\beta_1,\ldots,\beta_t\}$ spanned by these $t$ vectors. The dimension of $H$ is thus $t$ and $|H|=2^t$.\footnote{We remark that the probability these $t$ vectors are linearly independent is $\geq 1-t2^{t-2n}$ and since the $t$ we eventually pick will be $\ll 2n$, this probability is negligible and from here onwards for simplicity we will assume that these vectors are linearly independent.}
The set of Weyl operators corresponding to elements in $H$ is a subgroup $S$ of the $n$-qubit Pauli group and the Weyl operators corresponding to $\{\beta_j\}_{j \in [t]}$ are the generators of $S$. For each $b \in \FF_2^t$,  define the \emph{bucket}
\begin{equation}
    C(b) := \{ \alpha \in \FF_2^{2n} : [\alpha, \beta_j] = b_j \, \forall j \in [t] \}.
    \label{eq:bucket_b_of_H}
\end{equation}
Let $\Cc = \{C(b)\}_{b \in \FF_2^t}$ be the set of all buckets. Moreover, observe that $\{C(b)\}$ are  the cosets in $\commutantperp(H)=\FF_2^{2n} / \commutant(H)$, because $z\in \commutant (H)$ if and only if $[z,\beta_j]=0$ for all $j\in [t]$. Also, note that $|\commutant(H)|=2^{2n-t}$ and $|\commutantperp(H)|=2^t$. 
We now argue that this indeed corresponds to a random hashing process, similar to~\cite[Proposition~2.9]{gopalan2011sparsity}.
\begin{proposition}\label{prop:symplectic_hashing}
The following facts are true for the random coset structure described so far.
\begin{enumerate}[$(i)$]
    \item For each $\alpha \in \FF_2^{2n} \setminus 0^{2n}$ and each $b$, we have $\Pr_H[\alpha \in C(b)] = 2^{-t}$.
    \item Consider distinct $\alpha, \alpha' \in \FF_2^{2n}$. Then, $\Pr_H[\alpha,\alpha' \in C(b) \text{ for some } b] = 2^{-t}$.
    \item Suppose $S \subset \FF_2^{2n}$ with $|S| \leq s + 1$. Then, $t \geq 2 \log s + \log(3)$ ensures all elements in $S$ fall into different buckets with probability at least $2/3$.
    \item For each $b \in \FF_2^t$ and distinct $\alpha,\alpha' \in \FF_2^{2n}$, we have $\Pr_H[\alpha,\alpha' \in C(b)] = \Pr_H[\alpha \in C(b)] \cdot \Pr_H[\alpha' \in C(b)]$ i.e., the  random variables $[\alpha \in C(b)]$ and $[\alpha' \in C(b)]$ are independent.\footnote{We remark that $0^{2n}$ will always lie in bucket corresponding to $\commutant(0^t)$. We can randomize this by further adding a random permutation after the construction of the cosets as done in~\cite{gopalan2011sparsity}.}
\end{enumerate}
\end{proposition}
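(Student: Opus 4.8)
The plan is to derive all four items from a single structural fact: the symplectic form $[\cdot,\cdot]$ on $\FF_2^{2n}$ is non-degenerate, so that $\alpha \mapsto [\alpha,\cdot]$ is an $\FF_2$-linear isomorphism from $\FF_2^{2n}$ onto its dual. First I would record the two consequences I need. For any nonzero $\alpha$, the functional $\beta \mapsto [\alpha,\beta]$ is surjective onto $\FF_2$, so $[\alpha,\beta]$ is a uniform bit when $\beta$ is uniform on $\FF_2^{2n}$. And for distinct nonzero $\alpha,\alpha'$ — which are automatically $\FF_2$-linearly independent, the only candidate relation being $\alpha+\alpha'=0$ — the functionals $[\alpha,\cdot]$ and $[\alpha',\cdot]$ are linearly independent, so $\beta \mapsto ([\alpha,\beta],[\alpha',\beta])$ is surjective onto $\FF_2^2$ and this pair is uniform on $\FF_2^2$. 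Since $\beta_1,\dots,\beta_t$ are sampled independently, the bits $([\alpha,\beta_j])_{j\in[t]}$, and likewise the pairs $([\alpha,\beta_j],[\alpha',\beta_j])_{j\in[t]}$, are jointly independent across $j$.

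With this in hand, $(i)$ follows by multiplying out $\Pr_H[\alpha \in C(b)] = \prod_{j\in[t]} \Pr[[\alpha,\beta_j]=b_j] = 2^{-t}$ for nonzero $\alpha$ and any $b$. For $(ii)$ I would observe that $\alpha,\alpha'$ share a bucket precisely when $[\alpha+\alpha',\beta_j]=0$ for every $j$; distinctness gives $\alpha+\alpha' \neq 0^{2n}$, so this is exactly the event of $(i)$ applied to the nonzero vector $\alpha+\alpha'$ with $b=0^t$, of probability $2^{-t}$. For $(iii)$ I would take a union bound over the at most $\binom{s+1}{2}$ unordered pairs in $S$, each a collision with probability $2^{-t}$ by $(ii)$, and check that $t \ge 2\log s + \log 3$ (i.e.\ $2^t \ge 3s^2$) makes $\binom{s+1}{2} 2^{-t} \le \tfrac13$, using $\binom{s+1}{2} \le s^2$ for $s \ge 1$, so all elements separate with probability at least $2/3$. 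For $(iv)$, with $\alpha,\alpha'$ both nonzero, $\Pr_H[\alpha,\alpha' \in C(b)] = \prod_{j\in[t]} \Pr[([\alpha,\beta_j],[\alpha',\beta_j])=(b_j,b_j)] = 4^{-t} = 2^{-t}\cdot 2^{-t}$, which equals $\Pr_H[\alpha \in C(b)]\cdot\Pr_H[\alpha' \in C(b)]$ by $(i)$; I would then dispose of the degenerate case where one of $\alpha,\alpha'$ equals $0^{2n}$ directly (both sides are $2^{-t}$ if $b=0^t$ and $0$ otherwise).

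I do not expect a genuine obstacle here: everything reduces to elementary $\FF_2$-linear algebra once non-degeneracy of the symplectic pairing is invoked, closely paralleling the Fourier-hashing argument of Gopalan et al.~\cite{gopalan2011sparsity}. The only points needing care are bookkeeping ones: making sure the zero string $0^{2n}$ — which always lands in the bucket $C(0^t)$ and so is not uniform — is excluded or treated separately wherever "nonzero $\alpha$" was used, most visibly in $(iv)$; and confirming that the stated threshold $t \ge 2\log s + \log 3$ really beats the pair count in $(iii)$, which comes down to the inequality $\tfrac{s(s+1)}{2} \le s^2$ for $s \ge 1$.
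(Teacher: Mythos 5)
Your proposal is correct and follows essentially the same route as the paper: $(i)$ from uniformity of each bit $[\alpha,\beta_j]$, $(ii)$ by applying $(i)$ to $\alpha+\alpha'$, $(iii)$ by a union bound over at most $s^2$ pairs, and $(iv)$ by pairwise independence with the $\alpha=0^{2n}$ case handled separately. The only (harmless) variation is in $(iv)$, where you compute $\Pr_H[\alpha,\alpha'\in C(b)]=4^{-t}$ directly from the joint uniformity of $([\alpha,\beta_j],[\alpha',\beta_j])$ via linear independence of the two functionals, whereas the paper deduces it by noting the probability is independent of $b$ and summing $(ii)$ over all $2^t$ buckets.
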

\begin{proof}
    $(i)$ is true because as $\beta_i$ is chosen uniformly at random, so for a fixed $\alpha$, $\Pr[[\alpha,\beta_i]=1]=1/2$. $(ii)$ follows by applying $(i)$ to $\alpha-\alpha'$. $(iii)$ follows from $(ii)$ and a union bound over the $\leq s^2$ pairs of elements of $S$. To prove $(iv)$ we divide it in two cases. In the first case, $\alpha=0^n$. As $0^n$ always belongs to $C(0^t)$, the statement reads as $\Pr_{H}[\alpha'\in C(0^t)]=\Pr_{H}[\alpha'\in C(0^t)]$, which is true. If both $\alpha,\alpha'\neq 0^n$, then by $(i)$ the RHS of $(iv)$ is $4^{-t}$.  The LHS is $4^{-t}$ too because 
    \begin{align*}
        2^{-t}\underbrace{=}_{(ii)}\Pr_H[\alpha,\alpha' \in C(b)\text{ for some }b]=\sum_{b\in\{0,1\}^t} \Pr_H[\alpha,\alpha' \in C(b)]
    \end{align*}
    and $\Pr_H[\alpha,\alpha'\in C(b)]$ does not depend on $b.$
\end{proof}

\subsubsection{Projection of functions onto cosets}
We now describe the functions corresponding to the different cosets obtained during hashing. Let $V \subseteq \FF_2^{2n}$ be a subspace. Consider the symplectic complement of $V$, denoted by $\commutant(V) = \{z \in \FF_2^{2n} : \forall v \in V, \, [z,v] = 0\}$. For $a \in \FF_2^{2n}$,  define the coset $a + V := \{a + v : v \in V\}$. Given a function $f:\FF_2^{2n} \rightarrow [-1,1]$, we define the ``projection function" $f|_{a+V}$ as 
\begin{align*}
    f|_{a+V} = \sum_{\beta \in a+V} \widebreve{f}(\beta) \chi_\beta(z),
\end{align*}
where $\chi_\beta(x) = (-1)^{[\beta,x]}$. Notably, the symplectic Fourier coefficients of the projected function are
\begin{align*}
\widebreve{f}|_{a+V}(\alpha) = \begin{cases}
    \widebreve{f}(\alpha) & \text{ if } \alpha \in a + V, \\
    0 & \text{ otherwise.}
\end{cases}
\end{align*}
\begin{fact}
\label{fact:projection_symp}
The projected function $f|_{a+V}$ satisfies $f|_{a+V}(z) = \Exp_{x \in \commutant(V)}\left[f(x+z) \chi_a(x) \right]$.
\end{fact}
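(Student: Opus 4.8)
The statement to prove is Fact~\ref{fact:projection_symp}, namely that the symplectic projection $f|_{a+V}$ of $f:\FF_2^{2n}\to[-1,1]$ onto the coset $a+V$ satisfies
$$
f|_{a+V}(z) = \Exp_{x\in\commutant(V)}\big[f(x+z)\,\chi_a(x)\big].
$$
The plan is to simply unfold both sides into their symplectic Fourier expansions and match them term by term, using only Fact~\ref{fact:sum_over_symplectic_subspace} (the character-sum identity over a symplectic subspace) and the linearity/bilinearity of the symplectic form $[\cdot,\cdot]$.

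First I would start from the right-hand side and substitute the symplectic Fourier decomposition $f(y)=\sum_{\beta\in\FF_2^{2n}}\widebreve f(\beta)\chi_\beta(y)$ with $y=x+z$, giving
$$
\Exp_{x\in\commutant(V)}\big[f(x+z)\chi_a(x)\big]
=\sum_{\beta\in\FF_2^{2n}}\widebreve f(\beta)\,\chi_\beta(z)\,\Exp_{x\in\commutant(V)}\big[\chi_\beta(x)\chi_a(x)\big],
$$
where I used $\chi_\beta(x+z)=\chi_\beta(x)\chi_\beta(z)$, which follows from bilinearity of $[\cdot,\cdot]$ and $(-1)$ having order $2$. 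Next I would observe $\chi_\beta(x)\chi_a(x)=\chi_{\beta+a}(x)=(-1)^{[\beta+a,\,x]}$, so the inner average is $\Exp_{x\in\commutant(V)}[(-1)^{[\beta+a,x]}]$. Since $\commutant(V)$ is a subspace of $\FF_2^{2n}$, Fact~\ref{fact:sum_over_symplectic_subspace} applied with the subspace $\commutant(V)$ gives that this average equals the indicator $[\,\beta+a\in\commutant(\commutant(V))\,]$. I would then invoke the standard double-commutant fact $\commutant(\commutant(V))=V$ for a symplectic subspace $V$ (the symplectic form is non-degenerate, so $\dim\commutant(V)=2n-\dim V$ and $V\subseteq\commutant(\commutant(V))$ forces equality), so the indicator becomes $[\beta+a\in V]$, i.e. $[\beta\in a+V]$. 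Substituting back, the sum collapses to $\sum_{\beta\in a+V}\widebreve f(\beta)\chi_\beta(z)$, which is exactly the definition of $f|_{a+V}(z)$.

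The one point that deserves a line of care — and the closest thing to an obstacle — is the identity $\commutant(\commutant(V))=V$: this uses that the symplectic inner product on $\FF_2^{2n}$ is non-degenerate so that dimensions add up correctly, and it is what guarantees the projected Fourier support is exactly $a+V$ rather than something larger. Everything else is bookkeeping with characters, and no case analysis is needed since the computation is uniform in $z$ and $a$ (the $\beta=0$ or $a=0$ cases are not special here). I would close by remarking that this is the symplectic analogue of the familiar Boolean-Fourier fact that averaging $f$ over a coset of a subspace isolates the Fourier mass on the dual coset.
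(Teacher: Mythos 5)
Your proof is correct and follows essentially the same route as the paper's: expand $f(x+z)$ in its symplectic Fourier series, swap the sums, apply Fact~\ref{fact:sum_over_symplectic_subspace} to the subspace $\commutant(V)$, and collapse to $\sum_{\beta\in a+V}\widebreve f(\beta)\chi_\beta(z)$. The only difference is that you explicitly justify the double-commutant identity $\commutant(\commutant(V))=V$ via non-degeneracy of the symplectic form, a step the paper uses silently when it writes the indicator as $[a+b\in V]$.
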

\begin{proof}
Expanding the right hand side, we have
\begin{align*}
    \Exp_{x \in \commutant(V)}\left[f(x+z) \chi_a(x) \right]
     &= \frac{1}{|\commutant(V)|} \sum_{x \in \commutant(V)}\left[ \sum_{b \in \FF_2^{2n}} \widebreve{f}(b) (-1)^{[b,x+z]}  \right] (-1)^{[a,x]} \\
     &= \frac{1}{|\commutant(V)|} \sum_{b \in \FF_2^{2n}} \widebreve{f}(b) (-1)^{[b,z]} \left[ \sum_{x \in \commutant(V)}  (-1)^{[a+b,x]}  \right]  \\
     &= \frac{1}{|\commutant(V)|} \sum_{b \in \FF_2^{2n}} \widebreve{f}(b) (-1)^{[b,z]} |\commutant(V)| [a+b \in V]  \\
     &= \sum_{b \in \FF_2^{2n}} \widebreve{f}(b) (-1)^{[b,z]} [b \in a + V]  \\
     &= \sum_{b \in a + V} \widebreve{f}(b) (-1)^{[b,z]} \\
     & = f|_{a+V}(z),
\end{align*}
where we expanded $f(x+z)$ by considering its symplectic Fourier expansion in the first equality, used Fact~\ref{fact:sum_over_symplectic_subspace} in the fourth equality and the last equality follows from the definition of the projected function. This completes the proof.
\end{proof}

We define the weight $\wt_f(a+V) = \sum_{\alpha \in a + V} \widebreve{f}(\alpha)^2$.

\begin{fact}
The weight of the function $f$ on a coset $a+V$ can be evaluated as 
$$
\wt_f(a+V) = \mathop{\Exp}_{\substack{x \in \FF_2^{2n},\\ z \in \commutant(V)}} \left[\chi_a(z) f(x) f(x+z) \right].
$$
This can be estimated up to $\varepsilon$-error with probability $1-\delta$ using $O(1/\varepsilon^2 \log(1/\delta))$ queries to $f$.
\end{fact}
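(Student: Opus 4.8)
The plan is to establish the identity by a short symplectic-Fourier computation, then read off the estimation claim from a Hoeffding bound. I would start from the definition $\wt_f(a+V)=\sum_{\alpha\in a+V}\widebreve f(\alpha)^2$ and expand the right-hand side of the claimed formula using the symplectic Fourier expansions $f(x)=\sum_b\widebreve f(b)\chi_b(x)$ and $f(x+z)=\sum_c\widebreve f(c)\chi_c(x)\chi_c(z)$, where $\chi_b(x)=(-1)^{[b,x]}$ is multiplicative. Pushing the two independent expectations inside, the $x$-average gives $\Exp_{x\in\FF_2^{2n}}[\chi_b(x)\chi_c(x)]=[b=c]$ (orthonormality of the characters on all of $\FF_2^{2n}$), while the $z$-average over $\commutant(V)$ gives $\Exp_{z\in\commutant(V)}[\chi_a(z)\chi_c(z)]=[a+c\in\commutant(\commutant(V))]=[a+c\in V]$ by \cref{fact:sum_over_symplectic_subspace} together with $\commutant(\commutant(V))=V$, valid since $V$ is a subspace and the symplectic form is non-degenerate. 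Combining, the double sum collapses to $\sum_{c\in a+V}\widebreve f(c)^2=\wt_f(a+V)$.

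Alternatively, and more in the spirit of what precedes, one can note that Parseval's identity for the symplectic Fourier transform gives $\wt_f(a+V)=\sum_\alpha\widebreve{f|_{a+V}}(\alpha)^2=\Exp_{x}\big[(f|_{a+V}(x))^2\big]$, and then invoke \cref{fact:projection_symp} to write $f|_{a+V}(x)=\Exp_{y\in\commutant(V)}[f(x+y)\chi_a(y)]$. Squaring and expanding yields $\wt_f(a+V)=\Exp_{x}\Exp_{y,y'\in\commutant(V)}[f(x+y)f(x+y')\chi_a(y)\chi_a(y')]$; substituting $u=x+y$ (uniform over $\FF_2^{2n}$ since $x$ is) and $z=y+y'$ (uniform over $\commutant(V)$ since $y'$ is, $\commutant(V)$ being a subspace), and using $\chi_a(y)\chi_a(y')=\chi_a(z)$ and $x+y'=u+z$ in characteristic $2$, gives exactly $\Exp_{u\in\FF_2^{2n},\,z\in\commutant(V)}[\chi_a(z)f(u)f(u+z)]$.

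For the estimation statement, I would observe that the random variable $\chi_a(z)f(u)f(u+z)$ lies in $[-1,1]$ because $f$ takes values in $[-1,1]$. Drawing $m$ independent samples of $(u,z)$ with $u$ uniform over $\FF_2^{2n}$ and $z$ uniform over $\commutant(V)$, each using two queries $f(u)$ and $f(u+z)$, and averaging, \cref{lem:hoeffding} bounds the probability that the empirical mean deviates from $\wt_f(a+V)$ by more than $\eps$ by $2\exp(-m\eps^2/2)$; taking $m=O(\log(1/\delta)/\eps^2)$ makes this at most $\delta$, for a total of $O(\log(1/\delta)/\eps^2)$ queries.

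I do not expect a genuine obstacle here: the only points requiring care are the two changes of variables in characteristic $2$ (checking that the relevant distributions remain uniform) and the identity $\commutant(\commutant(V))=V$, which needs $V$ to be a subspace — already part of the hypothesis.
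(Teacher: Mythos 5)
Your proposal is correct, and your second derivation (Parseval plus \cref{fact:projection_symp}, followed by the substitutions $u=x+y$ and $z=y+y'$) is exactly the argument the paper gives, including the Hoeffding-based estimation step. The first, direct character-orthogonality computation is an equivalent alternative but adds nothing beyond the paper's route.
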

\begin{proof}
Using Parseval's theorem and Fact~\ref{fact:projection_symp}, we have that
\begin{align*}
    \wt_f(a+V) = \Exp_{w \in \FF_2^{2n}} \left[f|_{a+V}(w)^2 \right] &= \Exp_{w \in \FF_2^{2n}} \Exp_{y_1,y_2 \in \commutant(V)} \left[f(y_1+w)f(y_2+w) \chi_a(y_1+y_2) \right] \\
    &=  \Exp_{y_1,y_2 \in \commutant(V)} \Exp_{x \in \FF_2^{2n}}\left[f(x)f(y_1+y_2+x) \chi_a(y_1+y_2) \right] \\
    &= \Exp_{z \in \commutant(V)} \Exp_{x \in \FF_2^{2n}} \left[f(x)f(x+z) \chi_a(z) \right],
\end{align*}
where in the second line we rewrote $x = y_1 + w$, and the third line we rewrote $z = y_1 + y_2$ and used that $\commutant(V)$ is a subspace.
Estimating the weight on $a+V$ requires queries to $f$. The query complexity follows from the Hoeffding bound (\cref{lem:hoeffding}) and that the term inside the expectation lies in $[-1,1]$.
\end{proof}

\section{Testing and learning with quantum memory}
In this section, we  give our testing and learning algorithm for local Hamiltonians.
\subsection{Testing local Hamiltonians}
We now state our locality testing algorithm and prove its guarantees.

\begin{algorithm}
\textbf{Input:} Query access to the time evolution of $U(t)=e^{-itH}$, closeness and farness parameters $\eps_1,\ \eps_2\in (0,1)$, locality parameter $k\in\mathbb N$ and failure parameter $\delta\in (0,1)$
\begin{algorithmic}[1]
    \State Set $T=O(\log(1/\delta)/(\eps_2-\eps_1)^4)$
    \State Let $t=(\eps_2-\eps_1)/(3c)$ and $U=U(t)$
    \State Initialize $\alpha_k'=0$
    \For{$i=1,...,T$}
        \State Perform Pauli sampling from $U$. Let $x\in\{0,1,2,3\}^n$ be the outcome.
        \If{ $|x|>k$}
            \State $\alpha_k'\leftarrow\alpha_k'+1/T$
        \EndIf
    \EndFor
    Set $\alpha_{k}''=0$
    \For{$i=1,...,T$}
        \State Perform Pauli sampling from $U$. Let $x\in\{0,1,2,3\}^n$ be the outcome.
        \State If $|x|>k$, $\alpha_k''\leftarrow\alpha_k''+1/T$
    \EndFor
\end{algorithmic}
\textbf{Output}: If $\alpha_k'\geq (3/4)(\eps_2-\eps_1)^2$ or $\alpha_k''\geq (\eps_2-\eps_1)(\eps_1+2\eps_2)/(9c)-(\eps_2-\eps_1)^2/(18c)$ output that $H$ is far from local, and close to local otherwise
\caption{Locality tester}\label{algo:localitytesting}
\end{algorithm}

\begin{theorem}\label{theo:localitytesting}
    \cref{algo:localitytesting} solves the locality testing problem (\cref{prob:localitytesting} with the property of being $k$-local) with probability $\geq 1-\delta$, by making $O(1/(\eps_2-\eps_1)^4\cdot\log(1/\delta))$ queries to the evolution operator and with $O(1/(\eps_2-\eps_1)^3\cdot\log(1/\delta))$ total evolution time.  
\end{theorem}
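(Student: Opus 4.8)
The plan is to interpret each round of Pauli sampling in the algorithm (via \cref{fact:bellsamplingU}) as drawing an i.i.d.\ sample $x\in\{0,1,2,3\}^n$ from the distribution $\{|\widehat{U}_x|^2\}_x$ associated with $U=U(t)$, where $t=(\eps_2-\eps_1)/(3c)$. The only quantity that matters is
\[
p:=\Pr_x[\,|x|>k\,]=\sum_{|x|>k}|\widehat{U}_x|^2=\norm{U_{>k}}_2^2,
\]
and the counters $\alpha_k',\alpha_k''$ computed by the algorithm are exactly the empirical means of $T$ i.i.d.\ $\mathrm{Bernoulli}(p)$ indicators; the thresholds in the output line implement a test for which side of the dichotomy $p$ lies on. So the whole proof reduces to: (a)~the structural separation of $p$, and (b)~a sample-complexity bound for deciding it.

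For (a), apply \cref{lem:testingdicotomy} with $\alpha=t=(\eps_2-\eps_1)/(3c)$ and square: if $H$ is $\eps_1$-close to $k$-local then $p\le a:=\big((\eps_2-\eps_1)(2\eps_1+\eps_2)/(9c)\big)^2$, and if $H$ is $\eps_2$-far then $p\ge b:=\big((\eps_2-\eps_1)(\eps_1+2\eps_2)/(9c)\big)^2$. Using $(\eps_1+2\eps_2)^2-(2\eps_1+\eps_2)^2=3(\eps_1+\eps_2)(\eps_2-\eps_1)$ gives a gap $b-a=(\eps_1+\eps_2)(\eps_2-\eps_1)^3/(27c^2)$. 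Two quantitative facts then control the sample count: first, since $\eps_1+2\eps_2\le 2(\eps_1+\eps_2)$ and $b\ge 4(\eps_2-\eps_1)^4/(81c^2)$, one has $b/(b-a)^2=O(1/(\eps_2-\eps_1)^4)$ and $1/(b-a)=O(1/(\eps_2-\eps_1)^4)$ (the latter using $\eps_1+\eps_2\ge\eps_2-\eps_1$); second, from the first-order Taylor expansion \eqref{eq:TaylorOrder1}, $U_{>k}=-itH_{>k}+ct^2(R_1)_{>k}$ because the identity is $0$-local, so $p=\norm{U_{>k}}_2^2\le(t+ct^2)^2=O((\eps_2-\eps_1)^2)$ unconditionally; in particular the variance of each $\mathrm{Bernoulli}(p)$ indicator is at most $p$, which is small.

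For (b), the key is to use a variance-aware tail bound rather than plain Hoeffding, which would only give the sixth power. I would prove the auxiliary fact that $T=O\big(b\log(1/\delta)/(b-a)^2\big)$ samples suffice to distinguish ``$p\le a$'' from ``$p\ge b$'': fix $\theta=(a+b)/2$; the test errs only if the empirical mean deviates from $p$ by at least $(b-a)/2$ (when $p\le a$) or by at least $p-\theta\ge(b-a)/2$ (when $p\ge b$). Applying Bernstein's inequality (\cref{lem:Bernstein}) with variance proxy $p(1-p)\le p$, and splitting the latter regime into $p=O(b)$ (deviation target $\asymp b-a$, and variance $O(b)$) and $p\gg b$ (deviation target $\asymp p$, and $1/p\le 1/b$), the failure probability in every case is $\exp\big(-\Omega(T(b-a)^2/b)\big)$, so $T=O\big(b\log(1/\delta)/(b-a)^2\big)$ suffices. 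Plugging in $a,b$ from step (a) gives $T=O(\log(1/\delta)/(\eps_2-\eps_1)^4)$, which equals the number of queries (one per round). Finally each query evolves for time $t=(\eps_2-\eps_1)/(3c)$, so the total evolution time is $O(T\cdot t)=O(\log(1/\delta)/(\eps_2-\eps_1)^3)$.

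The main obstacle is precisely this last quantitative matching. A naive concentration argument loses two powers of $(\eps_2-\eps_1)$; getting down to the fourth power requires observing that the estimated quantity $\norm{U_{>k}}_2^2$ is itself $O((\eps_2-\eps_1)^2)$ (so the Bernoulli variance is correspondingly small) and handling the slightly delicate case split on the magnitude of $p$ in the far case. Setting up the sampler, invoking \cref{lem:testingdicotomy}, and tracking the universal constant $c$ are all routine.
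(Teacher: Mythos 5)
Your proposal is correct, and it follows the paper's skeleton — Pauli sampling from $U(t)$ at $t=(\eps_2-\eps_1)/(3c)$ via \cref{fact:bellsamplingU}, the dichotomy of \cref{lem:testingdicotomy}, and a variance-aware (Bernstein) concentration step to reach the fourth power — but your execution of the concentration step is genuinely different and, in fact, tighter than the paper's. The paper runs a two-stage argument: a first Hoeffding pass to certify empirically that $p:=\norm{U_{>k}}_2^2\leq(\eps_2-\eps_1)^2$ (outputting ``far'' otherwise), followed by a Bernstein pass that is claimed to estimate $p$ to additive error $\bigl((\eps_2-\eps_1)^2/(18c)\bigr)^2=\Theta((\eps_2-\eps_1)^4)$ from $O((\eps_2-\eps_1)^{-4}\log(1/\delta))$ samples. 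As written that last claim is too strong: with per-sample variance $\Theta((\eps_2-\eps_1)^2)$, additive error $(\eps_2-\eps_1)^4$ would cost $\Theta((\eps_2-\eps_1)^{-6})$ samples. The correct accounting — which is exactly what you do — is that one never needs additive precision $(\eps_2-\eps_1)^4$; one only needs to distinguish $p\leq a$ from $p\geq b$ with $b-a=(\eps_1+\eps_2)(\eps_2-\eps_1)^3/(27c^2)$ and relevant variance $\Theta(b)=\Theta((\eps_1+\eps_2)^2(\eps_2-\eps_1)^2)$, so the $(\eps_1+\eps_2)$ factors cancel in $b/(b-a)^2$ and one lands on $(\eps_2-\eps_1)^{-4}$. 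Your observation that $p\leq (t+ct^2)^2=O((\eps_2-\eps_1)^2)$ holds unconditionally from the Taylor expansion (since $(\Id_{2^n})_{>k}=0$) also makes the paper's first stage unnecessary, collapsing the test to a single Bernoulli hypothesis test with a clean case split on the magnitude of $p$ in the far regime. Your version thus both simplifies the algorithm's analysis and repairs the quantitative looseness in the paper's final step; the only cosmetic mismatch is that you analyze a single-threshold test rather than the literal two-threshold output rule of \cref{algo:localitytesting}, which is harmless since the theorem asserts solvability of \cref{prob:localitytesting} at the stated cost.
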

\begin{proof}
    Let $t=(\eps_2-\eps_1)/(3c)$ and let $U=U(t)$. For notational simplicity, let $\alpha_k:= \norm{U_{>k}}^2_2$. We will first estimate $\alpha_k$ upto error $(\eps_2-\eps_1)^2/4$. To do that we sample from $\{|\widehat{U}_x|^2\}_x$ using Fact~\ref{fact:bellsamplingU} a total of $T=O(1/(\eps_2-\eps_1)^4\log(1/\delta))$ times, which can be done with $T$ queries. 
    If $x_1,\dots, x_{T}$ are the outcomes of those samples, we define our estimate as 
    $$
    \alpha'_k:=\frac{1}{T}\sum_{i\in [T]}[|x_i|>k].
    $$
    By the Hoeffding bound, we have that indeed $|\alpha'_k-\alpha_k|\leq (\eps_2-\eps_1)^2/4$ with probability $\geq 1-\delta/2$. 
    
    If $\alpha'_k\geq (3/4)(\eps_2-\eps_1)^2,$ then $\alpha_k\geq (\eps_2-\eps_1)^2/2$, so by~\cref{lem:testingdicotomy} we  conclude that $H$ is far from $k$-local. Otherwise, if $\alpha'_k\leq (3/4)(\eps_2-\eps_1)^2,$ then $\alpha_k\leq (\eps_2-\eps_1)^2$. Now we take again $T$ samples from $y_1,\dots,y_T$ from $\{|\widehat{U}_x|^2\}_x$ and define a new estimate 
    $$
    \alpha''_k=\frac{1}{T}\sum_{i\in [T]} [|y_i|>k].
    $$
    By definition $\alpha''_k$ equals $\alpha_k$ in expectation. Furthermore, $\alpha_k$ is the empirical average of random variables whose variance is considerably small, because 
    \begin{align*}
        \mathbb E[[|y|>k]^2]=\mathbb E[[|y|>k]]=\norm{U_{>k}}_2^2\leq (\eps_2-\eps_1)^2. 
    \end{align*}
    Then, an application of Bernstein's inequality (\cref{lem:Bernstein}) shows that $\alpha''_k$ approximates ${\norm{U_{>k}}_2^2}$ up to error $((\eps_2-\eps_1)^2/(18c))^2$ with success probability $1-\delta/2$. At this point, using our structure \cref{lem:testingdicotomy}, this is sufficient for testing $k$-locality.
\end{proof}

We remark that the algorithm for testing locality can be used in more generality for testing if the support of the Hamiltonians is a given $\mathcal{S}\subseteq \{0,1,2,3\}^n$. Also, by a union bound one can test for $M$ supports $\mathcal S_1,\dots,\mathcal{S}_M$ by paying a factor $\log(M).$ 
\begin{theorem}\label{theo:supporttesting}
    Let $H$ be a $n$-qubit Hamiltonian and let $\mathcal S_1,\dots, \mathcal{S}_M\subseteq\{0,1,2,3\}^n$. Then, with $O(1/(\eps_2-\eps_1)^4\log(M/\delta))$ queries and $O(1/(\eps_2-\eps_1)^3\log(M/\delta))$ total evolution time one can simultaneously, for every $i\in[M]$, test if $H$ is $\eps_1$-close or or $\eps_2$-far from being supported on $\mathcal{S}_i$.
\end{theorem}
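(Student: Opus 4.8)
The plan is to reduce Theorem~\ref{theo:supporttesting} to Theorem~\ref{theo:localitytesting} by observing that the locality tester never actually uses the \emph{locality} structure of the set $\{x : |x| \le k\}$ — it only uses that $\{x : |x| > k\}$ is a fixed, known subset of $\{0,1,2,3\}^n$ against which we can check membership of a Pauli-sampling outcome in classical post-processing. First I would restate the key structural fact underlying Lemma~\ref{lem:testingdicotomy}: for \emph{any} fixed $\mathcal{S} \subseteq \{0,1,2,3\}^n$, writing $H_{\bar{\mathcal{S}}} = \sum_{x \notin \mathcal{S}} \lambda_x \sigma_x$ and similarly $U(\alpha)_{\bar{\mathcal{S}}}$, the distance of $H$ to the space of Hamiltonians supported on $\mathcal{S}$ is exactly $\norm{H_{\bar{\mathcal{S}}}}_2$ (because truncating to $\mathcal{S}$ is the closest such Hamiltonian, just as $H_{\le k}$ was for locality), and the Taylor-expansion argument of Lemma~\ref{lem:testingdicotomy} goes through verbatim with $>k$ replaced by $\bar{\mathcal{S}}$, since it only used the triangle inequality on the Pauli $2$-norm restricted to a subset and the bound $\norm{R_{\bar{\mathcal{S}}}}_2 \le \norm{R}_2 \le 1$. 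So $H$ $\eps_1$-close to supported-on-$\mathcal{S}$ gives $\norm{U(\alpha)_{\bar{\mathcal{S}}}}_2 \le (\eps_2-\eps_1)(2\eps_1+\eps_2)/(9c)$, and $\eps_2$-far gives $\norm{U(\alpha)_{\bar{\mathcal{S}}}}_2 \ge (\eps_2-\eps_1)(\eps_1+2\eps_2)/(9c)$.

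Next I would run \cref{algo:localitytesting} essentially unchanged, but in each Pauli-sampling iteration check, for every $i \in [M]$ simultaneously, whether the outcome $x$ lies in $\bar{\mathcal{S}}_i$; this costs no extra queries, only classical bookkeeping of $M$ counters. The point is that the \emph{samples themselves} are reused across all $M$ tests — one stream of Pauli samples from $U(t)$ feeds all $M$ estimators $\widehat{\norm{U_{\bar{\mathcal{S}}_i}}_2^2}$ in parallel. To get a uniform guarantee over all $i$, I would boost the per-test failure probability: set the number of samples to $T = O(\log(M/\delta)/(\eps_2-\eps_1)^4)$ so that, by Hoeffding (and Bernstein in the second phase, exactly as in the proof of Theorem~\ref{theo:localitytesting}), each of the $M$ estimates is accurate to within $(\eps_2-\eps_1)^2/4$ (resp.\ $((\eps_2-\eps_1)^2/(18c))^2$) with probability $\ge 1 - \delta/(2M)$, and then a union bound over the $2M$ events (two phases per test) gives overall success probability $\ge 1-\delta$. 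The evolution time is $T$ queries each of duration $t = (\eps_2-\eps_1)/(3c)$, giving $O(\log(M/\delta)/(\eps_2-\eps_1)^3)$ total, matching the claim.

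Concretely the write-up would be: (1) a one-line generalization of Lemma~\ref{lem:testingdicotomy} to arbitrary fixed supports $\mathcal{S}$, noting the proof is identical; (2) the observation that \cref{algo:localitytesting}'s post-processing step ``if $|x| > k$'' can be replaced by ``if $x \notin \mathcal{S}_i$'' run in parallel for all $i$; (3) the union-bound accounting with $T = O(\log(M/\delta)/(\eps_2-\eps_1)^4)$. I do not anticipate a genuine obstacle here — the content is entirely in recognizing that Theorem~\ref{theo:localitytesting} is ``support-agnostic'' — so the main care is purely organizational: making sure the two-phase structure of the original tester (the coarse Hoeffding estimate followed by the refined Bernstein estimate) is correctly replicated for each $\mathcal{S}_i$ and that the $\log M$ factor enters only once (in $T$), not multiplicatively in the evolution time per query. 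One genuinely minor subtlety worth a sentence: the sets $\bar{\mathcal{S}}_i$ need not be nested or disjoint, but this is irrelevant since each test is analyzed independently and only combined at the very end via the union bound.
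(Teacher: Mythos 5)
Your proposal matches the paper's (implicit) proof exactly: the paper states \cref{theo:supporttesting} with only the preceding remark that the locality tester's post-processing ``if $|x|>k$'' is support-agnostic and that a union bound over the $M$ tests costs a $\log M$ factor, which is precisely what you spell out. The one subtlety worth a sentence (which the paper also elides) is that before invoking the Taylor argument one should replace $\mathcal{S}_i$ by $\mathcal{S}_i\cup\{0^{2n}\}$ — harmless since $\Tr[H]=0$ makes the two testing problems identical — so that $\Id_{\bar{\mathcal{S}}_i}=0$ and the inequality $\norm{U(\alpha)_{\bar{\mathcal{S}}_i}}_2\le \alpha\norm{H_{\bar{\mathcal{S}}_i}}_2+c\alpha^2$ really does carry over verbatim from \cref{lem:testingdicotomy}.
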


\noindent \cref{theo:localitytesting} is one case of \cref{theo:supporttesting} where $M=1$ and $\mathcal{S}_1=\{x\in\{0,1,2,3\}^n:|x|\leq k\}.$

\subsection{Testing sparse Hamiltonians}
Now we state our sparsity testing algorithm and prove its guarantees. 
\begin{algorithm}
\textbf{Input:} Query access to the time evolution of $U(t)=e^{-itH}$, closeness and farness parameters $\eps_1,\ \eps_2\in (0,1)$, sparsity parameter $s\in\mathbb N$ and failure parameter $\delta\in (0,1)$
\begin{algorithmic}[1]
    \State Set $T=O(s^6/(\eps_2^2-\eps_1^2)^{6}\cdot\log(1/\delta))$
    \State Let $t=O((\eps_2^2-\eps_1^2)/s)$ and $U=U(t)$
    \State Perform Pauli sampling from $U$ a total of $T$ times. Let $(|\alpha_x|^2)_{x\in\mathcal \{0,1,2,3\}^n}$ the empirical estimate of $(|\widehat U_x|^2)_{x}$ obtained this way. 
    \State Let $|\alpha_{x_1}|^2,\dots,|\alpha_{x_s}|^2$ the $s$-biggest elements of $(|\alpha_x|^2)_{x\in\{0,1,2,3\}^n-\{0^n\}}$
    \State Set $\Gamma=|\alpha_{0^n}|^2+\sum_{i\in [s]}|\alpha_{x_i}|^2.$
\end{algorithmic}
\textbf{Output}: If $\Gamma\geq 1-\eps_1^2\frac{(\eps_2^2-\eps_1^2)^2}{s^2}-\frac{1}{2}\frac{(\eps^2_2-\eps_1^2)^3}{s^2}$ output that $H$ is close to sparse, and far from sparse otherwise
\caption{Fully tolerant sparsity tester}\label{algo:sparsitytesting}
\end{algorithm}

\begin{theorem}\label{theo:sparsitytesting}
    \cref{algo:sparsitytesting} solves the $s$-sparsity testing problem with probability $\geq 1-\delta$, by making $O(s^6/(\eps_2^2-\eps_1^2)^{6}\cdot\log(1/\delta))$ queries to the evolution operator and with $O(s^{5}/(\eps_2^2-\eps_1^2)^{5}\cdot\log(1/\delta))$ total evolution time.  
\end{theorem}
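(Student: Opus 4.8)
The plan is to reduce the whole test to estimating the single scalar $\topenergy(t;s)=|\widehat{U}_{x_0}(t)|^2+\sum_{i\in[s]}|\widehat{U}_{x_i}(t)|^2$ for $U=U(t)$ at a suitably short time $t$, and then to read off ``close'' versus ``far'' from its value via the dichotomy of \cref{lem:sparsitydiscrepancy}. Concretely, take $t=\Theta((\eps_2^2-\eps_1^2)/s)$, with the hidden constant small enough that the second-order Taylor remainder term $O(t^3 s)$ of \cref{lem:sparsitydiscrepancy} is at most a fixed fraction (say one quarter) of the main term $(\eps_2^2-\eps_1^2)t^2$. Then in the close case $\topenergy(t;s)\geq 1-\eps_1^2 t^2-O(t^3 s)$ while in the far case $\topenergy(t;s)\leq 1-\eps_2^2 t^2+O(t^3 s)$, so the two admissible intervals are separated by a gap of width $\Theta\big((\eps_2^2-\eps_1^2)^3/s^2\big)$. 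With $t=(\eps_2^2-\eps_1^2)/s$ one has $t^2=(\eps_2^2-\eps_1^2)^2/s^2$, so the threshold $1-\eps_1^2(\eps_2^2-\eps_1^2)^2/s^2-\tfrac12(\eps_2^2-\eps_1^2)^3/s^2$ appearing in \cref{algo:sparsitytesting} is exactly $1-\eps_1^2 t^2-\tfrac12(\eps_2^2-\eps_1^2)t^2$, which lies strictly inside the gap with $\Theta\big((\eps_2^2-\eps_1^2)^3/s^2\big)$ slack on each side; hence it suffices to estimate $\topenergy(t;s)$ additively to a constant fraction of $(\eps_2^2-\eps_1^2)^3/s^2$.

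Next I would argue that the quantity $\Gamma$ produced by \cref{algo:sparsitytesting} is such an estimate. Pauli sampling from $U$ (\cref{fact:bellsamplingU}) draws from $\{|\widehat{U}_x|^2\}_x$, and by \cref{thm:canoneproof} a sample of size $T=O(\log(1/\delta)/\eta^2)$ yields an empirical distribution $\{|\alpha_x|^2\}_x$ with $\sup_x\big||\widehat{U}_x|^2-|\alpha_x|^2\big|\leq\eta$ with probability $\geq 1-\delta$; note this automatically locates the large coefficients, so no prior knowledge of the support is needed. Condition on this event. Let $A$ be the set of indices of the $s$ largest $|\widehat{U}_x|^2$ over $x\neq 0^{2n}$ and $B$ the $s$ indices of largest $|\alpha_x|^2$. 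A standard exchange argument — using $\sum_{x\in B}|\alpha_x|^2\geq\sum_{x\in A}|\alpha_x|^2$ (optimality of $B$ for the empirical weights), $\sum_{x\in A}|\widehat{U}_x|^2\geq\sum_{x\in B}|\widehat{U}_x|^2$ (optimality of $A$ for the true weights), and the $\ell_\infty$ bound on each of the at most $s+1$ relevant coefficients (including $0^{2n}$) — yields $|\Gamma-\topenergy(t;s)|\leq(s+1)\eta$. Choosing $\eta$ a small enough polynomial in $(\eps_2^2-\eps_1^2)/s$ so that $(s+1)\eta$ stays below the threshold slack, the comparison of $\Gamma$ against the threshold in the output step then correctly distinguishes the two cases with probability $\geq 1-\delta$.

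Finally, collecting parameters: each Pauli-sampling round uses a single application of $U(t)\otimes\Id_{2^n}$, so the query complexity is $T=O(\log(1/\delta)/\eta^2)$ and the total evolution time is $Tt$; substituting $t=\Theta((\eps_2^2-\eps_1^2)/s)$ and the chosen $\eta$ gives the claimed $O\big(s^6/(\eps_2^2-\eps_1^2)^{12}\cdot\log(1/\delta)\big)$ queries and $O\big(s^5/(\eps_2^2-\eps_1^2)^{10}\cdot\log(1/\delta)\big)$ total evolution time (the exact exponents following from a careful accounting of the constants in \cref{lem:sparsitydiscrepancy}). The main obstacle is not conceptual but the interaction of two error sources that jointly force the high polynomial degree: \cref{lem:sparsitydiscrepancy} only pins down $\topenergy(t;s)$ up to the cubic Taylor error $O(t^3 s)$, which forces $t$ as small as $\Theta((\eps_2^2-\eps_1^2)/s)$ and so shrinks the usable gap to $\Theta((\eps_2^2-\eps_1^2)^3/s^2)$, while the top-$s$ exchange step costs an extra factor $s$ in $|\Gamma-\topenergy(t;s)|\leq(s+1)\eta$; balancing these against the $1/\eta^2$ sample cost of \cref{thm:canoneproof} is the only delicate point. (One should also check the identity coefficient is handled properly: since $\Tr[H]=0$, $\widehat{U}_0(t)=1-\tfrac{t^2}{2}\|H\|_2^2+O(t^3)$, so it always enters both $\Gamma$ and $\topenergy(t;s)$ and carries the bulk of the weight, exactly as in the proof of \cref{lem:sparsitydiscrepancy}.)
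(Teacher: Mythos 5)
Your proposal is correct and follows essentially the same route as the paper's proof: the same choice $t=\Theta((\eps_2^2-\eps_1^2)/s)$, the same reduction via \cref{lem:sparsitydiscrepancy} to estimating $\topenergy(t;s)$ to accuracy $\Theta((\eps_2^2-\eps_1^2)^3/s^2)$, the same use of \cref{fact:bellsamplingU} and \cref{thm:canoneproof} to get an $\ell_\infty$-accurate empirical Pauli distribution, and the same top-$s$ exchange argument costing a factor of $s+1$. The only (cosmetic) differences are that you carry an extra parameter $\eta$ rather than fixing $\eta=\eps/(2s+1)$ outright, and your final accounting, like the paper's own arithmetic, actually yields $O(s^6/(\eps_2^2-\eps_1^2)^6)$ queries, which is subsumed by the stated $O(s^6/(\eps_2^2-\eps_1^2)^{12})$ bound.
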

\begin{proof}
Let $t=O((\eps_2^2-\eps_1^2)/s)$. By \cref{lem:sparsitydiscrepancy} we have that if $H$ is $\eps_1$-close to being sparse, then 
    \begin{equation*}
        \topenergy(t;s) \geq 1-\eps_1^2\frac{(\eps_2^2-\eps_1^2)^2}{s^2}-\frac{1}{3}\frac{(\eps^2_2-\eps_1^2)^3}{s^2},
    \end{equation*}
    while if $H$ is $\eps_2$-far from $s$-sparse, then 
    \begin{equation*}
        \topenergy(t;s) \leq 1-\eps_2^2\frac{(\eps_2^2-\eps_1^2)^2}{s^2}+\frac{1}{3}\frac{(\eps^2_2-\eps_1^2)^3}{s^2}.
    \end{equation*}
    From here, it follows that to test it suffices to estimate $\topenergy(t;s)$ up to error 
    \begin{equation*}
        \eps=\frac{1}{2}\left(1-\eps_1^2\frac{(\eps_2^2-\eps_1^2)^2}{s^2}-\frac{1}{3}\frac{(\eps^2_2-\eps_1^2)^3}{s^2}-\left\{1-\eps_2^2\frac{(\eps_2^2-\eps_1^2)^2}{s^2}+\frac{1}{3}\frac{(\eps^2_2-\eps_1^2)^3}{s^2}\right\}\right)= \frac{(\eps_2^2-\eps_1^2)^3}{6s^2}.
    \end{equation*}
    To do that we will obtain an estimate $(\{|\alpha_x|^2\}_x$ of $\{|\widehat{U}_x|^2\}_x$ and use it to approximate $\topenergy(t;s)$. Using Fact~\ref{thm:canoneproof}, we  obtain an empirical distribution $\{|\alpha_x|^2\}_x$ that is obtained after $T=O(s^2\log(1/\delta)/\eps^2)$ samples from $\{|\widehat{U}_x|^2\}_x$ (which can be performed with $T$ queries to $U(t)$ thanks to \cref{fact:bellsamplingU})  satisfies that 
    \begin{equation}\label{eq:goodapproxsparsitytesting}
        \big||\alpha_x|^2-|\widehat{U}_x|^2\big|\leq \frac{\eps}{2s+1}
    \end{equation}
    for all $x\in\01^{2n}$ with probability  $\geq 1-\delta$. We assign new labels $y_0,y_1,\dots, y_{4^n-1}$ to $\01^{2n}$ in a way such that $|\alpha_{y_0}|^2=|\alpha_{0^n}|^2$ and $|\alpha_{y_1}|^2\geq \dots\geq |\alpha_{y_{4^n-1}}|^2$. Now, we define our estimate for $\topenergy(t;s)$ as
    $$
    {\topenergy'(t;s)}=|\alpha_{y_0}(t)|^2+2\sum_{i\in [s]}|\alpha_{y_i}(t)|^2.
    $$
    It only remains to show that ${\topenergy'(t;s)}$ $\eps$-approximates $\topenergy(t;s)$. We will see that in two steps. First, 
    \begin{align*}
        {\topenergy'(t;s)}&=|\alpha_{y_0}(t)|^2+2\sum_{i\in [s]}|\alpha_{y_i}(t)|^2\\
                &\geq |\alpha_{x_0}(t)|^2+2\sum_{i\in [s]}|\alpha_{x_i}(t)|^2\\
                &\geq | u_{x_0}(t)|^2+2\sum_{i\in [s]}| u_{x_i}(t)|^2-\eps\\
                &=\topenergy(t;s)-\eps,
    \end{align*}
    where the second line is true by definition of $y_0,\dots,y_{4^n-1}$ and the third line is true because \cref{eq:goodapproxsparsitytesting}. Switching the roles of ${\topenergy'(t;s)}$ and $\topenergy(t;s)$, one can prove that $\topenergy(t;s) \geq {\topenergy'(t;s)}-\eps$.

    \textbf{Complexity analysis.} We have queried $U(t)$ a total of $T=O(s^2\log(1/\delta)/\eps^2)$ times with $\eps=(\eps_2^2-\eps_1^2)^3/6s^2$ and $t=O((\eps_2^2-\eps_1^2)/s)$, so the number of queries is $$O\left(\frac{s^6}{(\eps_2^2-\eps_1^2)^{6}}\log(1/\delta)\right)$$
    and the total evolution time $$O\left(\frac{s^{5}}{(\eps_2^2-\eps_1^2)^{5}}\log(1/\delta)\right).$$    
\end{proof}
Furthermore, for the regime where $\eps_1=O(\eps_2/s^{0.5})$ we propose a more efficient testing algorithm.

\begin{algorithm}
\textbf{Input:} Query access to the time evolution of $U(t)=e^{-itH}$, sparsity parameter $s\in\mathbb N$, closeness and farness parameters $\eps_1,\ \eps_2\in (0,1)$ satisfying $\eps_1=O(\eps_2/\sqrt{s})$ and failure parameter $\delta\in (0,1)$
\begin{algorithmic}[1]
    \State Set $T=O(s^2/\eps_2^4\cdot\log(1/\delta))$ 
    \State Let $t=\Omega(\eps_2/\sqrt{s})$ and $U=U(t)$
    \State Perform Pauli sampling from $U$ a total of $T$ times. Let $\mathcal X$ the set of sampled Paulis.
\end{algorithmic}
\textbf{Output}: If $|\mathcal X-\{0^{2n}\}|\leq s$ output that $H$ is close to sparse, and far from sparse otherwise
\caption{Not that tolerant sparsity tester}\label{algo:nottolsparsitytesting}
\end{algorithm}
\begin{theorem}\label{theo:sparsitytestnottol}
    Let $H$ be a traceless Hamiltonian with $\norm{H}_{\op}\leq 1$. \cref{algo:nottolsparsitytesting} solves the $s$-sparsity testing problem with probability $\geq 1-\delta$ for $\eps_1=O(\eps_2/s^{0.5})$. The algorithm makes $O(s^2/\eps_2^4\cdot\log(1/\delta))$ queries to the evolution operator and uses $O(s^{1.5}/\eps_2^{3}\cdot\log(1/\delta))$ total evolution~time.  
\end{theorem}

    \begin{proof}
        Let $C>1$ be a constant that appears in the first-order Taylor expansion, $$U(t)=\Id-itH+Ct^2R_1(t)$$ with $\norm{R_1}_{\op}\leq 1$ for $t\in (0,1).$ We will assume that $\delta=1/3,$ as the case $\delta\in (0,1/3)$ follows by a standard majority voting argument. 
        
        \cref{algo:nottolsparsitytesting} is simple. One just performs Pauli sampling of $U=U(t)$ a number of  $T$ times, for some $t$ and $T$ to be determined later. Let $\mathcal{X}$ be the labels of the Pauli strings sampled in this process. If $|\mathcal X-\{0^{2n}\}|\leq s$ we output that $H$ is sparse, and otherwise we output that is far from sparse. It remains to analyze the correctness. 

        \textbf{Correctness.} In the case that $H$ is $\eps_1$-close $s$-sparse, there exists $\mathcal{S}\subset \{0,1\}^{2n}$ of size $s$ where $H$ is $\eps_1$-concentrated. Then, by Taylor expansion, $$\sqrt{\sum_{x\notin (\mathcal{S}\cup \{0^{2n}\})}|\widehat U_x|^2}\leq t\sqrt{\sum_{x\notin (\mathcal{S}\cup \{0^{2n}\})}|\lambda_x|^2}+ Ct^2\leq t\eps_1+ Ct^2\leq 2Ct^2,$$
        where in the last inequality we have assumed that 
        \begin{equation}\label{condition:teps1}
            \eps_1\leq Ct.
        \end{equation}  
        Hence, the probability of sampling an element outside $\mathcal{S}\cup \{0^{2n}\}$ in one sample is at most $4C^2t^4$. Thus, the probability of not sampling an element outside $\mathcal{S}\cup \{0^{2n}\}$ in $T$ samples is at least 
        \begin{align*}
            (1-4C^2t^2)^T\geq 1-4C^2t^4T.
        \end{align*}
        In particular, if 
        \begin{align}\label{eq:conditionTtclose}
            T\leq \frac{1}{3}\frac{1}{4C^2t^4}
        \end{align}
        it will be satisfied that $|\mathcal X-\{0^{2^n}|\leq s$ with probability $\geq 2/3$, as desired. 
        
        In the case that $H$ is $\eps_2$-far from $s$-sparse, we will perform an analysis similar to the coupon collector problem. By Taylor expansion we have that for every set $\mathcal S$ of size $s$, 
        \begin{equation}\label{eq:farfromsparse}
            \sqrt{\sum_{x\notin(\mathcal S-\{0^{2n}\})} |\widehat U_x|^2}\geq \eps_2t-Ct^2\geq \frac{\eps_2t}{2},
        \end{equation}
        where we have assumed that 
        \begin{equation}\label{eq:teps2}
            Ct\leq \eps_2/2.
        \end{equation}
        Let $X_i$ the random variable that accounts for the number of samples between the $(i-1)$-th sampled non-$0^{2n}$-Pauli and the $i$-th sampled non-$0^{2n}$-Pauli. Applying \cref{eq:farfromsparse} to every $\mathcal{X}_i$, it follows that $\mathbb E[X_i]\leq 4/\eps_2^2t^2$ for every $i\in [s+1]$, so
        \begin{equation*}
            \mathbb E[X_1+\dots +X_{s+1}]\leq \frac{4(s+1)}{\eps_2^2t^2}.
        \end{equation*}
        Hence, by Markov's inequality, if 
        \begin{equation}\label{eq:Ttfar}
            T\geq \frac{\sqrt{3}4(s+1)}{\eps_2^2t^2}
        \end{equation}
        it will be satisfied that $|\mathcal X-\{0^{2n}\}|\geq s+1$ with probability $\geq 2/3$, as desired. 

        Finally, we note that we have assumed conditions \cref{condition:teps1,eq:conditionTtclose,eq:teps2,eq:Ttfar} to ensure the correctness of the algorithm. All these equations are satisfied provided that 
        \begin{align*}
            t&= \frac{\eps_2}{\sqrt{50C^2(s+1)}}=\Omega\left(\frac{\eps_2}{\sqrt{s}}\right),\\
            T&=\frac{1}{12C^2t^4}=O\left(\frac{s^2}{\eps_2^4}\right),\\
            \eps_1&\leq \frac{\eps_2}{\sqrt{50(s+1)}}=O\left(\frac{\eps_2}{\sqrt{s}}\right).
        \end{align*}
    \end{proof}

\subsection{Learning unstructured and structured Hamiltonians}
We start by showing how to efficiently learn an arbitrary $n$-qubit Hamiltonian in $\ell_\infty$ error. To do that, we propose a protocol to estimate a given set of Pauli coefficients $\mathcal X$ of a Hamiltonian via Shadow tomography. To describe the protocol, we introduce the following $2n$-qubit observables. Given $x\in\{0,1\}^{2n}$, we define 
\begin{align*}
    \mathcal R_x&:=\frac{1}{2}(\ket{\text{Bell}_{0^{2n}}}\bra{\text{Bell}_{x}}+\ket{\text{Bell}_x}\bra{\text{Bell}_{0^{2n}}}),\\
    \mathcal I_x&:=\frac{1}{2}(-i\ket{\text{Bell}_{0^{2n}}}\bra{\text{Bell}_{x}}+i\ket{\text{Bell}_x}\bra{\text{Bell}_{0^{2n}}}).
\end{align*}

\begin{algorithm}
\textbf{Input:} Query access to the time evolution of $U(t)=e^{-itH}$, target set of Pauli coefficients $\mathcal X\subseteq \{0,1\}^{2n}-\{0^n\}$, error parameter $\eps\in (0,1)$, and failure parameter $\delta\in (0,1)$
\begin{algorithmic}[1]
    \State Set $T= O(\norm{H}^4/\eps^4\cdot \log(|\mathcal X|/\delta))$ and $t_0=\Theta(\eps/\norm{H}^2)$
    \State Set $U=U(t_0)$
    \For{$j\in [T]$}
        \State Prepare $\ket{J(U)}=(U\otimes \Id_{2^n})\ket{\mathrm{Bell}_n}$
        \State Apply a uniformly random Clifford gate $C$
        \State Measure in the computational basis. Let $\ket{b_j}$ be the outcome
        \For{$x\in\mathcal X$}
            \State Let $ {\mathcal R}_{x,j}=(2^n+1)\bra{b_j}C^{-1}\mathcal R_xC\ket{b_j}$ and $ {\mathcal I}_{x,j}=(2^n+1)\bra{b_j}C^{-1}\mathcal I_xC\ket{b_j}$
        \EndFor
    \EndFor
    \For{$x\in\mathcal X$}
            \State Set $\widetilde R_x:=\mathrm{MedianOfMeans}( {\mathcal R}_{x,j})_j$ and $\widetilde I_x:=\mathrm{MedianOfMeans}( {\mathcal I}_{x,j})_j$
        \EndFor
\end{algorithmic}
\textbf{Output}: $( (\widetilde R_x+i\widetilde I_x)/(-it))_{x\in\mathcal X}$
\caption{Estimating a given set of Pauli coefficients of a Hamiltonian}\label{algo:CoeffEstimation}
\end{algorithm}

\begin{lemma}\label{lem:ShadowHamEstimation}
    Let $H$ be an $n$-qubit traceless Hamiltonian and $\mathcal{X}\subseteq \{0,1\}^{2n}$. Then, \cref{algo:CoeffEstimation} allows to estimate the Pauli coefficients corresponding to $\mathcal X$ with success probability $\geq 1-\delta$. It uses $O((\log |\mathcal X|/\delta)\norm{H}^4/\eps^4)$ queries and $O(\log(|\mathcal X|/\delta)\norm{H}^2/\eps^3)$ total evolution time. 
    
    The minimum evolution time is $\eps/\norm{H}^2$, the number of ancillas is $n$, and the time complexity is $O(\poly(n)|\mathcal X|\norm{H}^4/\eps^4\cdot \log(|\mathcal X|/\delta))$.
\end{lemma}
\begin{proof}
    \textbf{Correctness of the algorithm:} Let $t_0=\Theta(\eps/\norm{H}^2)$ and $U=U(t_0).$ As $\Tr[\mathcal R_x^2]=\Tr[\mathcal I_x^2]=2$, by \cref{theo:CliffShadows}, the numbers $\widetilde R_x$ and $\widetilde I_x$ that \cref{algo:CoeffEstimation} outputs satisfy 
    \begin{equation}\label{eq:classShad}
        |\Tr[\mathcal R_x\ket{J(U)}\bra{J(U)}]-\widetilde{ R}_x|\leq\frac{\eps^2}{\norm{H}^2},\  \ |\Tr[\mathcal I_x\ket{J(U)}\bra{J(U)}]-\widetilde{I}_x|\leq \frac{\eps^2}{\norm{H}^2},
    \end{equation}
    for every $x\in\mathcal X$ with probability $\geq 1-\delta$. By Taylor expansion, as $\lambda_{0^{2n}}=0$, we have that 
$|\widehat U_{0^{2n}}-1|\leq O(t_0^2\norm{H}^2).$  Thus,
\begin{equation}\label{eq:TaylorRxIx}
    \Tr[\mathcal R_x\ket{J(U)}\bra{J(U)}]=\frac{1}{2}(\widehat U_x\widehat U_{0^{2n}}^*+\widehat U_{0^{2n}}\widehat U_x^*)=\text{Re}(\widehat U_x\widehat U_0^*)=\text{Re}(\widehat U_x)\pm O(t_0^2\norm{H}^2),
\end{equation}
 and similarly $\Tr[\mathcal I_x\ket{J(U)}\bra{J(U)}]=\text{Im}(\widehat U_x)\pm O(t_0^2\norm{H}^2).$ Hence, combining \cref{eq:classShad,eq:TaylorRxIx} we have that 
 \begin{equation*}
     |\widehat U_x-(\widetilde{ R}_x+i\widetilde{ I}_x)|\leq \frac{\eps^2}{\norm{H}^2}+O(t_0^2\norm{H}^2)\leq O\left(\frac{\eps^2}{\norm{H}^2}\right),
 \end{equation*}
 for every $x\in\mathcal X$. Finally, by Taylor expansion we have that $|\widehat U_x/(-it_0)-\lambda_x|\leq O(t_0\norm{H}^2),$ so
 \begin{equation*}
     \left|\lambda_x-\frac{\widetilde{ R}_x+i\widetilde{ I}_x}{-it_0}\right|\leq O\left(\frac{\eps^2}{t_0\norm{H}^2}\right)+O(t_0\norm{H}^2)=O(\eps),
 \end{equation*}
for every $x\in \mathcal X$, as claimed.

\textbf{Time complexity:} The time complexity is dominated by the first loop in \cref{algo:CoeffEstimation}, whose time complexity is $O(|\mathcal X|\cdot T\cdot (t_{est}+\poly(n))$, where the $\poly(n)$ comes from applying a random Clifford gate and $t_{est}$ is the time taken to compute $\bra{b}C^{-1}\mathcal R_xC\ket{b}$ for an $n$-qubit Clifford gate $C$ and a computational basis state $\ket{b}.$ Now, expanding $R_x$ one can write $\bra{b}C^{-1}\mathcal R_xC\ket{b}$ as an algebraic expression of a finite number of terms of the kind $\bra{y}D\ket{z}$, where $\ket{y}$ and $\ket{z}$ are computational basis states and $D$ a Clifford gate. Hence, via Gottesman-Knill theorem \cite{gottesman1998heisenberg,aaronson2004improved} follows that $t_{est}=O(n^2)$, so the total time complexity is $O(\poly(n)|\mathcal X|\norm{H}^4/\eps^4\cdot \log(|\mathcal X|/\delta))$.
\end{proof} 

Now, we are ready to present our learning algorithm for arbitrary Hamiltonians with no promise about its structure.
\begin{algorithm}
\textbf{Input:} Query access to the time evolution of $U(t)=e^{-itH}$, error parameter $\eps\in (0,1)$, and failure parameter $\delta\in (0,1)$
\begin{algorithmic}[1]
    \State Set $T= O(\norm{H}^4/\eps^4\cdot \log(\norm{H}^2/\eps^2\delta))$ and $t_0=\Theta(\eps/\norm{H}^2)$
    \State Set $U=U(t_0)$
    \State Set $\mathcal X=\emptyset$
    \For{$j\in [T]$}
        \State Prepare $\ket{J(U)}=(U\otimes \Id_{2^n})\ket{\mathrm{Bell}_n}$
        \State Measure in the Bell basis and add the outcome $x\in\{0,1\}^{2n}$ to $\mathcal X$ if $x\neq 0^{2^n}$
    \EndFor
    \State     Run \cref{algo:CoeffEstimation} run with $U(t)$, $\mathcal X$, $\eps$ and $\delta$ as inputs. Let $(\widetilde \lambda_x)_{x\in\mathcal X}$ the output.
\end{algorithmic}
\textbf{Output}: $\widetilde H=\sum_{x\in\mathcal X}\widetilde \lambda_x\sigma_x$
\caption{Learning unstructured Hamiltonians}\label{algo:unstructuredlearn}
\end{algorithm}
    \begin{theorem}[Learning unstructured Hamiltonians]\label{theo:unstructuredlearning}
        Let $H$ be an $n$-qubit and traceless Hamiltonian. Then, \cref{algo:unstructuredlearn} $\eps$-learns $H$ in the $\ell_\infty$ norm with success probability $\geq 1-\delta.$  It uses $\widetilde O((\norm{H}/\eps)^4)$ queries to the evolution operator and $\widetilde O(\norm{H}^2/\eps^3)$ total evolution time. The minimum evolution time is $\Theta(\eps/\norm{H}^2)$, the algorithm uses $n$ ancilla qubits and only one round of adaptivity, and the time complexity is $\poly(n,1/\eps,\norm{H})$.
    \end{theorem}
    \begin{proof} 
     Let $t_0=\Theta(\eps/\norm{H}^2)$, $U=U(t_0)$ and $T= O(\norm{H}^4/\eps^4\cdot \log(\norm{H}^2/\eps^2\delta))$, as in \cref{algo:unstructuredlearn}.
     
     \textbf{Correctness}: We claim that with probability $\geq 1-\delta$ the set $\mathcal X$ generated in \cref{algo:unstructuredlearn} contains all $x$ such that 
    \begin{equation}\label{eq:XcontainsallBig}
        |\lambda_x|\geq \eps,
    \end{equation}
    and that 
    \begin{equation}\label{eq:Xnottoolarge}
        |\mathcal X|\leq \widetilde O\left(\frac{\norm{H}^4}{\eps^4}\right).
    \end{equation}
    To show \cref{eq:XcontainsallBig} we note that by Taylor expansion, if $|\lambda_x|\geq \eps$, then $|\widehat U_x|=\Omega((\eps^2/\norm{H}^2))$, so $|\widehat U_x|^2=\Omega((\eps^4/\norm{H}^4))$. Hence, the probability that such an $x$ does not belong to $\mathcal X$, which stores the non-$0^{2n}$ outcomes of sampling from $(|\widehat U_x|^2)_x$, is at most $$\left(1-|\widehat U_x|^2\right)^T\leq e^{-T|\widehat U_x|^2}\leq \frac{\eps^2\delta}{\norm{H}^2}.$$ 
    Hence, as there is at most $\norm{H}^2/\eps^2$ coefficients with $|\lambda_x|\geq \eps$, because $\sum_x|\lambda_x|^2\leq \norm{H}^2$, \cref{eq:XcontainsallBig} follows from a union bound. \cref{eq:Xnottoolarge} holds because $|\mathcal X|\leq T.$
    
    Now, if \cref{eq:XcontainsallBig,eq:Xnottoolarge} are satisfied, \cref{algo:CoeffEstimation} provides estimates of the coefficients of $\mathcal X,$ which contains all labels $x$ of coefficients $|\lambda_x|\geq \eps.$

    \textbf{Complexities:} The query complexity is $2T=\tilde O(\norm{H}^4/\eps^4)$, the minimum evolution time $t_0=\Theta(\eps/\norm{H}^2)$ and the total time evolution $2Tt_0=\widetilde O(\norm{H}^2/\eps^3)$. Additionally, the time complexity of \cref{algo:unstructuredlearn} is dominated by the call to \cref{algo:CoeffEstimation}, which runs in time $O(\poly(n)|\mathcal X|\norm{H}^2/\eps^2)$, which thanks to \cref{eq:Xnottoolarge} is $\poly(n,1/\eps,\norm{H})$.
    \end{proof}

\subsubsection{Learning local Hamiltonians}
We now introduce our local Hamiltonian learner and prove its guarantees. 
\begin{algorithm}
\textbf{Input:} Query access to the time evolution of $U(t)=e^{-itH}$, error parameter $\eps\in (0,1)$, locality parameter $k\in\mathbb N$ and failure parameter $\delta\in (0,1)$
\begin{algorithmic}[1]
    \State Set $T=\exp(O(k^2+k\log (1/\eps))\log(1/\delta)$
    \State Let $t=\eps^{k+1}\exp(-k(k+1)/2)$ and $U=U(t)$
    \State Set $\gamma=(\eps/\norm{H}^2)^{k+1}\exp(-k(k+1)/2)$ and $\beta=\gamma\eps/\norm{H}$
    \State Learn $\beta$-estimates $ \lambda_x'$ of $\lambda_x$ via \cref{algo:unstructuredlearn}
    \For{$|x|\leq k$}
        \If{$|\lambda_x'|\leq \gamma$}
        \State $\widetilde \lambda_x=0$
        \Else
        \State $\widetilde \lambda_x=\lambda_x'$
        \EndIf
    \EndFor
\end{algorithmic}
\textbf{Output}: $\sum_{x\leq k}\widetilde \lambda_x\sigma_x$
\caption{Local Hamiltonian learner}\label{algo:localitylearning}
\end{algorithm}

\begin{theorem}\label{theo:locallearning}
     Given a $n$-qubit $k$-local Hamiltonian $H$, \cref{algo:localitylearning} outputs $\widetilde H$ such that with probability $\geq 1-\delta$ satisfies $\norm{H-\widetilde H}_{\ell_2}\leq \eps$. The algorithm makes $\exp(O(k^2+k\log (\norm{H}^2/\eps))\log(1/\delta)$ queries to the evolution operator with $\exp(O(k^2+k\log (\norm{H}^2/\eps))\log(1/\delta)$  total evolution time. 
\end{theorem}
 To prove this theorem, we use the non-commutative Bohnenblust-Hille inequality~\cite{volberg2023noncommutative}.
\begin{theorem}[Non-Commutative Bohnenblust-Hille inequality]\label{theo:NCBH}
    Let $H=\sum_x \lambda_x\sigma_x$ be a $k$-local Hamiltonian. Then, there is a universal constant $C$ such that $$\widetilde H=\sum_{x\in\01^{2n}}|\lambda_x|^{\frac{2k}{k+1}}\leq C^{k}\norm{H}.$$
\end{theorem}

\begin{proof}[Proof of \cref{theo:locallearning}]
    We only analyze the correctness of \cref{algo:localitylearning}, as the complexity quickly follows from \cref{theo:unstructuredlearning}. In this proof we also use the notation of \cref{algo:localitylearning}. The $\ell_2$-error of approximating $H$ with $\widetilde H$ is 
    \begin{equation}
        \norm{\widetilde H-H}_{\ell_2}^2=\sum_{|\lambda_x'|\leq \gamma} |\lambda_x|^2+\sum_{|\lambda_x'|\geq \gamma, |x|\leq k} |\lambda_x-\lambda_x'|^2.
    \end{equation}
    We show separately that the two terms are at most $O(\eps^2)$. To bound the contribution of the small Pauli coefficients, we first note that by \cref{theo:unstructuredlearning} we have that 
    \begin{equation}
        |\lambda_x'|\leq \gamma\quad \implies\quad |\lambda_x|\leq \gamma+\beta=O(\gamma).\label{eq:lam'smallimplieslamsmall}
    \end{equation} Hence, 
    \begin{align}
        \sum_{|\lambda_x'|\leq \gamma} |\lambda_x|^2\leq \sum_{|\lambda_x|\leq O(\gamma)} |\lambda_x|^2\leq O(\gamma^{\frac{2}{k+1}})\sum_{x\in\{0,1\}^{2n}} |\lambda_x|^{\frac{2k}{k+1}}\leq \gamma^{\frac{2}{k+1}}(C^{k}\norm{H}^2)^{\frac{2k}{k+1}}=O(\eps),
    \end{align}
    where in the first inequality we used \cref{eq:lam'smallimplieslamsmall}, in the third inequality we used \cref{theo:NCBH} and in the last inequality that $\gamma=(\eps/\norm{H}^2)^{k+1}\exp(-k(k+1)/2).$
    To bound the contribution of the coefficients $|\lambda_x|\geq \gamma$ we notice that there is at most $\norm{H}^2/\gamma^2$ of them, because $\sum_{x}|\lambda_x|^2\leq \norm{H}^2$.~Thus,  
    \begin{equation*}
        \sum_{|\lambda_x'|\geq \gamma, |x|\leq k} |\lambda_x-\lambda_x'|^2\leq \frac{\norm{H}^2}{\gamma^2} \sup_x |\lambda_x-\lambda_x'|^2\leq \frac{\norm{H}^2\beta^2}{\gamma^2}=\eps^2,
    \end{equation*}
    where in the second inequality we use the $\lambda_x'$ are $\beta$-estimates of $\lambda_x$ and in the last equality we use that $\beta=\gamma\eps/\norm{H}$.
\end{proof}

\subsubsection{Learning sparse Hamiltonians}
In this section we introduce our sparse Hamiltonian learner and prove its guarantees.

\begin{algorithm}
\textbf{Input:} Query access to the time evolution of $U(t)=e^{-itH}$, error parameter $\eps\in (0,1)$, sparsity parameter $s\in\mathbb N$ and failure parameter $\delta\in (0,1)$
\begin{algorithmic}[1]
     \State Learn $(\eps/2)$-estimates $ \lambda_x'$ of $\lambda_x$ via \cref{algo:unstructuredlearn}
    \For{$x\in \{x:\lambda_x\neq 0\}$}
        \If{$\lambda_x'\leq \eps/2$}
            \State $\widetilde \lambda_x=0$
        \Else{$\lambda_x> \eps/2$}
            \State $\widetilde\lambda_x=\lambda_x'$
        \EndIf
    \EndFor
\end{algorithmic}
\textbf{Output}: $\widetilde H=\sum_{x}\widetilde \lambda_x\sigma_x$
\caption{Sparse Hamiltonian learner}\label{algo:sparsitylearning}
\end{algorithm}

\begin{theorem}[Sparse Hamiltonian learning]\label{theo:sparselearning}
    Given an $n$-qubit, $s$-sparse Hamiltonian $H$, \cref{algo:sparsitylearning} outputs another Hamiltonian $\widetilde H=\sum\widetilde \lambda_x\sigma_x$ such that with probability $\geq 1-\delta$ satisfies $\norm{H-\widetilde H}_{\ell_\infty}\leq \eps$, The algorithms uses $\widetilde O(\norm{H}^4/\eps^4\cdot \log(1/\delta))$ queries and $\widetilde O(\norm{H}^2/\eps^3\cdot \log(1/\delta))$ total evolution time.
    
    Furthermore, if $\lambda_x=0$, then $\widetilde{\lambda}_x=0$. This implies that running \cref{algo:sparsitylearning} with $\eps=\eps'/\sqrt{s}$ outputs $\widetilde H$ such that $\norm{H-\widetilde H}_{\ell_2}\leq \eps'$. In this case, the algorithm uses $\widetilde O(\norm{H}^4s^2/\eps'^4\cdot \log(1/\delta))$ queries and $\widetilde O(\norm{H}^2s^{1.5}/\eps'^3\cdot \log(1/\delta))$ total evolution time.
\end{theorem}

\begin{proof}
    The first part, concerning learning in the $\ell_\infty$ error follows from \cref{theo:unstructuredlearning}. The fact that $\lambda_x=0$, then $\widetilde{\lambda}_x=0$ follows from Line 3 of \cref{algo:sparsitylearning}. Finally, we note that having $\lambda_x=0\implies \widetilde{\lambda}_x=0$ and $|\lambda_x-\lambda_x|\leq \eps'/\sqrt{s},$ implies $\norm{H-\widetilde H}_{\ell_2}\leq \eps'.$ Indeed, 
    \begin{align*}
        \norm{H-\widetilde H}_{\ell_2}=\sum_{\lambda_x\neq 0}|\lambda_x-\widetilde \lambda_x|^2\leq s \sup_x |\lambda_x-\widetilde \lambda_x|^2=\eps'^2,
    \end{align*}
    where in the first step we have used that $\lambda_x=0\implies \widetilde{\lambda}_x=0$, in the second that $|\lambda_x-\lambda_x|\leq \eps'/\sqrt{s}$ and in the third that $H$ is $s$-sparse.
\end{proof}

We remark here that, our testing complexity is larger than our learning complexity, which might make it seem as if our testing result is trivial. But crucially, our testing algorithm here is in the much harder framework of \emph{tolerant} testing: in this harder framework, it is unclear if proper learning algorithms imply tolerant testing algorithms.\footnote{The seminal result of Goldreich et al.~\cite{goldreich1998property} shows that \emph{proper} learning algorithms implied standard property testing algorithms.}

\section{Testing and learning without quantum memory}
\subsection{Learning without memory}
In this section, we use the subroutines that we established in Section~\ref{sec:tecnicalsubroutinesmemory} to remove the need for quantum memory in our learning algorithms.
\begin{theorem}\label{theo:learnnomemory}
    Let $H$ be a $n$-qubit traceless Hamiltonian. There are memory-less algorithms with probability $\geq 1-\delta$ such that 
    \begin{itemize}
        \item  they learn $H$ up to error $\eps$ in $\ell_\infty$ error using $\widetilde{O}(\norm{H}^8n/\eps^8)$ queries and $\widetilde{O}(\norm{H}^7n/\eps^6 )$ total evolution time,
        \item if $H$ is $k$-local, they learn $H$ in $\eps$ error in $\ell_2$ norm using just $\exp(O(k^2+k\log (\norm{H}^2/\eps)))\log(n/\delta)$ queries and total evolution time. 
        \item if $H$ is $s$-sparse, they learn $H$ in $\eps$ error in $\ell_2$ norm using just $\widetilde{O}(\norm{H}^8s^4n/\eps^8)$ queries and $\widetilde{O}(\norm{H}^7s^3n/\eps^6 )$ total evolution time, 
    \end{itemize}
\end{theorem}
\begin{proof}
    The proof of the first bullet point results follow by mimicking the proofs of \cref{algo:unstructuredlearn}, but substituting Pauli sampling by \cref{lem:memorylesscoefestimation} and \cref{lem:ShadowHamEstimation} by \cref{lem:memorylesscoefestimation}. The second and third bullet points follow from the first as \cref{theo:locallearning,theo:sparselearning} followed from \cref{theo:unstructuredlearning}. The memory-less subroutines, \cref{lem:memorylesscoefestimation,lem:memorylessPaulisampling}, incur in quadratically more queries and an extra factor due to the union bound over the set of potentially non-zero Pauli coefficients of $H$ that appears in the proof of \cref{lem:memorylessPaulisampling}. Note that in the case where the Hamiltonian is promised to be local, the Pauli support is of size at most $O(n)^k$, so in this case we pay just a factor $k\log(n)$ when applying \cref{lem:memorylessPaulisampling}. In case that there is no promise or we are only promised sparsity, any of the $4^n$ Pauli can be non-zero, so we have to pay a factor $n.$
\end{proof}

\subsection{Testing sparse Pauli channels via Pauli hashing}
We will test the sparsity of a Hamiltonian by reducing the task to testing sparsity of a relevant Pauli channel. In this section, we thus describe how to test sparse Pauli channels via Pauli hashing, which was introduced in Section~\ref{sec:pauli_hashing}. Recall that a Pauli channel can be written as $$\calE(\rho) = \sum_{x \in \{0,1\}^{2n}} p(x) \sigma_x \rho \sigma_x,
$$
and its Pauli fidelities $\lambda(y) = \frac{1}{2^{n}} \Tr\left(\sigma_y  \cdot \calE(\sigma_y)\right)$ satisfy that $\widebreve{\lambda}(x)=p(x).$ We can thus test the sparsity of $\calE$, that is it having $s$ many non-zero error rates by testing the sparsity of the symplectic Fourier spectrum of the Pauli fidelity~$\lambda: \FF_2^{2n} \rightarrow [-1,1]$.

\subsubsection{Testing algorithm}
\paragraph{Random coset structure.} 
As in Section~\ref{sec:hashing_random_cosets}, we define the pairwise independent hashing process corresponding to a random subgroup $H$ of dimension $t$, generated by uniformly sampling $\{h_j\}_{j \in [t]}$
from $\FF_2^{2n}$ randomly. Abusing notation, we will also call the subgroup of Weyl operators corresponding to the strings in $H$ as $H$ as well and similarly for the generators. We then define the buckets corresponding to $b \in \FF_2^t$ accordingly as
\begin{equation} \label{eq:buckets_PC}    
C(b) := \{ \alpha \in \FF_2^{2n} : [\alpha, h_j] = b_j \, \forall j \in [t] \}.
\end{equation}
Let us denote the energy of bucket $C(b)$ for each $b\in \FF_2^t$ as the sum of the error rates in the corresponding coset $a + \commutant(H)$, expressed as
\begin{equation} 
E(b) = \sum_{x \in a + \commutant(H)} p(x).
\end{equation}

\paragraph{Energy estimation.}
Let the bucket $C(b)$ be the coset $a + \commutant(H)$. The energy of the bucket $E(b)$~is 
\begin{align}
E(b) = \sum_{x \in a + \commutant(H)} \widebreve{\lambda}(x) &= \sum_{x \in a + \commutant(H)} \Exp_{z \in \FF_2^{2n}} \left[ (-1)^{[z,x]} \lambda(z) \right] \nonumber \\
&= \frac{1}{4^n} \sum_{z \in \FF_2^{2n}} \lambda(z) \sum_{x \in a + \commutant(H)}  (-1)^{[z,x]} \nonumber \\
&= \frac{1}{4^n} \sum_{z \in \FF_2^{2n}} \lambda(z) (-1)^{[z,a]} \sum_{h \in  \commutant(H)}  (-1)^{[z,h]} \nonumber \\
&= \frac{|\commutant(H)|}{4^n} \sum_{z \in \FF_2^{2n}} \lambda(z) (-1)^{[z,a]} [z \in \commutant(\commutant(H))] \nonumber \\
&= \frac{1}{|\commutantperp(H)|} \sum_{z \in H} \lambda(z) (-1)^{[z,a]},
\label{eq:energy_bucket_PC}
\end{align}
where we used Fact~\ref{fact:sum_over_symplectic_subspace} in the second to last equality and $\commutant(\commutant(H))=H$ in the last equality. We now discuss how the energy of each bucket is computed efficiently using Pauli eigenstates and Pauli measurements.

\begin{claim}
Let $H$ be a subspace of dimension $t$. Then, the energy $E(b)$ of each bucket $b \in \FF_2^t$ can be estimated upto error $\varepsilon$ with probability $1-\delta$ using $O(2^t/\varepsilon^2 \cdot \log(2^t/\delta))$ queries. This only requires preparation of Pauli eigenstates and Pauli measurements.
\label{claim:energy_est_PC}
\end{claim}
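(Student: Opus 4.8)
The plan is to exploit the closed-form expression for the bucket energies derived in \cref{eq:energy_bucket_PC}, namely $E(b) = \frac{1}{|\commutantperp(H)|}\sum_{z\in H}\lambda(z)(-1)^{[z,a]}$, where $a+\commutant(H) = C(b)$ and $|\commutantperp(H)| = 2^t$. The key observation is that \emph{all} $2^t$ bucket energies are linear combinations of the same collection of $2^t$ Pauli fidelities $\{\lambda(z)\}_{z\in H}$, with coefficients in $\{\pm 2^{-t}\}$. Hence it suffices to estimate these $2^t$ numbers once and then reconstruct each $E(b)$ by a sign-weighted average; the $2^{t}$ buckets are handled simultaneously.

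First I would recall that each Pauli fidelity is the expectation of a single-query, memoryless experiment. Fix $z\in H$ and let $|\psi_z\rangle$ be any pure $+1$-eigenstate of $\sigma_z$ (e.g.\ a stabilizer state whose stabilizer group contains $\sigma_z$, obtained by extending $\{\sigma_z\}$ to a maximal stabilizer group). Passing $|\psi_z\rangle\langle\psi_z|$ through $\calE$ and measuring the observable $\sigma_z$ yields a $\pm1$ outcome whose expectation is
$$
\Tr\!\big(\sigma_z\,\calE(|\psi_z\rangle\langle\psi_z|)\big) = \sum_{x\in\FF_2^{2n}} p(x)(-1)^{[x,z]}\langle\psi_z|\sigma_z|\psi_z\rangle = \sum_{x\in\FF_2^{2n}} p(x)(-1)^{[x,z]} = \lambda(z),
$$
using $\sigma_x\sigma_z\sigma_x = (-1)^{[x,z]}\sigma_z$ and the symplectic Fourier inversion $\lambda(z) = \sum_\alpha (-1)^{[z,\alpha]}\widebreve{\lambda}(\alpha) = \sum_\alpha(-1)^{[z,\alpha]}p(\alpha)$. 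Only Pauli-eigenstate preparation and a Pauli measurement are used, and $\lambda(z)\in[-1,1]$.

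Then, for each nontrivial $z\in H\setminus\{0^{2n}\}$ (there are $2^t-1$ of these, and $\lambda(0^{2n}) = 1$ is free), I would repeat this experiment $O\!\big(\log(2^t/\delta)/\varepsilon^2\big)$ times and take the empirical mean $\widetilde\lambda(z)$. By the Hoeffding bound (\cref{lem:hoeffding}) together with a union bound over the $2^t$ values, $|\widetilde\lambda(z) - \lambda(z)|\le\varepsilon$ holds for all $z\in H$ simultaneously with probability $\ge 1-\delta$. Setting $\widetilde E(b) := \frac{1}{2^t}\sum_{z\in H}\widetilde\lambda(z)(-1)^{[z,a]}$, the triangle inequality gives $|\widetilde E(b) - E(b)| \le \frac{1}{2^t}\sum_{z\in H}|\widetilde\lambda(z)-\lambda(z)| \le \varepsilon$ for every bucket $b$ — crucially, no $2^t$-factor blowup of the target accuracy is needed because the reconstruction coefficients are $\pm 2^{-t}$ and there are $2^t$ terms. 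The total query count is $(2^t-1)\cdot O\!\big(\log(2^t/\delta)/\varepsilon^2\big) = O\!\big(2^t/\varepsilon^2\cdot \log(2^t/\delta)\big)$, as claimed.

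There is no substantial obstacle here: the entire content is the identification of $\lambda(z)$ with a $[-1,1]$-bounded random variable accessible with a single query using only Pauli operations, after which the argument is bookkeeping — the union bound over the $2^t$ fidelities and the error-propagation observation above. The only mild point of care is ensuring $|\psi_z\rangle$ is a legitimate pure eigenstate of $\sigma_z$, which the stabilizer-extension construction guarantees.
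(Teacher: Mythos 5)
Your proposal is correct and follows the same route as the paper's proof: estimate each Pauli fidelity $\lambda(z)$ for $z\in H$ by preparing a $\sigma_z$-eigenstate, applying the channel, and measuring $\sigma_z$; then union-bound over the $2^t$ fidelities and propagate the error through the sign-weighted average $\widetilde E(b) = 2^{-t}\sum_{z\in H}\widetilde\lambda(z)(-1)^{[z,a]}$ via the triangle inequality. You merely supply more explicit detail than the paper (the derivation that $\Tr(\sigma_z\,\calE(|\psi_z\rangle\langle\psi_z|))=\lambda(z)$, the stabilizer-extension construction of $|\psi_z\rangle$, and the observation that $\lambda(0^{2n})=1$ is free), but the argument and complexity bookkeeping are the same.
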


\begin{proof}
By definition, $\lambda(z) = \frac{1}{2^n} \Tr(\sigma_z \calE(\sigma_z))$ and thereby can be estimated by repeated queries involving preparation of the eigenstate of Pauli $\sigma_z$, application of the Pauli channel $\calE$, and then measuring with respect to the Pauli $\sigma_z$. For a given $z$, we can thus obtain an estimate $\widetilde{\lambda}(z)$ such that $|\widetilde{\lambda}(z) - \lambda(z)| \leq \varepsilon$ with probability $1-\delta$, using $O(1/\varepsilon^2 \log(1/\delta))$ queries as described.

For a given bucket $C(b)$, we can obtain an estimate $\widetilde{E}(b)$ of its energy $E(b)$ (Eq.~\eqref{eq:energy_bucket_PC})
by taking the empirical mean of the each of the Pauli fidelity estimates $\{\widetilde{\lambda}(z)\}_{z \in H}$
$$
\widetilde{E}(b) = \frac{1}{|\commutantperp(H)|} \sum_{z \in H} \widetilde{\lambda}(z) (-1)^{[z,a]},
$$
where $a \in \FF_2^{2n}$ is such that the bucket $C(b)$ is equivalent to the coset $a+\commutant(H)$.
Note that with an overall query complexity $O(2^t/\varepsilon^2 \log(2^t/\delta))$, we can ensure that $|\widetilde{\lambda}(z) - \lambda(z)| \leq \varepsilon$ for all $z \in H$ and thereby have
$$
|\widetilde{E}(b) - E(b)| \leq \frac{1}{|\commutantperp(H)|} \sum_{z \in H} | \widetilde{\lambda}(z) (-1)^{[z,a]} - \lambda(z) (-1)^{[z,a]}| \leq \frac{1}{|H|} \sum_{z \in H} \varepsilon = \varepsilon, 
$$
where we used that $|\commutantperp(H)|=|H|$.
\end{proof}

We are now in a position to state Algorithm~\ref{algo:testing_sparsity_PC} which tests $s$-sparsity of Pauli channels.
\begin{algorithm}[H]
\caption{Testing $s$-sparsity of Pauli channels} \label{algo:testing_sparsity_PC}
\textbf{Input}: A Pauli channel $\calE$, sparsity parameter $s$, error parameters $0 \leq 2\varepsilon_1 < \varepsilon_2 < 1$
\begin{algorithmic}[1]
    \State Randomly sample a subgroup $H$ of dimension $t=O(\log s)$
    \State Set bucket indices $\{b_1,\ldots,b_T\}=\FF_2^t$
    \State Set $\varepsilon = (\varepsilon_2 - 2\varepsilon_1)/3$ and query budget $T= O(s/\varepsilon^2 \log(s/\delta))$
    \State Obtain energy estimates $\{\widetilde{E}(b)\}_{b \in \FF_2^t}$ using Claim~\ref{claim:energy_est_PC} upon inputs of $H$ and budget $T$
    \State Compute $\Gamma \leftarrow \max \limits_{S \subseteq \FF_2^t: |S| = s} \sum_{j \in S} \widetilde{E}(j)$
    \State If $\Gamma \geq 1 - 8/3\varepsilon_1 + 1/3\varepsilon_2$, accept.
    \State If $\Gamma \leq 1 - 4/3\varepsilon_2 + 2/3\varepsilon_1$, reject.
\end{algorithmic}
\textbf{Output}: \textsf{FLAG} for accepting/rejecting $\calE$
\end{algorithm}
In the next section, we will argue that the query complexity of Algorithm~\ref{algo:testing_sparsity_PC} is indeed $\poly(s)$.

\subsubsection{Analysis}\label{sec:analysis_sparse_testing_PC}
In this section, we prove the following theorem. Our analysis  follows the approach of~\cite{yaroslavtsev2020fast}.
\begin{theorem}\label{thm:testing_sparse_PC}
Let $\calE$ be an $n$-qubit Pauli channel. We can test if $\calE$ is $\varepsilon_1$-close or $\varepsilon_2$-far from being $s$-sparse in diamond norm with probability $\geq 0.9$ using 
$$
O\left(\frac{s^2}{(\varepsilon_2-\varepsilon_1)^6} \left(\log\frac{1}{(\varepsilon_2 - \varepsilon_1)} + \log s \right)\right)
$$ 
queries to $\calE$. These queries involve no quantum memory and only Pauli eigenstate preparations or Pauli measurements.
\end{theorem}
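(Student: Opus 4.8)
The plan is to analyze \cref{algo:testing_sparsity_PC}, adapting the sparse‑Fourier testing strategy of~\cite{yaroslavtsev2020fast} to the symplectic setting. Since the distance of Pauli channels in total variation of the error rates agrees with the diamond distance up to a factor $2$, \cref{lemma:sparse_paulis_struct} reduces the task to an estimation problem: being $\varepsilon_1$‑close to $s$‑sparse forces $\mathsf{Energy}(\calE;s)\geq 1-2\varepsilon_1$, while being $\varepsilon_2$‑far forces $\mathsf{Energy}(\calE;s)\leq 1-\varepsilon_2$, so it suffices to estimate $\mathsf{Energy}(\calE;s)=\max_{|T|=s}\sum_{x\in T}p(x)$ up to additive error a small constant times $\varepsilon_2-2\varepsilon_1$. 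The algorithm's statistic $\Gamma=\max_{|S|=s}\sum_{j\in S}\widetilde E(b_j)$ is built to be such an estimate: by \cref{claim:energy_est_PC} the energy $E(b)=\sum_{x\in a+\commutant(H)}p(x)$ of every bucket can be estimated to error $\varepsilon'$ using $O(2^t\varepsilon'^{-2}\log(2^t/\delta))$ queries that involve only Pauli eigenstate preparations and Pauli measurements, which makes $\Gamma$ within $s\varepsilon'$ of the noiseless statistic $\Gamma^{\mathrm{true}}:=\max_{|S|=s}\sum_{j\in S}E(b_j)$ and simultaneously delivers the ``no quantum memory'' part of the statement for free.

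The core of the argument is to sandwich $\Gamma^{\mathrm{true}}$ by $\mathsf{Energy}(\calE;s)$ from one side and $\mathsf{Energy}(\calE;s)$ plus a controlled error from the other. One direction is deterministic and easy: error rates are nonnegative, so any $s$ buckets jointly covering the $s$ largest error rates have total energy at least $\mathsf{Energy}(\calE;s)$; hence $\Gamma^{\mathrm{true}}\geq \mathsf{Energy}(\calE;s)$ on the event that the top‑$s$ error rates land in distinct buckets, which by \cref{prop:symplectic_hashing}$(iii)$ has probability $\geq 2/3$ for $t$ of order $\log s$ (and can be amplified). This already gives completeness: when $\calE$ is $\varepsilon_1$‑close to $s$‑sparse, $\Gamma\geq \mathsf{Energy}(\calE;s)-s\varepsilon'\geq 1-2\varepsilon_1-s\varepsilon'$, which clears the acceptance threshold once $\varepsilon'$ is a small enough multiple of $(\varepsilon_2-2\varepsilon_1)/s$.

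The soundness direction is where the real work lies, and it is the step I expect to be the main obstacle. Fix the set $T^\ast$ of the $s$ largest error rates and, for a bucket $b$ corresponding to the coset $a+\commutant(H)$, write $W(b)=\sum_{x\notin T^\ast}p(x)\,[x\in C(b)]$ for the tail mass it absorbs. For every set $S$ of $s$ buckets, $\sum_{b\in S}E(b)\leq \mathsf{Energy}(\calE;s)+\sum_{b\in S}W(b)$, so it remains to show that, with high probability over the random subgroup, $\max_{|S|=s}\sum_{b\in S}W(b)$ is at most a small constant times $\varepsilon_2-2\varepsilon_1$. I would use two facts: first, every tail error rate is tiny, $p(x)\leq \mathsf{Energy}(\calE;s)/s\leq 1/s$ for $x\notin T^\ast$, so $\sum_{x\notin T^\ast}p(x)^2\leq (1/s)\sum_{x\notin T^\ast}p(x)\leq 1/s$; and second, by the pairwise independence of bucket membership (\cref{prop:symplectic_hashing}$(iv)$) each $W(b)$ has mean $2^{-t}\sum_{x\notin T^\ast}p(x)$ and variance at most $2^{-t}\sum_{x\notin T^\ast}p(x)^2\leq 2^{-t}/s$. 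Combining a second‑moment bound on the $W(b)$'s with the deterministic identity $\sum_b W(b)=1-\mathsf{Energy}(\calE;s)\le 1$ controls $\max_{|S|=s}\sum_{b\in S}W(b)$; the delicate point is that pairwise independence only gives a variance bound, so a per‑bucket Chebyshev estimate plus a union bound over all $2^t$ buckets loses a $\sqrt{s}$ factor, and one instead has to reason about $\sum_b W(b)^2$ in aggregate and trade off the number of buckets against the randomness available. It is precisely this balancing — choosing $t$, and the per‑bucket accuracy $\varepsilon'$ fed into \cref{claim:energy_est_PC}, so that $s\varepsilon'$ and the tail‑leakage term are each below a constant fraction of $\varepsilon_2-\varepsilon_1$ — that produces the polynomial factors $s^2/(\varepsilon_2-\varepsilon_1)^6$ in the query bound.

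Finally, I would set the acceptance and rejection thresholds strictly between $1-2\varepsilon_1$ and $1-\varepsilon_2$ with enough slack to absorb both $s\varepsilon'$ and the tail‑leakage bound, take a union bound over the three bad events — the top‑$s$ error rates failing to separate into distinct buckets, the tail‑leakage concentration failing, and some bucket‑energy estimate deviating by more than $\varepsilon'$ — each arranged to have probability at most $0.03$, and conclude correctness with probability $\geq 0.9$. The total query count is then the $O(2^t\varepsilon'^{-2}\log(2^t/\delta))$ of \cref{claim:energy_est_PC} with the chosen $t$ and $\varepsilon'$, which after optimization gives the stated $\widetilde{O}\big(s^2/(\varepsilon_2-\varepsilon_1)^6\big)$; and by construction every query is a Pauli‑eigenstate preparation followed by a Pauli measurement, so no quantum memory is used.
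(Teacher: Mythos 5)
Your framing — reduce via \cref{lemma:sparse_paulis_struct} to estimating $\mathsf{Energy}(\calE;s)$, invoke \cref{claim:energy_est_PC} for per-bucket estimation, and split the analysis into a hashing (collision) error and an estimation error — matches the paper's, and your completeness direction is fine (in fact the inequality $\Gamma^{\mathrm{true}}\geq \mathsf{Energy}(\calE;s)$ holds deterministically, since any set of $s$ buckets that covers the cosets of the top $s$ error rates has total energy at least $\mathsf{Energy}(\calE;s)$, so you do not even need the ``all top $s$ land in distinct buckets'' event there). But your soundness decomposition has a genuine gap: bounding $\Gamma^{\mathrm{true}} - \mathsf{Energy}(\calE;s)$ by $\max_{|S|=s}\sum_{b\in S}W(b)$, where $W(b)$ is the tail mass $\sum_{x\notin T^\ast}p(x)[x\in C(b)]$, is correct but far too loose, and the second-moment argument you sketch cannot rescue it. Concretely, $\Exp_H\!\big[\sum_b W(b)^2\big]$ has a hashing-independent floor of $\sum_{x\notin T^\ast}p(x)^2$, which can be $\Theta(1/s)$ (e.g.\ $p$ uniform on $2s$ strings), so $\sqrt{s}\cdot\sqrt{\Exp[\sum_b W(b)^2]}=\Theta(1)$ and your bound on $\max_{|S|=s}\sum_{b\in S}W(b)$ is a constant — useless as a multiple of $\varepsilon_2-\varepsilon_1$. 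The scenario also shows the inequality itself is the culprit: for uniform $p$ on $2s$ strings that hash to distinct buckets, $\Gamma^{\mathrm{true}}=\mathsf{Energy}(\calE;s)=1/2$ exactly, yet $\max_{|S|=s}\sum_{b\in S}W(b)=1/2$, so you are paying for tail mass that never actually inflates $\Gamma^{\mathrm{true}}$.

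The paper avoids this by charging a different quantity. Define $E_j'=\max_{x\in B(j)}p(x)$, the single largest error rate in the $j$-th bucket, and observe that $\sum_{j=1}^s E_j' \leq \mathsf{Energy}(\calE;s)$ because the $E_j'$ are $s$ error rates from $s$ distinct cosets, so their sum is dominated by the top $s$ error rates overall. Hence $\Gamma^{\mathrm{true}}-\mathsf{Energy}(\calE;s)\leq \sum_{j=1}^s (E_j - E_j')$, the ``collision error,'' and this is the quantity whose second moment is actually small: in the expansion of $\Exp_H\big[\sum_j (E_j-E_j')^2\big]$ the diagonal term $\sum_j p_{y_j}^2$ appears weighted by the indicator $D_j$ that $y_j$ is \emph{not} the maximum of its bucket, whose expectation is at most $(j-1)/2^t$ by pairwise independence; summing then gives $\Exp_H[\sum_j(E_j-E_j')^2]\leq 2/2^t$ with no $1/s$ floor (this is \cref{claim:exp_var_collision_errors_PC}). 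Your $W(b)$ lacks exactly this ``not the max'' indicator, so you keep paying $p(x)^2$ even when $x$ sits alone in its bucket and contributes zero collision error. Replacing the $T^\ast$/non-$T^\ast$ split by the per-bucket-max split, and then running Cauchy--Schwarz plus Chebyshev as you planned, is what yields \cref{lemma:hashing_error_PC} and closes the argument.
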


Our goal will be to show that $\Gamma$ in \cref{algo:testing_sparsity_PC} is an $((\eps_2-2\eps_1)/3)$-approximation of $\mathsf{Energy}(\mathcal E;s)$, which is enough to determine if $\mathcal E$ is close or far from being sparse thanks to \cref{lemma:sparse_paulis_struct}. Note that $\Gamma$ has two sources of error. The first appears because we approximate $\mathsf{Energy}(\mathcal E;s)$ via $\max_{|S| = s} \sum_{j \in S} {E}(j),$ which we will refer to it as the \emph{hashing error}. The hashing error will be small because thanks to \cref{prop:symplectic_hashing} all the top $p(x)$ will be in different buckets. The second source of error comes from approximating $E(b)$ via $\widetilde E(b)$. We will refer to it as \emph{estimation error} and will be controlled thanks to \cref{claim:energy_est_PC}. 

\paragraph{Hashing error.}
Let the energies of the buckets be indexed in non-increasing order by energy as $E_1 \geq E_2 \geq \cdots \geq E_{2^t}$ and label the $2^t$ buckets accordingly, $B(1),\dots,B(2^t)$. Let  ${E}_j' = \max_{x \in B(j)} p(x)$. 
We also assign labels $y_1,\dots,y_{4^n}$ to $\{0,1\}^{2n}$ in a way such that  the error rates satisfy as $p_{y_1} \geq p_{y_2} \geq \cdots \geq p_{y_{4^n}}$. With this notation, the hashing error is given by
\begin{equation*}
\mathsf{err}(\calE;H,s) = \sum_{j=1}^s \left( E_j - p_{y_j} \right).    
\end{equation*}
We will prove the following upper bound to the hashing error. 

\begin{lemma} \label{lemma:hashing_error_PC}
Let $\varepsilon \in (0,1/2]$. Suppose $H$ is a random subgroup of dimension $t \geq \log (2s/\varepsilon^2)$, then
$$
\Pr_{H}\left[\mathsf{err}(\calE;H,s) \leq 6\varepsilon \right] \geq 0.96 \,.
$$
\end{lemma}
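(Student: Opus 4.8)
The plan is to first reduce the hashing error to a ``spillover'' quantity and then bound that quantity with high probability using only the pairwise independence of the hash guaranteed by \cref{prop:symplectic_hashing}. Writing $\mathsf{Big}=\{y_1,\dots,y_s\}$ for the $s$ largest error rates and $S^*$ for the $s$ buckets of largest energy, I would decompose the energy of each bucket $b$ into its $\mathsf{Big}$-mass and its remaining mass $\mathsf{rm}_b=\sum_{x\in C(b)\setminus\mathsf{Big}}p(x)$. Since $\sum_{b\in S^*}(\mathsf{Big}\text{-mass of }b)\le \sum_{x\in\mathsf{Big}}p(x)=\mathsf{Energy}(\calE;s)=\sum_{j=1}^s p_{y_j}$, one gets $\sum_{j=1}^s E_j\le \mathsf{Energy}(\calE;s)+\sum_{b\in S^*}\mathsf{rm}_b$, i.e.\ $\mathsf{err}(\calE;H,s)\le \sum_{b\in S^*}\mathsf{rm}_b$. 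The reverse bound $\mathsf{err}\ge 0$ is immediate: the (at most $s$) distinct buckets that contain $y_1,\dots,y_s$ already have total energy at least $\mathsf{Energy}(\calE;s)$, so the $s$ largest bucket energies sum to at least that. Hence everything reduces to showing $\sum_{b\in S^*}\mathsf{rm}_b\le 6\varepsilon$ with probability $\ge 0.96$.

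To bound $\sum_{b\in S^*}\mathsf{rm}_b$ I would exploit two structural facts: every rate outside $\mathsf{Big}$ is at most $p_{y_s}\le 1/s$, and the hashing is pairwise independent. The natural route is to split the non-$\mathsf{Big}$ rates at a threshold $\tau$ chosen in terms of $\varepsilon,s,2^t$ into a ``heavy'' part (at most $1/\tau$ indices, which by \cref{prop:symplectic_hashing}$(ii)$ collide only rarely, and whose contribution to the top buckets is controlled by a layered count weighted by $p(x)$ using that these rates are at most $1/s$) and a ``light'' part, for which the expected mass in any fixed bucket is at most $2^{-t}\le \varepsilon^2/(2s)$ so that the typical light contribution to any $s$ buckets is $O(\varepsilon^2)$; here \cref{prop:symplectic_hashing}$(iv)$ gives that the per-bucket light mass has variance at most $2^{-t}\tau$, which is what lets one pass from the expectation to a high-probability statement. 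The delicate point is that $S^*$ is selected by energy, so one has to argue that the $s$ highest-energy buckets are ``close to'' the $\approx s$ buckets that actually contain a $\mathsf{Big}$-rate (which are otherwise typical), and bound the rare buckets that enter $S^*$ purely on accumulated small mass. Finally I would take a union bound over the $O(1)$ bad events, tune $\tau$ and the constants so that the heavy and light contributions each come out at most $3\varepsilon$ and the total failure probability at most $0.04$, and invoke $t\ge\log(2s/\varepsilon^2)$ precisely to make this balance possible.

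I expect the main obstacle to be exactly this last balancing act: passing from a per-bucket second-moment estimate to a bound on the mass in the $s$ \emph{highest-energy} buckets --- an order statistic of $2^t$ correlated random variables --- without a crude union bound over all $\binom{2^t}{s}$ choices of $s$ buckets, which would be hopelessly lossy. This is where the argument follows the template of Yaroslavtsev et al.~\cite{yaroslavtsev2020fast}: one must show that the collision pattern and the leaked small mass concentrate well enough that only pairwise independence (all \cref{prop:symplectic_hashing} provides) is needed, and that the threshold $\tau$ can be placed so that heavy rates are isolated while light mass still leaks at rate $O(\varepsilon)$ across $s$ buckets.
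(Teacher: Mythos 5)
Your opening reduction $0\le\mathsf{err}(\calE;H,s)\le\sum_{b\in S^*}\mathsf{rm}_b$ is valid, but you have correctly diagnosed --- and not resolved --- the central obstacle: the set $S^*$ of top-$s$-energy buckets is a data-dependent selection from $2^t$ correlated random variables, and nothing in your threshold/layering sketch controls spillover into buckets that enter $S^*$ merely because many small rates happened to collide there. You explicitly flag this balancing act as ``the main obstacle,'' and indeed the proof is missing precisely at that point; what you have is a plausible program, not a proof.

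The paper avoids the issue by choosing a different per-bucket comparison. Rather than subtracting the $\mathsf{Big}$-mass of each top bucket, it subtracts the \emph{largest error rate landing in that same bucket}: setting $E'_j=\max_{x\in B(j)}p(x)$ and $\mathsf{err}_j:=E_j-E'_j$, it observes that the $E'_j$ are $s$ \emph{distinct} error rates (maxima of disjoint buckets), hence $\sum_{j=1}^s E'_j\le\sum_{j=1}^s p_{y_j}$ and $\mathsf{err}(\calE;H,s)\le\sum_{j=1}^s\mathsf{err}_j$. The crucial payoff is that $\mathsf{err}_j\ge 0$ for \emph{every} bucket, so Cauchy--Schwarz gives $\sum_{j=1}^s\mathsf{err}_j\le\sqrt{s}\,\big(\sum_{j=1}^{2^t}\mathsf{err}_j^2\big)^{1/2}$, a sum over \emph{all} $2^t$ buckets that no longer references the data-dependent $S^*$ at all. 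The second moment $\E_H\big[\sum_{j=1}^{2^t}\mathsf{err}_j^2\big]$ is then a clean bilinear form in the $p(x)$'s that pairwise independence (\cref{prop:symplectic_hashing}) and $\sum_x p(x)=1$ bound by $2/2^t$, with no threshold, layering, or case analysis; Jensen and Chebyshev finish. Your spillover decomposition does not admit this Cauchy--Schwarz move, because $\mathsf{rm}_b$ is not a nonnegative per-bucket deficit whose square-sum over all buckets has a tractable expectation, so passing from your valid inequality to a high-probability bound remains open in the proposal.
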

To show \cref{lemma:hashing_error_PC}, we first note that since $p_{y_1},\dots,p_{y_s}$ are the  $s$ largest error rates, it follows that 
\begin{equation}\label{eq:ub_hashing_error_PC}
    \mathsf{err}(\calE;H,s)= \sum_{j=1}^s \left( E_j - {E}_j' \right) + \sum_{j=1}^s \left( {E}_j' - p_{y_j} \right) \leq \sum_{j=1}^s \left( E_j - {E}_j' \right).
\end{equation}
Hence, to upper bound the hashing error we just have to upper bound the \emph{collision error}  $\sum_{j=1}^s \left( E_j - {E}_j' \right),$ which occurs when two or more coefficients collide in the same bucket. With that purpose, we prove the following claim regarding $\mathsf{err}_j:=E_j-E_{j}'.$

\begin{claim} \label{claim:exp_var_collision_errors_PC}
Suppose $H$ is a subgroup of dimension $t$ drawn uniformly at random. Then,
$$
\Exp_{H} \left[ \sum_{j=1}^s \mathsf{err}_j \right] \leq \sqrt{ \frac{2s}{2^t}}, \quad \mathsf{Var}_{H} \left[ \sum_{j=1}^s \mathsf{err}_j \right] \leq \frac{2s}{2^t}.
$$    
\end{claim}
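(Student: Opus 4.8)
The plan is to bound $\mathsf{err}_j = E_j - E_j'$ by the total error rate that ``collides'' into bucket $B(j)$ from Paulis other than the maximizer. The key observation is that for any bucket, $E_j - E_j'$ is exactly $\sum_{x \in B(j), x \neq x^*_j} p(x)$ where $x^*_j = \arg\max_{x \in B(j)} p(x)$; so $\sum_{j=1}^s \mathsf{err}_j$ is at most the total weight of all ``collisions'' among the buckets containing the top $s$ energies. Rather than tracking which buckets end up in the top $s$, I would instead upper bound $\sum_{j=1}^s \mathsf{err}_j$ by a cleaner global quantity: the total weight of pairs $(x,x')$ with $x \neq x'$ that land in a common bucket, weighted appropriately. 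Concretely, I expect to show $\sum_{j=1}^{s}\mathsf{err}_j \le \sum_{x \neq x'} p(x')\,[\,x,x' \in C(b)\text{ for some }b\,]$, or a variant where one of the two indices ranges only over the top-$s$ error rates $y_1,\dots,y_s$, which is what makes the bound independent of $n$.

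**Expectation bound.** For the expectation, I would introduce the random variable $Z = \sum_{i=1}^{s}\sum_{x \neq y_i} p(x)\,\mathbf{1}[x \in C(b_i)]$ where $b_i$ is the bucket of $y_i$ — i.e., the total foreign weight colliding with the top $s$ error rates. By Proposition~\ref{prop:symplectic_hashing}$(ii)$, $\Pr_H[x \in C(b_i)] = \Pr_H[x, y_i \text{ collide}] = 2^{-t}$ for each $x \neq y_i$, so linearity of expectation gives $\Exp_H[Z] \le 2^{-t}\sum_{i=1}^{s}\sum_{x} p(x) = s\,2^{-t}$. One then argues $\sum_j \mathsf{err}_j \le Z$ pointwise (each error in a top-$s$ bucket is foreign weight colliding with the heaviest element there, and the heaviest element of the $j$-th bucket is dominated by some $y_i$ after matching up; this matching step needs a small but careful argument). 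To get the stated $\sqrt{2s/2^t}$ rather than $s/2^t$, I would apply Jensen / Cauchy--Schwarz: either $\sum_j \mathsf{err}_j$ is itself bounded by $\sqrt{s \sum_j \mathsf{err}_j^2}$ and $\mathsf{err}_j \le 1$, or more likely the $\sqrt{\cdot}$ comes from $\Exp[\sqrt{Z}] \le \sqrt{\Exp[Z]} = \sqrt{s/2^t}$ — so the lemma statement is really bounding $\Exp_H[\sqrt{\sum_j \mathsf{err}_j}]$-type quantities; I would double-check against how Claim~\ref{claim:exp_var_collision_errors_PC} is used downstream (it feeds Markov in Lemma~\ref{lemma:hashing_error_PC}), which suggests the cleanest route is to bound $\Exp_H[\sum_j \mathsf{err}_j]$ directly and the $\sqrt{2s/2^t}$ absorbs a factor-$2$ slack with a union-bound-friendly constant.

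**Variance bound.** For the variance, write $\sum_j \mathsf{err}_j$ (or the dominating quantity $Z$) as a sum of indicator-weighted terms $p(x)p(x')\,\mathbf{1}[x,x'\text{ collide}]$ and use Proposition~\ref{prop:symplectic_hashing}$(iv)$: the collision indicators for two disjoint pairs are independent, and even for overlapping pairs one gets $\Exp[\mathbf{1}_{x,x'}\mathbf{1}_{x,x''}] = 2^{-t}$ or $4^{-t}$ depending on whether $x',x''$ coincide. Expanding $\mathsf{Var} = \Exp[(\sum \mathsf{err}_j)^2] - (\Exp[\sum\mathsf{err}_j])^2$, the cross terms from independent pairs cancel against the square of the mean, leaving only the ``diagonal'' contributions — pairs of collision events sharing a vertex — which sum to at most $2^{-t}\sum_{x,x'} p(x)p(x') \le 2^{-t}$, times a constant accounting for the top-$s$ restriction, giving $\le 2s/2^t$. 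I would carry out this second-moment expansion carefully since that is where the factor-$s$ in the variance (as opposed to $s^2$) comes from: the restriction that one index lies among $y_1,\dots,y_s$ is essential.

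**Main obstacle.** The main obstacle is the pointwise domination step: justifying that $\sum_{j=1}^{s}(E_j - E_j')$ — a sum over the $s$ heaviest \emph{buckets} — is bounded by foreign weight colliding with the $s$ heaviest \emph{error rates} $y_1,\dots,y_s$. These two index sets need not coincide (a heavy bucket need not contain any $y_i$, and some $y_i$ may share a bucket), so one needs a careful combinatorial matching — e.g., show that after removing the top-$s$ buckets, the heaviest elements of those buckets are a ``spread-out'' set dominated termwise by $y_1,\dots,y_s$, or alternatively bound the collision error by $\sum_{i}\sum_{x\neq y_i} p(x)\mathbf{1}[\text{collide}]$ plus a lower-order term handling the case where $y_i$'s bucket is not among the top $s$ (which can only help). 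I expect this is exactly the step where the analysis of~\cite{yaroslavtsev2020fast} is adapted, and it will require the most care; the two moment computations are then routine applications of the pairwise-independence facts in Proposition~\ref{prop:symplectic_hashing}.
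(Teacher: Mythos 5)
Your plan has a genuine gap at exactly the step you flag as the main obstacle, and the paper resolves it in a way you half-mention but do not pursue. The pointwise domination $\sum_{j=1}^{s}\mathsf{err}_j \le Z$ (with $Z$ the foreign weight colliding with $y_1,\dots,y_s$) is simply false: a bucket can be among the $s$ heaviest while containing none of $y_1,\dots,y_s$. For instance, with $s=1$, if $y_1$ sits alone in its bucket but many small error rates summing to more than $p_{y_1}$ all collide in some other bucket, then $\mathsf{err}_1$ is of order $1$ while $Z=0$. No termwise matching between top buckets and top error rates repairs this, and your fallback $\Exp[\sqrt{Z}]\le\sqrt{\Exp[Z]}$ bounds the wrong quantity.

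What the paper does instead: first Cauchy--Schwarz over the bucket index, $\sum_{j=1}^{s}\mathsf{err}_j\le\sqrt{s}\,\big(\sum_{j=1}^{s}\mathsf{err}_j^2\big)^{1/2}$, and then --- this is the key move --- relax $\sum_{j=1}^{s}\mathsf{err}_j^2\le\sum_{j=1}^{2^t}\mathsf{err}_j^2$, summing over \emph{all} buckets, which makes the identity of the top $s$ buckets irrelevant and dissolves the matching problem. The quantity $\sum_{j}(E_j-E_j')^2$ expands as at most $\sum_{j,k}p_{y_j}p_{y_k}F_{jk}D_j$, with $F_{jk}$ the collision indicator and $D_j$ the indicator that $y_j$ is not the maximum of its bucket; pairwise independence from Proposition~\ref{prop:symplectic_hashing} together with $\Exp_H[D_j]\le (j-1)2^{-t}$ (which uses the sortedness $p_{y_1}\ge p_{y_2}\ge\cdots$) gives $\Exp_H\big[\sum_j\mathsf{err}_j^2\big]\le 2\cdot 2^{-t}$. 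The expectation bound then follows by Jensen, and the variance bound is simply $\Var\le\Exp\big[(\sum_j\mathsf{err}_j)^2\big]\le s\,\Exp\big[\sum_j\mathsf{err}_j^2\big]$; no cancellation of cross terms against the squared mean is needed, and the factor $s$ (rather than $s^2$) comes from Cauchy--Schwarz, not from restricting an index to the top $s$ error rates. So the square root in the claim is genuinely Cauchy--Schwarz over buckets, serving to decouple the bound from which buckets are heaviest, rather than a Jensen artifact.
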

\begin{proof}
We define $A_{j,i}$ as $\01$-valued random variable for the event that $y_j \in B(i)$ or $A_{j,i} = [y_j \in B(i)]$, and $D_{j,i}$ as the indicator variable for the event that $p_{y_j}$ is not the largest error rate in $B(i)$ or $D_{j,i} = [p_{y_j} \neq {E}_{i}']$.
We then have that
\begin{align}
    E_i = \sum_{j \in [2^{2n}]} p_{y_j} A_{j,i}, \quad {E}_i' = \sum_{j \in [2^{2n}]} p_{y_j} A_{j,i} (1 - D_{j,i})
    \label{eq:expressions_rvs_buckets}
\end{align}
By the Cauchy-Schwarz inequality follows that
\begin{align*}
    \sum_{i=1}^s \mathsf{err}_i \leq \sqrt{s} \sqrt{\sum_{i=1}^s \mathsf{err}_i^2}.
\end{align*}
We can bound $\sum_{i=1}^s \mathsf{err}_i^2$ as follows:
\begin{align*}
    \sum_{i=1}^s \mathsf{err}_i^2 \leq \sum_{i=1}^{2^t} \left(E_i - {E}_i' \right)^2 &= \sum_{i=1}^{2^t} \left( \sum_{j \in [2^{2n}]} p_{y_j} A_{j,i} D_{j,i} \right)^2 \\
    &= \sum_{i=1}^{2^t} \sum_{j,k \in [2^{2n}]} p_{y_j} p_{y_k} A_{j,i} A_{k,i} D_{j,i} D_{k,i} \\
    &\leq \sum_{j,k \in [2^{2n}]} p_{y_j} p_{y_k} \sum_{i=1}^{2^t} A_{j,i} A_{k,i} D_{j,i} \\
    &= \sum_{j,k \in [2^{2n}]} p_{y_j} p_{y_k} \sum_{i=1}^{2^t} [y_j,y_k \in B(i)] [y_j \neq {E}'_{B(i)}] \\
    &= \sum_{j,k \in [2^{2n}]} p_{y_j} p_{y_k} [y_j,y_k \in B(i) \text{ for some } i] [p_{y_j} \neq {E}'_{i} \text{ when } p_{y_j} \in B(i)]
\end{align*}
where in the first line we used that the top $s$ buckets is a subset of the total $2^t$ buckets and  Eq.~\eqref{eq:expressions_rvs_buckets}; and in the third line we used that $D_{k,i} \in \{0,1\}$. Let  $F_{j,k}:=[y_j,y_k \in B(i) \text{ for some } i]$, i.e, $F_{j,k}$ indicates if $y_j$ and $y_k$ collide in the same bucket. Let $D_j:=[p_{y_j} \neq {E}'_{i} \text{ when } p_{y_j} \in B(i)]$, i.e., $D_j$ indicates that $y_j$ does not correspond to the biggest error rate in its bucket. Taking expectation over $H$ gives us
\begin{align}
    \nonumber\Exp_H\left[ \sum_{i=1}^s \mathsf{err}_i^2 \right] \leq \Exp_H\left[ \sum_{j,k \in [2^{2n}]} p_{y_j} p_{y_k} F_{jk} D_j \right] &= \Exp_H\left[ \sum_{j \in [2^{2n}]} p_{y_j}^2 D_j \right] + \Exp_H\left[ \sum_{j\neq k \in [2^{2n}]} p_{y_j} p_{y_k} F_{jk} D_j \right] \\
    &\leq \underbrace{\sum_{j \in [2^{2n}]} p_{y_j}^2 \Exp_H\left[ D_j \right]}_{(*)} + \underbrace{\sum_{j\neq k \in [2^{2n}]} p_{y_j} p_{y_k} \Exp_H\left[ F_{jk} \right]}_{(**)}\label{eq:ub_sum_collision_error_2norm}
\end{align}
where in the second line we used that $D_j \in \{0,1\}$. We analyze the terms $(*)$ and $(**)$ separately. First, we deal with $(*)$, 
\begin{align}\nonumber
    (*)&=\sum_{j \in [2^{2n}]} p_{y_j}^2 \Exp_H\left[ D_j \right]=\sum_{j \in [2^{2n}]} p_{y_j}^2 \Exp_H\left[ \max_{i<j}\{F_{j,i}\} \right]\leq \sum_{j \in [2^{2n}]} p_{y_j}^2 \Exp_H\left[ \sum_{i<j}F_{j,i} \right]\\
    &\leq \sum_{j =1}^{2^{2n}} p_{y_j}^2 \frac{j-1}{2^t} \leq \frac{1}{2^t} \sum_{j=2}^{2^{2n}} \sum_{k=1}^{j-1} p_{y_j} p_{y_k} \leq \frac{1}{2^t} \left( \sum_{j=1}^{2^{2n}} p_{y_j} \right)^2 
    = \frac{1}{2^t}, \label{eq:first_sum_ub}
\end{align}
where the first line we have used that $y_j$ does not correspond to the biggest error rate of its bucket if and only if there is an $i<j$ such that $y_i$ is in the same bucket as $y_j$; and in the  line we used \cref{prop:symplectic_hashing} (b) and that $p_{y_j}<p_{y_k}$ if $j>k$. 

The term $(**)$ can be bounded as
\begin{align} \label{eq:second_sum_ub}
    (**)=\sum_{j\neq k \in [2^{2n}]} p_{y_j} p_{y_k} \Exp_H\left[ F_{jk} \right] = \frac{1}{2^t} \sum_{j\neq k \in [2^{2n}]} p_{y_j} p_{y_k} \leq \frac{1}{2^t} \left(\sum_{j \in [2^{2n}]} p_{y_j} \right)^2 = \frac{1}{2^t},
\end{align}
where in the first equality we used Proposition~\ref{prop:symplectic_hashing} (b). 
We can now substitute Eq.~\eqref{eq:first_sum_ub} and Eq.~\eqref{eq:second_sum_ub} into Eq.~\eqref{eq:ub_sum_collision_error_2norm} to obtain
\begin{align}
    \Exp_H\left[ \sum_{i=1}^s \mathsf{err}_i^2 \right] \leq \frac{2}{2^t}.
    \label{eq:ub2_sum_collision_error_2norm}
\end{align}
In order to prove our expectation upper bound now, we use
\begin{align}
    \Exp_H \left[ \sum_{i=1}^s \mathsf{err}_i \right] \leq \sqrt{s} \, \Exp_H\left[ \sqrt{\sum_{i=1}^s \mathsf{err}_i^2} \right] \leq \sqrt{s} \sqrt{\Exp_H\left[ \sum_{i=1}^s \mathsf{err}_i^2 \right]} \leq \sqrt{\frac{2s}{2^t}},
\end{align}
where we used Jensen's inequality in the second inequality and Eq.~\eqref{eq:ub2_sum_collision_error_2norm} in the last inequality. This proves the result regarding expectation. To prove the result regarding variance, we note that
\begin{align}
    \Var_H \left[ \sum_{i=1}^s \mathsf{err}_i \right]  \leq  \Exp_H\left[ \left(\sum_{i=1}^{s}\mathsf{err}_i\right)^2 \right] \leq  s\Exp_H\left[ \sum_{i=1}^{s}\mathsf{err}_i^2 \right]\leq \frac{2s}{2^t}.
\end{align}
where we again used Eq.~\eqref{eq:ub2_sum_collision_error_2norm} in the last inequality.
\end{proof}

We are now ready to give the proof of Lemma~\ref{lemma:hashing_error_PC}.
\begin{proof}[Proof of Lemma~\ref{lemma:hashing_error_PC}]
From Eq.~\eqref{eq:ub_hashing_error_PC}, we have that $\mathsf{err}(\calE;H,s) \leq \sum_{i=1}^s \mathsf{err}_i$. Then,
\begin{equation}
    \Pr_H\left[ \mathsf{err}(\calE;H,s) \leq \varepsilon \right] \geq \Pr_H\left[ \sum_{i=1}^s \mathsf{err}_i \leq \varepsilon \right]
    \label{eq:ub_prob_hashing_error}
\end{equation}
Using Chebyshev's inequality along with Claim~\ref{claim:exp_var_collision_errors_PC}, we have that for any $a > 0$

\begin{align*}
    &\Pr_H\left[ \sum_{i=1}^s \mathsf{err}_i \geq \Exp_H\left[\sum_{i=1}^s \mathsf{err}_i \right] + a \sqrt{\Var_H\left[ \sum_{i=1}^s \mathsf{err}_i \right]} \right] \leq \frac{1}{a^2} \\
    \quad &\Pr_H\left[ \sum_{i=1}^s \mathsf{err}_i \geq \frac{\sqrt{2s} + a\sqrt{2s}}{\sqrt{2^t}} \right] \leq \frac{1}{a^2} \\
    \implies \quad &\Pr_H\left[ \sum_{i=1}^s \mathsf{err}_i \geq \left(1 + \frac{a \sqrt{s}}{\sqrt{s}}\right) \varepsilon \right] \leq \frac{1}{a^2}
\end{align*}
where we substituted $t = \log(2s/\varepsilon^2)$ in the third line. Finally setting $a=5$ and noting that $s \geq 1$, we can combine the above expression with Eq.~\eqref{eq:ub_prob_hashing_error} to show that
\begin{equation}
    \Pr_H\left[ \mathsf{err}(\calE;H,s) \leq 6 \varepsilon \right] \geq 0.96,
\end{equation}
which completes the proof of the desired result.
\end{proof}

\paragraph{Estimation error.} Recall that our energy estimates of the different buckets were $\{\widetilde{E}(b)\}_{b \in \FF_2^t}$. Let the energy estimates of the buckets be indexed in non-increasing order as $\widetilde{E}_1 \geq \widetilde{E}_2 \geq \cdots \geq \widetilde{E}_{2^t}$. Let us denote the estimated energy over the top $s$ buckets as
$$
\Gamma = \sum_{j=1}^s \widetilde{E}_j.
$$
We now bound the error between our estimate $\Gamma$ and $\mathsf{Energy}(\calE;s)$.
\begin{claim} \label{claim:energy_est_error_PC}
Suppose $H$ is a random subgroup of dimension $t$. If $t \geq \log(2s/\varepsilon^2)$, then with query complexity $O\left(\frac{s^2}{\varepsilon^6} \left(\log\frac{1}{\varepsilon} + \log s \right)\right)$, we have
$$
\Pr_H\left[ |\Gamma - \mathsf{Energy}(\calE;s)| \leq \varepsilon \right] \geq 0.92 \, .
$$
\end{claim}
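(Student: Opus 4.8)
The plan is to bound $|\Gamma - \mathsf{Energy}(\calE;s)|$ by splitting it, via the triangle inequality, into a \emph{hashing error} and an \emph{estimation error}:
$$
|\Gamma - \mathsf{Energy}(\calE;s)| \;\le\; \Big|\Gamma - \sum_{j=1}^s E_j\Big| \;+\; \Big|\sum_{j=1}^s E_j - \mathsf{Energy}(\calE;s)\Big|,
$$
where $E_1 \geq \dots \geq E_{2^t}$ are the true bucket energies in non-increasing order and $\Gamma = \sum_{j=1}^s \widetilde E_j$ is the sum of the top $s$ \emph{estimated} energies. The second term is exactly $\mathsf{err}(\calE;H,s)$, so by Lemma~\ref{lemma:hashing_error_PC} it is at most $6\varepsilon'$ with probability $\geq 0.96$ provided we choose the dimension $t \geq \log(2s/\varepsilon'^2)$; taking $\varepsilon' = \Theta(\varepsilon)$ makes this $\leq \varepsilon/2$.

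For the first term I would invoke Claim~\ref{claim:energy_est_PC} to obtain estimates $\{\widetilde E(b)\}_{b\in\FF_2^t}$ that are all simultaneously within some precision $\eta$ of the true $E(b)$, with probability $\geq 0.96$, at a cost of $O\big((2^t/\eta^2)\log(2^t/\delta)\big)$ queries. Since for every $s$-subset $S$ the sums $\sum_{b\in S}\widetilde E(b)$ and $\sum_{b\in S}E(b)$ differ by at most $s\eta$, their maxima $\Gamma$ and $\sum_{j\le s}E_j$ also differ by at most $s\eta$, so taking $\eta = \Theta(\varepsilon/s)$ makes the estimation error $\leq \varepsilon/2$. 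A union bound over the two bad events then yields the overall $1 - 0.04 - 0.04 = 0.92$ success probability, and Claim~\ref{lemma:sparse_paulis_struct} converts the resulting $\varepsilon$-estimate of $\mathsf{Energy}(\calE;s)$ into the accept/reject decision. Substituting $2^t = \Theta(s/\varepsilon^2)$ (forced by the hashing lemma) into the query count gives the target form $O\big((s^2/\varepsilon^6)(\log(1/\varepsilon) + \log s)\big)$ once the estimation-error propagation is sharpened as below, with the $\log(1/\varepsilon)+\log s$ factor arising from the union bounds over the $2^t=\Theta(s/\varepsilon^2)$ buckets and over the Pauli fidelities queried inside Claim~\ref{claim:energy_est_PC}.

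The step I expect to be the main obstacle is precisely this estimation-error bookkeeping: the crude inequality $|\Gamma - \sum_{j\le s}E_j| \leq s\eta$ together with $2^t=\Theta(s/\varepsilon^2)$ only yields a weaker $O(s^3/\varepsilon^4)$ query count, so to reach $s^2/\varepsilon^6$ one must extract cancellation. Concretely, each $\widetilde E(b) - E(b) = \frac{1}{2^t}\sum_{z\in H}\big(\widetilde\lambda(z)-\lambda(z)\big)(-1)^{[z,a]}$ is an average of roughly independent mean-zero measurement fluctuations, so a Hoeffding/Bernstein bound (over the random outcomes) combined with a union bound over the $2^t$ buckets should give a per-bucket precision much finer than the per-fidelity precision, and likewise the aggregated deviation on a fixed $s$-subset should concentrate like $\sqrt{s}\,\eta$ rather than $s\eta$; alternatively one can observe that only $O(1/\varepsilon)$-many buckets carry weight above $\varepsilon/s$ so that the light buckets contribute $o(\varepsilon)$ to any top-$s$ sum regardless of their estimates. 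Pinning down one of these refinements, and then tracking the resulting $\log$ factors carefully, is the part that needs real work; everything else — the split, the two probabilistic bounds, and the final reduction to Claim~\ref{lemma:sparse_paulis_struct} — is routine.
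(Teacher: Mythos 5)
Your skeleton is exactly the paper's: split $|\Gamma-\mathsf{Energy}(\calE;s)|$ by the triangle inequality into the hashing error, killed by Lemma~\ref{lemma:hashing_error_PC} with $t\geq\log(2s/\varepsilon'^2)$, and the estimation error, killed by Claim~\ref{claim:energy_est_PC} plus a union bound, then union-bound the two failure events. The genuine gap is the one you flag yourself: your only concrete estimation-error argument (per-bucket precision $\eta=\Theta(\varepsilon/s)$, hence $O(2^t/\eta^2\log(2^t/\delta))=O\big((s^3/\varepsilon^4)(\log s+\log(1/\varepsilon))\big)$ queries) does not give the claimed $O\big((s^2/\varepsilon^6)(\log(1/\varepsilon)+\log s)\big)$, and since $s^3/\varepsilon^4$ and $s^2/\varepsilon^6$ are incomparable you cannot simply absorb one into the other. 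The refinements you then list are only named, not executed, and a proof has to commit to one. For the record, your first suggestion does close the gap (and overshoots): writing $\widetilde E(b)-E(b)=\frac{1}{2^t m}\sum_{z\in H}\sum_{i\in[m]}\big(\gamma^{(z)}_i-\lambda(z)\big)(-1)^{[z,a]}$ as a sum of $2^t m$ independent mean-zero terms each bounded by $2/(2^t m)$, Hoeffding plus a union bound over the $2^t$ buckets gives $\max_b|\widetilde E(b)-E(b)|=O\big(\sqrt{\log(2^t/\delta)/(2^t m)}\big)$, and demanding $s$ times this be at most $\varepsilon/2$ needs only $2^t m=O\big(s^2\log(2^t/\delta)/\varepsilon^2\big)$ total queries, well inside the stated budget.

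It is worth noting that the paper's own writeup glosses over exactly the same step. It bounds $\sum_{j=1}^{2^t}|\widetilde E_j-E_j|\leq\varepsilon/2$ and quotes the cost $O(2^{2t}/\varepsilon^2\log 4^t)$, which corresponds to per-fidelity precision $\varepsilon/\sqrt{2^t}$; naively summing that over $2^t$ buckets gives $\sqrt{2^t}\,\varepsilon$, not $\varepsilon/2$. The quoted complexity is recovered by the cancellation you anticipated: since $b\mapsto E(b)$ is (up to the $1/2^t$ normalization) the Fourier transform of $\lambda|_H$ on the quotient group, Parseval gives $\sum_b(\widetilde E(b)-E(b))^2=\frac{1}{2^t}\sum_{z\in H}(\widetilde\lambda(z)-\lambda(z))^2$, and Cauchy--Schwarz then turns per-fidelity precision $\varepsilon/(2\sqrt{2^t})$ into $\sum_b|\widetilde E(b)-E(b)|\leq\varepsilon/2$ at cost $O(2^{2t}/\varepsilon^2\cdot t)$. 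So your diagnosis of where the difficulty sits is accurate and arguably sharper than the paper's exposition, but as submitted the proposal establishes a different (incomparable) query bound rather than the one in the claim.
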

\begin{proof}
We note that 
\begin{align*}
    |\Gamma - \mathsf{Energy}(\calE;s)| = \left| \sum_{j=1}^s (\widetilde{E}_j - p_{y_j}) \right| &\leq \left| \sum_{j=1}^s (\widetilde{E}_j - E_j) \right| + \left| \sum_{j=1}^s (E_j - p_{y_j}) \right|
\end{align*}
The second term on the right hand side is the hashing error $\mathsf{err}(\calE;H,s)$, which by Lemma~\ref{lemma:hashing_error_PC} is bounded above by $\varepsilon/2$ for $t \geq \log (2s/\varepsilon^2)$ with probability more than $0.96$. The first time on the right hand side can be bounded as 
\begin{align*}
    \left| \sum_{j=1}^s (\widetilde{E}_j - E_j) \right| \leq \sum_{j=1}^s \left|\widetilde{E}_j - E_j \right| \leq \sum_{j=1}^{2^t} \left|\widetilde{E}_j - E_j \right| \leq \varepsilon/2,
\end{align*}
where we used the fact that top $s$ buckets are a subset of all the buckets in the second inequality and concluded the final inequality by using by Claim~\ref{claim:energy_est_PC} along with a union bound for $t \geq \log (2s/\varepsilon^2)$. This consumes $O(2^{2t}/\varepsilon^2 \log(4^t)) = O\left(\frac{s^2}{\varepsilon^6} \left(\log\frac{1}{\varepsilon} + \log s \right)\right)$ queries to ensure desired error and success probability greater than $0.96$. Combining the bounds on the above two sums gives us our desired~result.
\end{proof}

\paragraph{Proof of main theorem on testing sparse Pauli channels.} We can now complete the proof of Theorem~\ref{thm:testing_sparse_PC}, and thereby show correctness of Algorithm~\ref{algo:testing_sparsity_PC}.
\begin{proof}[Proof of Theorem~\ref{thm:testing_sparse_PC}]
Using Claim~\ref{claim:energy_est_error_PC}, we can ensure that $|\Gamma - \mathsf{Energy}(\calE;s)| \leq (\varepsilon_2 - 2\varepsilon_1)/3$ with probability greater than $0.92$, using $O\left(\frac{s^2}{(\varepsilon_2-\varepsilon_1)^6} \left(\log\frac{1}{(\varepsilon_2 - \varepsilon_1)} + \log s \right)\right)$ queries. Thanks to \cref{lemma:sparse_paulis_struct}, this is enough to determine if a Pauli channels is close or far from sparse.
\end{proof}

\subsection{Testing sparse Hamiltonians via Pauli hashing}
In this section, we describe how to test the sparsity of Hamiltonians by testing the sparsity of a related Pauli channel and then using Pauli hashing. 
Given a Hamiltonian $H=\sum\lambda_x\sigma_x$ its time evolution channel at time $t$ is given by 
\begin{equation}
    \calH_t(\rho) = U(t) \rho U(t)^\dagger,
\end{equation}
where $U(t)=e^{-itH}$. 
Considering the Pauli decomposition of the unitary $U(t) = \sum_{x \in \{0,1\}^{2n}} \widehat{U}(x) \sigma_x$ and its complex conjugate $U(t) = \sum_{x \in \{0,1\}^{2n}} \overline{\widehat{U}(x)} \sigma_x$, we can write
\begin{equation}
    \calH_t(\rho) = \sum_{x,y \in \{0,1\}^{2n}} \widehat{U}(x) \overline{\widehat{U}(y)} \sigma_x \rho \sigma_y = \sum_{x,y \in \{0,1\}^{2n}} \kappa_{x y} \sigma_x \rho \sigma_y,
\end{equation}
where $\kappa_{x y}$ are coefficients corresponding to the Pauli characters $\sigma_x \rho \sigma_y$. 
Because of Taylor theorem (\cref{eq:TaylorOrder1}), the diagonal coefficients $\kappa_{x x}$ are given by
\begin{equation} \label{eq:diagonal_pauli_coeffs_UHt}
    \kappa_{x x} = \begin{cases}
        |\widehat{U}(x)|^2 = \lambda_x^2 t^2 + o(t^3) &, x \neq 0, \\
        1 - \sum_{x \in \{0,1\}^{2n} \setminus 0^{2n}} |\widehat{U}(x)|^2 = 1 - t^2 \sum_{x \in \{0,1\}^{2n}\backslash 0^{2n}} \lambda_x^2 + o(t^3) &, x = 0,
    \end{cases}
\end{equation}
We can however convert the Hamiltonian evolution channel $\calH_t$ into a diagonal Pauli channel by applying \emph{Pauli twirling}~\cite{cai2019constructing,berg2023techniques} as follows
\begin{equation}
\calH_t^{\calT}(\rho) = \mathbb E_{x}[\sigma_x\mathcal{H}_t(\sigma_x\rho\sigma_x)\sigma_x].    
\end{equation}
Particularly, $\calH_t^{\calT}$ takes the following form
\begin{equation} \label{eq:twirled_channel_H}
    \calH_t^{\calT}(\rho) = \sum_{x \in \{0,1\}^{2n}} p_t^\calT(x) \sigma_x \rho \sigma_x,
\end{equation}
where we have denoted the error rates of the channel $\calH_t^\calT$ as $\{p_t^\calT(x)\}$ and they are related to $\kappa_{x x}$ as $p(x) = \kappa_{x x}$.

We now provide intuition for how testing the sparsity of $n$-qubit Hamiltonians can be accomplished by testing the sparsity of the Pauli channel $\calH_t^{\calT}$. 
Let us order the Pauli coefficients of $U(t)$ excluding $\widehat{U}(0^{2n})$ in a non-increasing order as $|\widehat{U}_1|^2 \geq |\widehat{U}_2|^2 \geq \cdots \geq |\widehat{U}_{2^{2n}-1}|^2$. From Lemma~\ref{lem:sparsitydiscrepancy}, we noted that the following quantity was a good proxy for testing whether $H$ was close to being $s$-sparse or not
\begin{equation}
    \topenergy(t;s):= |\widehat{U}(0^{2n})|^2 + \sum_{j=1}^s |\widehat{U}_j|^2,
    \label{eq:top_energy_ham}
\end{equation}
Moreover, we observe that \footnote{This is true provided that $0^{2n}$ is not among the top $s$ error rates of $\calH_t^\calT$. We will make a more general statement later that removes this assumption.}
\begin{equation}
    \topenergy(t;s) = p_t^\calT(0^{2n}) + \mathsf{Energy}(\calH_t^\calT;s),
    \label{eq:top_energy_ham_error_rates}
\end{equation}
where $\mathsf{Energy}(\calE;s)$ is estimated in Algorithm~\ref{algo:testing_sparsity_PC} as part of testing Pauli channels $\calE$ and is equal to the sum of the top $s$ error rates of the channel.  We can thus test the sparsity of $n$-qubit Hamiltonians by testing the sparsity of the Pauli channel $\calH_t^{\calT}$, which leads to the following result which we will prove shortly.
\begin{theorem}\label{thm:sparsity_testing_pauli_hashing_ham}
Let $H$ be an $n$-qubit Hamiltonian such that $\norm{H}_{\mathrm{op}} \leq 1$. We can test if $H$ is $\varepsilon_1$-close (in normalized Frobenius norm) or $\varepsilon_2$-far from $s$-sparse Hamiltonians with probability $>0.9$ using 
$$
O\left(\frac{s^{14}}{(\eps_2^2-\eps_1^2)^{18}} \left(\log \frac{s^2}{(\eps_2^2-\eps_1^2)^3} + \log s\right) \right)
$$ queries to the evolution operator $U(t) = \exp(-iHt)$ for a choice of $t = O((\varepsilon_2^2 - \varepsilon_1^2)/s)$ and with a total evolution time of $O((s^{13} \log s)/(\eps_2^2-\eps_1^2)^{17})$.
\end{theorem}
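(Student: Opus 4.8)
The plan is to reduce testing $s$-sparsity of $H$ to estimating the quantity $\topenergy(t;s)$ of \cref{lem:sparsitydiscrepancy}, and then to compute that quantity by running the Pauli-channel energy estimator of \cref{sec:analysis_sparse_testing_PC} on the Pauli-twirled evolution channel $\calH_t^\calT$. First I would fix $t=\Theta((\eps_2^2-\eps_1^2)/s)$ with a sufficiently small hidden constant, exactly as in the proof of \cref{theo:sparsitytesting}; then \cref{lem:sparsitydiscrepancy} gives $\topenergy(t;s)\geq 1-\eps_1^2t^2-O(st^3)$ when $H$ is $\eps_1$-close to $s$-sparse and $\topenergy(t;s)\leq 1-\eps_2^2t^2+O(st^3)$ when $H$ is $\eps_2$-far, so it suffices to estimate $\topenergy(t;s)$ up to additive error $\eps:=\Theta\big((\eps_2^2-\eps_1^2)^3/s^2\big)$, the gap being $\Theta((\eps_2^2-\eps_1^2)t^2)$ and the Taylor error being controllable by the constant in $t$.

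The next step is to express $\topenergy(t;s)$ in terms of $\calH_t^\calT$. By \cref{eq:diagonal_pauli_coeffs_UHt}, the error rates of $\calH_t^\calT$ are $p_t^\calT(x)=\kappa_{xx}=|\widehat{U}(x)|^2$ for $x\neq 0^{2n}$ and $p_t^\calT(0^{2n})=1-\sum_{x\neq 0^{2n}}|\widehat{U}(x)|^2$; since $t<1$ the identity rate exceeds $1/2$ while every other rate is at most $\sum_{x\neq 0^{2n}}|\widehat{U}(x)|^2<1/2$, so $0^{2n}$ is the heaviest coefficient and the top $s+1$ error rates of $\calH_t^\calT$ are exactly $p_t^\calT(0^{2n})$ together with the top $s$ of the $\{|\widehat{U}(x)|^2\}_{x\neq 0^{2n}}$. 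Hence $\mathsf{Energy}(\calH_t^\calT;s+1)=\topenergy(t;s)$, which removes the caveat of the footnote after \cref{eq:top_energy_ham_error_rates}. I would then note that $\calH_t^\calT$ is accessible without quantum memory using a single application of $U(t)$: sampling $x$ uniformly and applying $\sigma_x\,U(t)\,\sigma_x$ realizes one use of the channel (the conjugations are Clifford), and its Pauli fidelities $\lambda^\calT(z)=\tfrac{1}{2^n}\Tr[\sigma_zU(t)\sigma_zU(t)^\dagger]$ are estimated by preparing a $\sigma_z$-eigenstate, evolving under $U(t)$, and measuring $\sigma_z$, exactly as in \cref{claim:energy_est_PC}.

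Finally I would apply \cref{claim:energy_est_error_PC} verbatim to the Pauli channel $\calE=\calH_t^\calT$ with sparsity parameter $s+1$ and accuracy $\eps$: sample a random subgroup of dimension $O(\log((s+1)/\eps^2))$, estimate the energies of the $O(s/\eps^2)$ buckets via \cref{claim:energy_est_PC}, and output $\Gamma=\sum_{j\le s+1}\widetilde{E}_j$. Since the hashing bound \cref{claim:exp_var_collision_errors_PC} only uses $\sum_x p_t^\calT(x)=1$, it applies even with one error rate close to $1$, and \cref{claim:energy_est_error_PC} yields $|\Gamma-\mathsf{Energy}(\calH_t^\calT;s+1)|\le\eps$ with probability $>0.9$ using $O\big((s+1)^2/\eps^6\,(\log(1/\eps)+\log(s+1))\big)$ queries to $\calH_t^\calT$. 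Substituting $\eps=\Theta\big((\eps_2^2-\eps_1^2)^3/s^2\big)$ produces the claimed $O\big(s^{14}/(\eps_2^2-\eps_1^2)^{18}\,(\log(s^2/(\eps_2^2-\eps_1^2)^3)+\log s)\big)$ queries to $U(t)$, and multiplying by the per-query evolution time $t=\Theta((\eps_2^2-\eps_1^2)/s)$ gives total evolution time $O\big((s^{13}\log s)/(\eps_2^2-\eps_1^2)^{17}\big)$. Combining this estimation error with the $O(st^3)$ Taylor error of \cref{lem:sparsitydiscrepancy} keeps the total below the gap, so the acceptance rule decides correctly with probability $>0.9$. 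I expect the main obstacle to be precisely the dominant identity coefficient $p_t^\calT(0^{2n})\approx 1$: because $0^{2n}$ is always a heavy hitter one must run the channel estimator with sparsity $s+1$ rather than $s$ and check that the collision analysis of \cref{claim:exp_var_collision_errors_PC}, stated for arbitrary error rates summing to $1$, is unharmed by a coefficient near $1$ (it is); a secondary point is to verify that twirling the evolution channel costs nothing extra, so that each query to $\calH_t^\calT$ still corresponds to one short-time evolution under $H$ and the query/evolution-time bookkeeping goes through.
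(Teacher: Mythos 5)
Your proposal is correct and follows essentially the same route as the paper: choose $t=\Theta((\eps_2^2-\eps_1^2)/s)$, invoke \cref{lem:sparsitydiscrepancy} to reduce to estimating $\topenergy(t;s)$ to accuracy $\Theta((\eps_2^2-\eps_1^2)^3/s^2)$, and run the Pauli-hashing energy estimator on the Pauli-twirled channel $\calH_t^\calT$, whose fidelities are obtained by random Pauli conjugation around a single use of $U(t)$. The only (harmless) cosmetic difference is that you fold the identity coefficient into the estimator by taking the top $s+1$ buckets, whereas the paper singles out the bucket $C(0^d)$ and adds the top $s$ of the remaining ones; both yield the same $\Gamma$ and the same complexity.
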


\subsubsection{Testing algorithm}
We now proceed as we had for testing sparse Pauli channels in the earlier section. We define the relevant random coset structure and how to estimate energy across different buckets, which will be the main component of our testing algorithm. The main difference here is that we do not have direct access to the relevant Pauli channel i.e., the Pauli-twirled Hamiltonian evolution channel $\calH_t^\calT$ but we show that we can still compute energy estimates across the different buckets corresponding to $\calH_t^\calT$ by querying the Hamiltonian evolution channel $\calH_t$ itself.

\paragraph{Hashing and coset structure.} As in Section~\ref{sec:hashing_random_cosets}, we define the pairwise independent hashing process corresponding to a random subgroup $G$ of dimension $d$, generated by uniformly sampling $\{g_j\}_{j \in d}$ from $\FF_2^{2n}$ randomly. In a slight abuse of notation, we will also call the subgroup of Weyl operators corresponding to the strings in $G$ as $G$ as well and similarly for the generators. We then define the buckets corresponding to $b \in \FF_2^d$ accordingly as
\begin{equation} \label{eq:buckets_ham}    
C(b) := \{ x \in \FF_2^{2n} : [x, g_j] = b_j \, \forall j \in [d] \}.
\end{equation}
Let us denote the energy of bucket $C(b)$ for each $b\in \FF_2^d$ as the sum of the error rates of the Pauli channel $\calH_t^{\calT}$~(Eq.~\eqref{eq:twirled_channel_H}) obtained from Pauli twirling of the Hamiltonian evolution channel in the corresponding coset $a + \commutant(G)$, expressed as
\begin{equation} \label{eq:energy_bucket_ham}
E(b) = \sum_{x \in a + \commutant(G)} p_t^\calT(x).
\end{equation}

\paragraph{Energy estimation.} Recall from Eq.~\eqref{eq:energy_bucket_PC} that the enery of each bucket can be expressed as
\begin{align}
E(b) = \frac{1}{|\commutantperp(G)|} \sum_{z \in G} \lambda_t^\calT(z) (-1)^{[z,a]},
\label{eq:energy_bucket_ham_PC}
\end{align}
where $\lambda_t^\calT$ are the Pauli fidelities of the channel $\calH_t^\calT$. We now discuss how the energy of each bucket is computed efficiently using Pauli eigenstates and Pauli measurements.

\begin{claim}
Let $G$ be a random subspace of dimension $d$. Then using Algorithm~\ref{algo:energy_estimation_ham}, the energy $E(b)$ of each bucket $b \in \FF_2^d$ can be estimated to within error $\varepsilon$ with probability $1-\delta$ with $O(2^d/\varepsilon^2 \cdot \log(2^d/\delta))$ queries. This only requires preparation of Pauli eigenstates and Pauli measurements.
\label{claim:energy_est_ham}
\end{claim}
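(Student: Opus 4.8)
The plan is to follow the blueprint of \cref{claim:energy_est_PC}, the only genuinely new ingredient being that we cannot query the twirled channel $\calH_t^\calT$ directly --- only the Hamiltonian evolution channel $\calH_t\colon\rho\mapsto U(t)\rho U(t)^\dagger$, which costs one query to the evolution operator. The key observation that makes this possible is that \emph{Pauli twirling preserves Pauli fidelities}. Indeed, for every $z\in\FF_2^{2n}$, using $\sigma_x\sigma_z\sigma_x=(-1)^{[x,z]}\sigma_z$ inside the twirl and once more under the trace so that the two sign factors cancel,
$$
\lambda_t^\calT(z)=\frac{1}{2^n}\Tr\!\big(\sigma_z\,\calH_t^\calT(\sigma_z)\big)=\Exp_x\frac{1}{2^n}\Tr\!\big(\sigma_z\,\sigma_x\calH_t(\sigma_x\sigma_z\sigma_x)\sigma_x\big)=\frac{1}{2^n}\Tr\!\big(\sigma_z\,\calH_t(\sigma_z)\big)=:\lambda_t(z).
$$
Consequently the Pauli fidelities of $\calH_t^\calT$ appearing in the bucket-energy formula $E(b)=\frac{1}{|\commutantperp(G)|}\sum_{z\in G}\lambda_t^\calT(z)(-1)^{[z,a]}$ from \eqref{eq:energy_bucket_ham_PC} (with $a$ a representative of the coset $C(b)=a+\commutant(G)$) are \emph{exactly} the numbers $\lambda_t(z)=\frac{1}{2^n}\Tr(\sigma_z\calH_t(\sigma_z))$, which can be accessed through $\calH_t$ without ever implementing the twirl.

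Next I describe the estimation subroutine (this is \cref{algo:energy_estimation_ham}). A single fidelity $\lambda_t(z)$ is estimated without quantum memory as follows: prepare a uniformly random stabilizer eigenstate $|\psi\rangle$ of the Pauli $\sigma_z$, recording its $\pm1$ eigenvalue $b$ (such an eigenbasis of $\sigma_z$ is a stabilizer basis, preparable by a Clifford circuit with no ancilla); apply $\calH_t$, i.e.\ one query to $U(t)$; measure $\sigma_z$, obtaining a $\pm1$ outcome $b'$; output $bb'$. Averaging over the random eigenstate gives $\Exp[\,b\,|\psi\rangle\!\langle\psi|\,]=\sigma_z/2^n$, so $\Exp[bb']=\frac{1}{2^n}\Tr(\sigma_z\calH_t(\sigma_z))=\lambda_t(z)$, and since $bb'\in[-1,1]$ the Hoeffding bound (\cref{lem:hoeffding}) gives an estimate $\widetilde\lambda_t(z)$ with $|\widetilde\lambda_t(z)-\lambda_t(z)|\le\varepsilon$ with probability $\ge1-\delta'$ from $O(\varepsilon^{-2}\log(1/\delta'))$ repetitions; the trivial string $z=0^{2n}$ contributes $\lambda_t(0^{2n})=1$ for free. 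One then forms, for every bucket $b\in\FF_2^d$, the estimate $\widetilde E(b)=\frac{1}{|\commutantperp(G)|}\sum_{z\in G}\widetilde\lambda_t(z)(-1)^{[z,a]}$; note all $2^d$ buckets reuse the \emph{same} $2^d$ fidelity estimates.

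Finally I assemble the error bound exactly as in \cref{claim:energy_est_PC}: take $\delta'=\delta/2^d$ and union-bound over the $2^d$ strings $z\in G$, so that with probability $\ge1-\delta$ all estimates are simultaneously $\varepsilon$-accurate, whence
$$
|\widetilde E(b)-E(b)|\le\frac{1}{|\commutantperp(G)|}\sum_{z\in G}\big|\widetilde\lambda_t(z)-\lambda_t(z)\big|\le\frac{1}{|G|}\sum_{z\in G}\varepsilon=\varepsilon
$$
for all $b$ at once, using $|\commutantperp(G)|=|G|=2^d$. The total cost is $2^d$ fidelities times $O(\varepsilon^{-2}\log(2^d/\delta))$ queries each, i.e.\ $O(2^d/\varepsilon^2\cdot\log(2^d/\delta))$, and every query uses only a Pauli eigenstate preparation, one application of $U(t)$, and a Pauli measurement. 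The one step worth spelling out carefully --- and the only real obstacle --- is the invariance of the Pauli fidelities under the twirl, since it is what lets us read off the bucket energies of the channel we are \emph{interested} in ($\calH_t^\calT$) from the channel we can \emph{implement} ($\calH_t$); the remainder is bookkeeping identical to the sparse-Pauli-channel case.
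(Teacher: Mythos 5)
Your proposal is correct, and it takes a genuinely different route from the paper. The paper's Algorithm~\ref{algo:energy_estimation_ham} implements the Pauli twirl \emph{physically}: it samples a uniform $a\in\FF_2^{2n}$ and conjugates by $\sigma_a$ both before and after applying $U(t)$, then shows in its proof of \cref{claim:energy_est_ham} that marginalizing over the random $a$ yields $\Pr(w)=\frac{1+(-1)^w\lambda_t^\calT(z)}{2}$. Your key observation --- that Pauli twirling leaves the \emph{diagonal} Pauli fidelities invariant, so $\lambda_t^\calT(z)=\lambda_t(z)$ for all $z$ --- lets you dispense with the random conjugation entirely and estimate $\frac{1}{2^n}\Tr(\sigma_z\calH_t(\sigma_z))$ directly. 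The computation you give for the invariance ($\sigma_x\sigma_z\sigma_x=(-1)^{[x,z]}\sigma_z$ under the twirl, and another $(-1)^{[x,z]}$ from cyclicity of the trace, so the signs cancel) is correct. There is also a cosmetic difference in state preparation: the paper prepares the mixed state $\frac{\Id+\sigma_z}{2^n}$ (uniformly random $+1$ eigenstate of $\sigma_z$), whereas you prepare a uniformly random eigenstate of $\sigma_z$ and multiply the measurement outcome by the recorded eigenvalue $b$; both give the same expectation and complexity. The upshot of your route is a slightly simpler circuit (no in-circuit $\sigma_a$ conjugations), the same $O(2^d/\varepsilon^2\log(2^d/\delta))$ query bound, and a cleaner conceptual picture of \emph{why} querying $\calH_t$ suffices to learn the bucket energies defined through $\calH_t^\calT$; the paper's version instead verifies this fact by an explicit probability calculation. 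Both proofs then reduce to the triangle-inequality and union-bound argument of \cref{claim:energy_est_PC}, which you reproduce correctly.
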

\begin{proof}
Let the eigenvalue measurement outcome obtained in line~\ref{algo_line:meas_outcome_energy_est_ham} of Algorithm~\ref{algo:energy_estimation_ham} be denoted as $(-1)^w$ with $w \in \{0,1\}$. The joint probability of obtaining $a \in \FF_2^{2n}$ (which are sampled uniformly at random) and obtaining measurement outcome $w \in \{0,1\}$ is then
\begin{align*}
    \Pr(a,w) = \frac{1}{2^{2n}} \Tr \left[ \frac{\Id + (-1)^w\sigma_z}{2} \sigma_a \calH_t\left(\sigma_a \frac{\Id_{2^n} + \sigma_z}{2^n} \sigma_a \right) \sigma_a \right]
\end{align*}
Marginalizing over $a \in \FF_2^{2n}$ gives us the probability of obtaining measurement outcome $w$
\begin{align*}
    \Pr(w) &= \Exp_{a \in \FF_2^{2n}} \Tr \left[ \frac{\Id + (-1)^w\sigma_z}{2} \sigma_a \calH_t\left(\sigma_a \frac{\Id_{2^n} + \sigma_z}{2^n} \sigma_a \right) \sigma_a \right] \\
    &= \Tr \left[ \frac{\Id + (-1)^w\sigma_z}{2} \Exp_{a \in \FF_2^{2n}} \left[\sigma_a \calH_t\left(\sigma_a \frac{\Id_{2^n} + \sigma_z}{2^n} \sigma_a \right) \sigma_a \right] \right] \\
    &= \Tr \left[ \frac{\Id + (-1)^w\sigma_z}{2} \calH_t^\calT \left(\frac{\Id_{2^n} + \sigma_z}{2^n}\right) \right] \\
    &= \frac{1}{2} \Tr \left[ \calH_t^\calT \left(\frac{\Id_{2^n} + \sigma_z}{2^n}\right) \right] + \frac{(-1)^w}{2} \Tr \left[\sigma_z \calH_t^\calT \left(\frac{\Id_{2^n} + \sigma_z}{2^n}\right) \right] \\
    &= \frac{1}{2} + \Tr \left[ \calH_t^\calT \left(\frac{\sigma_z}{2^n}\right) \right] + \frac{(-1)^w}{2} \Tr \left[\sigma_z \right] + \frac{(-1)^w}{2} \Tr \left[\sigma_z \calH_t^\calT \left(\frac{\sigma_z}{2^n}\right) \right] \\
    &= \frac{1 + (-1)^w \lambda_t^{\calT}(z)}{2}
\end{align*}
where the third equality follows from the definition of Pauli twirling and definition of $\calH_t^\calT$~ in Eq.~\eqref{eq:twirled_channel_H}, and we use the definition of Pauli fidelity in the last equality. Observing that $\Exp[(-1)^w] = \lambda_t^{\calT}(z)$, we can then obtain an estimate $\lambda_t^{\calT}(z)$ using the outcomes of $w$ and taking an empirical mean of $(-1)^w$. The estimate will be within $\varepsilon$ error with probability $1-\delta$ using $O(1/\varepsilon^2 \log(1/\delta))$ queries. We can now conclude as we had done in the proof of Claim~\ref{claim:energy_est_PC} and noting Eq.~\eqref{eq:energy_bucket_ham_PC}.
\end{proof}

\begin{algorithm}[H]
\caption{Energy estimation of buckets for Hamiltonian evolution} \label{algo:energy_estimation_ham}
\textbf{Input}: Budget $T=O(2^d/\varepsilon^2 \log(2^d/\delta))$, access to unitary evolution $U(t)=\exp(-iHt)$, evolution time $t$, subspace $G$ of dimension $d=O(\log s)$, set of buckets $B = \{b\} \subseteq \FF_2^d$
\begin{algorithmic}[1]
    \State Initialize energy estimates $\widetilde{E}(b) = 0$
    \State Initialize fidelity estimates $\widetilde{\lambda}_t^\calT(z) = 0$ and counter $m(z)$ for each $z \in G$
    \For{$z \in G$}
    \For{query $t=1,\ldots,T$}
        \State Uniformly sample at random $a \in \FF_2^{2n}$
        \State Prepare Pauli eigenstate $\rho_z$ of $\sigma_z$
        \State Apply Pauli $\sigma_a$ to $\rho_z$
        \State Apply unitary evolution $U(t)=\exp(-iHt)$
        \State Apply Pauli $\sigma_a$
        \State Measure current state with respect to the Pauli basis $\sigma_z$ to obtain eigenvalue $\gamma \in \{\pm 1\}$ \label{algo_line:meas_outcome_energy_est_ham}
        \State Update $\widetilde{\lambda}_t^\calT(z) \leftarrow \widetilde{\lambda}_t^\calT(z) + \gamma$, $m(z) \leftarrow m(z) + 1$
    \EndFor 
    \State Set $\widetilde{\lambda}_t^\calT(z) \leftarrow \frac{1}{m(z)} \widetilde{\lambda}_t^\calT(z)$
    \EndFor 
    \State Set $\widetilde{E}(b) \leftarrow \frac{1}{|G|} \sum_{z \in H} \widetilde{\lambda}_t^\calT(z) (-1)^{[z,a]}$ where $C(b)$ corresponds to coset $a + \commutant(G)$, for each $b \in B$
\end{algorithmic}
\textbf{Output}: $\{\widetilde{E}(b)\}_{b \in B}$
\end{algorithm}

We are now ready to give our tester which is showcased in Algorithm~\ref{algo:testing_sparsity_H}.
\begin{algorithm}[H]
\caption{Testing $s$-sparsity of Hamiltonians} \label{algo:testing_sparsity_H}
\textbf{Input}: Access to unitary Hamiltonian evolution $U(t) = \exp(-iHt)$, evolution time $t$, sparsity $s$, error parameters $0 < \varepsilon_1 < \varepsilon_2 \leq 1$
\begin{algorithmic}[1]
    \State Randomly sample a subgroup $H$ of dimension $d=O(\log s)$
    \State Set bucket indices $\{b\}=\FF_2^d$
    \State Set $\varepsilon = (\varepsilon_2^2 - \varepsilon_1^2)^3/6s^2$ and query budget $T= O(\poly(s)/\varepsilon^2 \log(1/\varepsilon))$
    \State Obtain energy estimates $\{\widetilde{E}(b)\}_{b \in \FF_2^d}$ using Algorithm~\ref{algo:energy_estimation_ham} upon inputs of $U(t)$, evolution time $t$, and budget $T$
    \State Compute $\Gamma \leftarrow \widetilde{E}(0^d) + \max \limits_{S \subset \FF_2^d \setminus 0^d: |S| = s} \sum_{j \in S} \widetilde{E}(j)$
    \State If $\Gamma \geq 1-\eps_1^2\frac{(\eps_2^2-\eps_1^2)^2}{s^2}-\frac{1}{2}\frac{(\eps^2_2-\eps_1^2)^3}{s^2}$, accept.
    \State If $\Gamma \leq 1-\eps_2^2\frac{(\eps_2^2-\eps_1^2)^2}{s^2}+\frac{1}{2}\frac{(\eps^2_2-\eps_1^2)^3}{s^2}$, reject.
\end{algorithmic}
\textbf{Output}: \textsf{FLAG} for accepting/rejecting $H$
\end{algorithm}

\subsubsection{Analysis}
In this section, we prove Theorem~\ref{algo:testing_sparsity_H} and thereby show correctness of Algorithm~\ref{algo:testing_sparsity_H}.
\paragraph{Hashing error.}
As in Section~\ref{sec:analysis_sparse_testing_PC}, we first analyze the error in estimating the energy of the top $s$ error rates $\mathsf{Energy}(\calH_t^\calT;s)$ via our random hashing process. We assume here that all the energy estimates across the cosets are exact and analyze the estimation error later.

Let the energies of the buckets $\{C(b)\}_{b \in \FF_2^{d} \setminus 0^d}$ be indexed in non-increasing order by energy as $E_1 \geq E_2 \geq \cdots \geq E_{2^d-1}$. We pay special attention to the bucket corresponding to the coset $C(0^d)$ which will always include the error rate of $p(0^{2n})$ due to the construction of our buckets. We will also use $\tilde{E}_j = \max_{x \in C(b)} p(x)$ to denote the energy of largest coefficient hashed into the $j$th bucket denoted by $C(b)$ for some $b \in \FF_2^{d}$. The true values of the energies are clearly the error rates $\{p_t^\calT(x)\}_{x \in \FF_2^{2n}}$ themselves. We will also order the error rates $\{p_t^\calT(x) \}_{ \in \FF_2^{2n} \setminus 0^{2n}}$(i.e., excluding $p(0^{2n})$) in a non-increasing order as $p_1^\calT \geq p_2^\calT \geq \cdots \geq p_{2^{2n}-1}^\calT$.  The goal is to obtain an estimate of $\topenergy(t;s)$~(Eq.~\eqref{eq:top_energy_ham_error_rates}) from the constructed buckets. In particular, we want to determine the value of $p(0^{2n}) + \sum_{j=1}^s p_{y_j}$ to obtain such an estimate. The \emph{hashing error} is then accordingly defined as
\begin{equation*}
\mathsf{err}(\calH_t^\calT;G,s) = \left(E(0^d) - p(0^{2n}) \right) + \sum_{j=1}^s \left( E_j - p_{y_j} \right),    
\end{equation*}
where we denoted the error by $\mathsf{err}(\cdot)$ along with noting the random subspace $G$ involved in Pauli hashing. As $p_1^\calT,\ldots,p_s^\calT$ are the $s$ largest error rates in $\FF_2^{2n} \setminus 0^{2n}$, it follows that 
\begin{equation*}
    \mathsf{err}(\calH_t^\calT;G,s) = \left(E(0^d) - p(0^{2n}) \right) + \sum_{j=1}^s \left( E_j - {E}'_j \right) + \sum_{j=1}^s \left( {E}'_j - p_{y_j} \right) \leq \left(E(0^d) - \tilde{E}(0^d) \right) + \sum_{j=1}^s \left( E_j - {E}'_j \right)
\end{equation*}
The following corollary of Lemma~\ref{lemma:hashing_error_PC} is then immediate.
\begin{corollary}\label{corr:hashing_error_ham}
Fix $\varepsilon \in (0,1/2]$. If $G$ is a random subgroup of dimension $t\geq \log (2s/\varepsilon^2)$,~then
$$
\Pr_{G}\left[\mathsf{err}(\calH_t^\calT;G,s) \leq 6\varepsilon \right] \geq 0.96 \,.
$$  
\end{corollary}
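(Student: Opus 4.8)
The plan is to read off the corollary directly from Lemma~\ref{lemma:hashing_error_PC} (equivalently, from its engine Claim~\ref{claim:exp_var_collision_errors_PC}); the only new feature of the Hamiltonian setting is the distinguished bucket $C(0^t)$ that always contains the string $0^{2n}$, and this contributes just one extra collision-error term. First I would record the decomposition stated just above the corollary: writing $E_j' = \max_{x\in B(j)} p_t^\calT(x)$ for the heaviest error rate of the $j$-th non-trivial bucket and $p_1^\calT \ge p_2^\calT \ge \cdots$ for the error rates of $\calH_t^\calT$ outside $0^{2n}$, one has $\sum_{j=1}^s (E_j' - p_j^\calT) \le 0$ — exactly as in Eq.~\eqref{eq:ub_hashing_error_PC}, since $E_1',\dots,E_s'$ are the error rates of $s$ distinct strings none equal to $0^{2n}$, hence sum to at most the $s$ largest such error rates — so
$$
\mathsf{err}(\calH_t^\calT;G,s) \;\le\; \big(E(0^t) - \tilde E(0^t)\big) + \sum_{j=1}^s (E_j - E_j') \;=:\; \sum_{j\in\mathcal J}\mathsf{err}_j,
$$
a sum of collision errors $\mathsf{err}_j = E_j - E_j'$ over a set $\mathcal J$ of $s+1$ buckets (namely $C(0^t)$ together with the $s$ heaviest non-trivial buckets). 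At this point I would verify the one genuinely Hamiltonian-specific fact: $E(0^t) - p_t^\calT(0^{2n})$ really is a collision error $E(0^t) - \tilde E(0^t)$, i.e.\ $0^{2n}$ is the heaviest string in its own coset. This is immediate from Eq.~\eqref{eq:diagonal_pauli_coeffs_UHt}: $p_t^\calT(0^{2n}) = 1 - O(t^2)$ while every other error rate is $O(t^2)$, so for the small evolution time we use (in particular $\|H\|_\infty\le 1$ and $t$ bounded), $p_t^\calT(0^{2n})$ dominates all the others.

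Next I would invoke Claim~\ref{claim:exp_var_collision_errors_PC} essentially verbatim. Its proof in fact establishes $\Exp_G\big[\sum_{i=1}^{2^t}\mathsf{err}_i^2\big]\le 2\cdot 2^{-t}$, the expected sum of squared collision errors over \emph{all} $2^t$ buckets (see Eq.~\eqref{eq:ub2_sum_collision_error_2norm}), so a fortiori $\Exp_G\big[\sum_{j\in\mathcal J}\mathsf{err}_j^2\big]\le 2\cdot 2^{-t}$. Cauchy--Schwarz over the $|\mathcal J| = s+1 \le 2s$ indices then gives $\Exp_G\big[\sum_{j\in\mathcal J}\mathsf{err}_j\big]\le \sqrt{2(s+1)2^{-t}}\le 2\sqrt{s\,2^{-t}}$, and likewise $\Var_G\big[\sum_{j\in\mathcal J}\mathsf{err}_j\big]\le (s+1)\cdot 2\cdot 2^{-t}\le 4s\,2^{-t}$ — exactly the bounds of Claim~\ref{claim:exp_var_collision_errors_PC} up to a harmless factor $\sqrt 2$. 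Substituting $t = \log(2s/\varepsilon^2)$ makes both the expectation and the standard deviation at most $\sqrt 2\,\varepsilon$, and Chebyshev's inequality applied exactly as in the proof of Lemma~\ref{lemma:hashing_error_PC} then yields $\Pr_G[\mathsf{err}(\calH_t^\calT;G,s)\le 6\varepsilon]\ge 0.96$, with the constant $6$ absorbing the extra bucket and the $\sqrt 2$.

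I do not expect a real obstacle here: this is a corollary and all the probabilistic heavy lifting already lives in Claim~\ref{claim:exp_var_collision_errors_PC}. The only points requiring a moment's care are (i) checking that $0^{2n}$ is maximal in its coset so that the $C(0^t)$ term is a collision error of the type already controlled — this is where $\|H\|_\infty\le 1$ and the smallness of the evolution time enter, via Eq.~\eqref{eq:diagonal_pauli_coeffs_UHt} — and (ii) bookkeeping the constants so that passing from $s$ buckets to $s+1$ buckets still lands inside the stated $6\varepsilon$ bound, which it does since $s+1\le 2s$ only rescales the mean and standard deviation by $\sqrt 2$.
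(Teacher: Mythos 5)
Your proposal is correct and follows exactly the route the paper intends: the paper in fact omits the proof of this corollary, remarking only that Lemma~\ref{lemma:hashing_error_PC} applies verbatim to the $s+1$ buckets consisting of $C(0^d)$ together with the top $s$ non-trivial buckets, which is precisely your argument. Your explicit check that $0^{2n}$ is the heaviest string in its own coset (via Eq.~\eqref{eq:diagonal_pauli_coeffs_UHt} and the smallness of $t$) is a detail the paper's displayed inequality silently relies on and is worth keeping; the only quibble is that with $a=5$ your $\sqrt{2}$ rescaling yields a threshold of $6\sqrt{2}\,\varepsilon$ rather than $6\varepsilon$, a constant-level slippage that is equally present in the paper's own implicit argument.
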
 
We remark that the proof of Corollary~\ref{corr:hashing_error_ham} is very similar to that of Lemma~\ref{lemma:hashing_error_PC} and is thus not included. Note that the proof of Lemma~\ref{lemma:hashing_error_PC} bounds the error of any $s$ buckets. In particular, it could be used over the $s+1$ buckets that include error rates corresponding to $0^{2n}$ (or bucket of $\commutant(G)$) and the top $s$ buckets different from $\commutant(G)$.

\paragraph{Estimation error.} Recall that our energy estimates of the different buckets were $\{\widetilde{E}(b)\}_{b \in \FF_2^t}$. Let the energies of the buckets $\{C(b)\}_{b \in \FF_2^{d} \setminus 0^d}$ be indexed in non-increasing order as $\widetilde{E}_1 \geq \widetilde{E}_2 \geq \cdots \geq \widetilde{E}_{2^d-1}$. Let us denote the estimated energy over the bucket $C(0^{2n})$ and the top $s$ buckets (excluding $C(0^{2n})$) as
$$
\Gamma = \widetilde{E}(0^d) + \max \limits_{S \subset \FF_2^d \setminus 0^d: |S| = s} \sum_{j \in S} \widetilde{E}(j) = \widetilde{E}(0^d) + \sum_{j=1}^s \widetilde{E}_j,
$$
where the last equality follows from the definition of $\widetilde{E}_j$ above. We now bound the error between our estimate $\Gamma$ and $\mathsf{TopEnergy}_H(t)$~(Eq.~\eqref{eq:top_energy_ham}) by immediately applying Claim~\ref{claim:energy_est_error_PC} to obtain the following corollary.
\begin{corollary}\label{corr:energy_est_error_H}
Suppose $G$ is a random subgroup of dimension $d$. If $d \geq \log(2s/\varepsilon^2)$, then with query complexity $O\left({s^2}/{\varepsilon^6} \left(\log\frac{1}{\varepsilon} + \log s \right)\right)$, we have
$$
\Pr_H\left[ |\Gamma - \topenergy(t;s)| \leq \varepsilon \right] \geq 0.92 \, .
$$
\end{corollary}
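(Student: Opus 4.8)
The plan is to obtain this corollary as a transcription of the proof of \cref{claim:energy_est_error_PC}, with the abstract Pauli channel $\calE$ there replaced by the Pauli-twirled Hamiltonian evolution channel $\calH_t^\calT$, and with one additional bucket carried along — the coset $C(0^d) = \commutant(G)$, which by construction of the hashing always contains the string $0^{2n}$. First I would combine the definition $\Gamma = \widetilde{E}(0^d) + \sum_{j=1}^s \widetilde{E}_j$ with \cref{eq:top_energy_ham_error_rates}, i.e.\ $\topenergy(t;s) = p_t^\calT(0^{2n}) + \mathsf{Energy}(\calH_t^\calT;s)$, and the triangle inequality, to split
\begin{align*}
|\Gamma - \topenergy(t;s)| \;\le\; \Big( |\widetilde{E}(0^d) - E(0^d)| + \sum_{j=1}^s |\widetilde{E}_j - E_j| \Big) + \mathsf{err}(\calH_t^\calT;G,s),
\end{align*}
which is exactly the decomposition into an estimation-error part and a hashing-error part used in \cref{claim:energy_est_error_PC}; the sorted terms $|\widetilde{E}_j - E_j|$ are controlled because sorting two pointwise-close vectors leaves them pointwise close.

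Next I would bound the hashing error using \cref{corr:hashing_error_ham}, applied with error parameter a suitable constant multiple of $\varepsilon$, so that $\mathsf{err}(\calH_t^\calT;G,s) \le \varepsilon/2$ with probability $\ge 0.96$ once $d = \Theta(\log(s/\varepsilon))$; here I would use the remark after \cref{corr:hashing_error_ham} — that the collision-error bound of \cref{lemma:hashing_error_PC} holds for any fixed family of $s$ buckets — applied to the $s+1$ buckets made up of $C(0^d)$ together with the top $s$ buckets among $\{C(b)\}_{b \ne 0^d}$. For the estimation-error part I would invoke \cref{claim:energy_est_ham} in place of \cref{claim:energy_est_PC}: since the Pauli fidelities $\lambda_t^\calT(z)$, $z \in G$, can be estimated purely through queries to the Hamiltonian evolution channel $\calH_t$, and a single set of fidelity estimates fixes every bucket energy simultaneously, estimating each fidelity to accuracy $\Theta(\varepsilon/2^d)$ forces the estimation-error part below $\varepsilon/2$; the same counting as in the proof of \cref{claim:energy_est_error_PC} — a union bound over the $2^d = \Theta(s/\varepsilon^2)$ fidelities — gives the stated query complexity $O\!\big(s^2/\varepsilon^6\,(\log(1/\varepsilon) + \log s)\big)$ with failure probability $\le 0.04$. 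A union bound over the two $\ge 0.96$-probability events then yields $|\Gamma - \topenergy(t;s)| \le \varepsilon$ with probability $\ge 0.92$.

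Since the whole argument is essentially a copy of the proof of \cref{claim:energy_est_error_PC}, I do not expect a genuine obstacle; the one point that needs care is the bookkeeping for the exceptional bucket $C(0^d)$. Because $0^{2n}$ always lands in $C(0^d)$, and by \cref{eq:diagonal_pauli_coeffs_UHt} its error rate $p_t^\calT(0^{2n}) = 1 - t^2 \sum_{x \ne 0} h_x^2 + o(t^3)$ is close to $1$, this bucket is not generic: it is never empty and its dominant element is never displaced by a collision. One therefore has to make sure that in applying \cref{corr:hashing_error_ham} it is precisely this coset that is paired with $p_t^\calT(0^{2n})$, and that the term $\widetilde{E}(0^d)$ appearing in $\Gamma$ denotes the energy estimate of this coset rather than of whichever coset happens to rank first by estimated energy. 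Once this indexing is pinned down, the estimation- and hashing-error bounds compose exactly as above and deliver the corollary.
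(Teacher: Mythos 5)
Your proposal is correct and follows essentially the same route as the paper, which proves this corollary by directly transcribing the argument of \cref{claim:energy_est_error_PC} with \cref{corr:hashing_error_ham} and \cref{claim:energy_est_ham} substituted for their Pauli-channel counterparts; your extra bookkeeping for the distinguished bucket $C(0^d)$ containing $0^{2n}$ is exactly the point the paper handles via its remark after \cref{corr:hashing_error_ham} about applying the collision bound to the $s+1$ buckets including $\commutant(G)$.
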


\paragraph{Proof of main theorem on testing sparse Hamiltonians.} We can now complete the proof of Theorem~\ref{thm:sparsity_testing_pauli_hashing_ham}, and thereby show correctness of Algorithm~\ref{algo:testing_sparsity_H}.
\begin{proof}[Proof of Theorem~\ref{thm:sparsity_testing_pauli_hashing_ham}]
Let $t=O((\eps_2^2-\eps_1^2)/s)$. By \cref{lem:sparsitydiscrepancy}, we have that if $H$ is $\eps_1$-close to being sparse, then 
\begin{equation*}
    \topenergy(t;s)\geq 1-\eps_1^2\frac{(\eps_2^2-\eps_1^2)^2}{s^2}-\frac{1}{3}\frac{(\eps^2_2-\eps_1^2)^3}{s^2},
\end{equation*}
while if $H$ is $\eps_2$-far from $s$-sparse, then 
\begin{equation*}
    \topenergy(t;s)\leq 1-\eps_2^2\frac{(\eps_2^2-\eps_1^2)^2}{s^2}+\frac{1}{3}\frac{(\eps^2_2-\eps_1^2)^3}{s^2}.
\end{equation*}
For our tester, it then suffices to estimate $\mathsf{TopEnergy}_H(t)$ up to error 
\begin{equation*}
    \eps=\frac{(\eps_2^2-\eps_1^2)^3}{6s^2}.
\end{equation*}
Using Corollary~\ref{corr:energy_est_error_H} and for the specified $t$, we can ensure that $|\Gamma - \mathsf{TopEnergy}_H(t)| \leq \eps$ with probability greater than $0.92$, using $O\left(\frac{s^{14}}{(\varepsilon_2^2-\varepsilon_1^2)^{18}} \left(\log\frac{s^2}{(\varepsilon_2^2 - \varepsilon_1^2)^3} + \log s \right)\right)$ queries.
The decision rules of $\Gamma$ in Algorithm~\ref{algo:testing_sparsity_H} for accepting/rejecting are then evident.
\end{proof}

\section{Lower bounds for learning}
Below we prove lower bounds on learning sparse and local Hamiltonians given query access to the time-evolution operator. In comparison to prior work of~\cite[Theorem~5.1]{bluhm2024hamiltonianv2} the class of Hamiltonians witnessing our lower bound also applies to algorithms which are diagonal and encode just Boolean functions; in contrast the hard instance in~\cite{bluhm2024hamiltonianv2} applies to Hamiltonians involve application of Haar random unitaries. 
In order to prove our lower bounds, we use the following generic lemma, which will allow us to reduce the problem of proving lower bounds for Hamiltonian testing/learning problems to a question about Boolean function analysis. Below we will be talking of the standard oracle model in query complexity, i.e.,
$$
O_f:\ket{x,0}\rightarrow \ket{x,f(x)}.
$$
\begin{theorem}[{\cite[Theorem~14]{gilyen2019optimizing}}]
\label{thm:phaseConv}
		Let $p: X \rightarrow [0,1]$, and suppose $U_p:\mathcal{H}\otimes\mathcal{H}_{\text{aux.}}\rightarrow\mathcal{H}\otimes\mathcal{H}_{\text{aux.}}$ is a probability oracle with an $n$-qubit auxiliary Hilbert space $\mathcal{H}_{\text{aux.}}=\mathbb{C}^{2^n}$.
		Let $\eps\in(0,1/3)$, then we can implement an $\eps$-approximate phase oracle $O$ such that for any phase oracle $\mathrm{O}_p$ defined as
  $$
  O_p:\ket{0} \mapsto \sum_x \sqrt{p(x)}\ket{x},
  $$ 
  and for all $\ket{\psi}\in \mathcal{H}$ 
		$$
  \|{O\ket{\psi}\ket{0}^{\!\otimes (n+a)}-\mathrm{O}_p\ket{\psi}\ket{0}^{\!\otimes (n+a)}}\|_{\mathrm{op}}\leq \eps,
  $$ 
		using $O(\log(1/\varepsilon))$ applications of $U_p$ and~$U_p^\dagger$, with $a=O(\log\log(1/\eps))$, where
  $$
  U_p:\ket{x,0}\rightarrow \ket{x}\otimes \big(\sqrt{p(x)}\ket{\psi_g(x)}\ket{0}+\sqrt{1-p(x)}\ket{\psi_b(x)}\ket{1}\big),
  $$
  where $\ket{\psi_g(x)},\ket{\psi_b(x)}$ are arbitrary orthogonal states.
	\end{theorem}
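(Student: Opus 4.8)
The statement is quoted verbatim from \cite{gilyen2019optimizing}, so strictly no proof is owed; but the route I would take to reprove it goes through a Hermitian block-encoding followed by a quantum eigenvalue transformation. The plan is: (1) from $U_p$ build, at the cost of one extra call to $U_p^\dagger$, a Hermitian unitary $V$ that block-encodes the diagonal operator $D=\mathrm{diag}_x(2p(x)-1)$; (2) apply quantum signal processing to $V$ with a polynomial of degree $O(\log(1/\eps))$ approximating the map $y\mapsto e^{i(y+1)/2}$, which converts the block-encoded $D$ into an $\eps$-approximation of the phase oracle $\ket{x}\mapsto e^{ip(x)}\ket{x}$, acting as the identity on the auxiliary register.

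For step (1), let $Z_{\mathrm f}$ be Pauli-$Z$ on the one-qubit flag register of $U_p$ and set $V:=U_p^\dagger\,(\Id\otimes Z_{\mathrm f})\,U_p$, which is Hermitian (in fact $V^2=\Id$). Writing $\ket{\phi_x}:=\ket{x}\otimes\ket{0}_{\mathrm{aux}}$, the key computation is $\bra{\phi_x}V\ket{\phi_y}=\delta_{xy}(2p(x)-1)$: the Kronecker delta holds because $U_p$ leaves the input register untouched, and within a fixed $x$ the junk states $\ket{\psi_g(x)},\ket{\psi_b(x)}$ live on the two distinct $Z_{\mathrm f}$-eigenspaces, so the cross terms vanish and one is left with $p(x)\cdot 1+(1-p(x))\cdot(-1)$. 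Hence, with $\Pi:=\Id_{\mathrm{input}}\otimes\kb{0}_{\mathrm{aux}}$, one gets $\Pi V\Pi=\sum_x(2p(x)-1)\kb{\phi_x}$, i.e.\ $V$ is a Hermitian block-encoding of $D$ costing one call each to $U_p$ and $U_p^\dagger$. For step (2), fix a polynomial $P$ with $\|P\|_{[-1,1]}\le 1$ and $\sup_{y\in[-1,1]}|P(y)-e^{i(y+1)/2}|\le\eps^2$; truncating the Jacobi--Anger (Chebyshev) expansion of the entire bounded function $e^{i(y+1)/2}$ yields such a $P$ of degree $O(\log(1/\eps))$, since its Chebyshev coefficients decay super-geometrically past a constant threshold. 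Quantum signal processing applied to $V$ with this $P$ (splitting $P$ into real and imaginary parts and recombining them by a one-qubit LCU) produces a unitary $O$ with $\Pi O\Pi=\sum_x P(2p(x)-1)\kb{\phi_x}=\sum_x e^{ip(x)}\kb{\phi_x}+O(\eps^2)$, using $\deg P=O(\log(1/\eps))$ applications of $V$ and of $V^\dagger=V$, one ancilla for the signal-processing rotation and an $O(\log\log(1/\eps))$-qubit register for the rotation-angle data and the LCU. Because $|e^{iy}|=1$, the component of $O\ket{\psi}\ket{0}^{\otimes(n+a)}$ leaking outside $\mathrm{range}(\Pi)$ has norm $O(\eps)$ while the in-block deviation from $\mathrm{O}_p$ is $O(\eps^2)$, so $\|O\ket{\psi}\ket{0}^{\otimes(n+a)}-\mathrm{O}_p\ket{\psi}\ket{0}^{\otimes(n+a)}\|_\infty\le O(\eps)$; rescaling $\eps$ by a constant gives the claim.

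The hard part will be the two bookkeeping steps. First, one must check carefully that the conjugation $U_p^\dagger Z_{\mathrm f}U_p$ genuinely cancels the input-dependent junk states and delivers an encoding that is \emph{block-diagonal in $x$} with exactly the entries $2p(x)-1$ — this is the single place where orthogonality of $\ket{\psi_g(x)}$ and $\ket{\psi_b(x)}$ on the flag value is essential. Second, one must pin the polynomial degree at $O(\log(1/\eps))$ via the decay of the Chebyshev coefficients of the exponential and then propagate the polynomial-approximation error through the eigenvalue transform all the way to the stated $\ell_\infty$ bound; the $O(\log\log(1/\eps))$ ancilla count then falls out of how the $O(\log(1/\eps))$ phase angles and the LCU control register are encoded.
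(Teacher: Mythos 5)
The paper cites this result verbatim as Theorem~14 of Gily\'en, Arunachalam and Wiebe and gives no proof of its own, so there is no in-paper argument to compare against. Taken on its merits, your sketch is a correct route to the result, and it is a sound modern reconstruction of what [GAW19] actually do. Two small comments are worth flagging.

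First, note that the paper's displayed definition of $\mathrm{O}_p$ (``$\ket{0}\mapsto\sum_x\sqrt{p(x)}\ket{x}$'') is garbled: in [GAW19, Thm.~14] the phase oracle acts as $\mathrm{O}_p:\ket{x}\mapsto e^{i\,p(x)}\ket{x}$, and your proof correctly targets that operator, so you have silently repaired a typo in the quoted statement. Second, the block-encoding computation is right: $V=U_p^\dagger(\Id\otimes Z_{\mathrm f})U_p$ is a Hermitian involution and the key inner product evaluates to $\delta_{xy}(2p(x)-1)$ precisely because, within a fixed $x$-block, $\ket{\psi_g(x)}\ket{0}$ and $\ket{\psi_b(x)}\ket{1}$ live in orthogonal $Z_{\mathrm f}$-eigenspaces, killing the cross terms; and a degree-$O(\log(1/\eps))$ Chebyshev/Jacobi--Anger truncation of $y\mapsto e^{i(y+1)/2}$ suffices because the target exponent has $O(1)$ magnitude. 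The one technical point to tighten is normalization: a naive truncation of $e^{i(y+1)/2}$ can exceed $1$ in modulus on $[-1,1]$, and QSP/QSVT requires $\|P\|_{[-1,1]}\le 1$, so you must rescale $P$ by $1/(1+O(\eps^2))$ before invoking the eigenvalue transform; this only costs a further $O(\eps^2)$ and does not affect the degree. On the ancilla budget, your QSP-based construction actually needs only $O(1)$ extra qubits (a QSP flag and one control for the real/imaginary LCU); the $a=O(\log\log(1/\eps))$ count in the theorem stems from the original LCU presentation, which indexes the $O(\log(1/\eps))$ terms of the Jacobi--Anger series with a control register. So your route is not merely a valid proof but gives a marginally tighter ancilla bound than the statement you are reproving.
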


\begin{lemma}
\label{cor:hamiltoniantofunction1}
	Let $t\geq 0$.	Let $f:\01^n\rightarrow \pmset{}$ be a Boolean function. Given quantum query access to an oracle $O_f$, then we can simulate 
		$$ 
		H_f=e^{it\sum_{S\in \01^n}\widehat{f}(S)\ketbra{S}{S}}
		$$
		for time $t\in \R$ with precision $\eps$ making $\widetilde{O}(t\log(1/\varepsilon))$ queries to $U_f$.
	\end{lemma}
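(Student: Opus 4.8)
The plan is to reduce the task to the probability-oracle-to-phase-oracle conversion recorded in \cref{thm:phaseConv}. Identify $\mathcal H=\mathbb C^{2^n}$ with the span of the Fourier basis $\{\ket S\}_{S\in\01^n}$; the operator to be simulated is the diagonal operator $D=\sum_{S\in\01^n}\widehat f(S)\kb S$, and $H_f$ in the statement denotes its time-evolution unitary $e^{itD}$, which acts as $\ket S\mapsto e^{it\widehat f(S)}\ket S$. The key observation is that every Fourier coefficient is the bias of a $\pm1$-valued random variable: $\widehat f(S)=\Exp_{x\sim\01^n}\!\big[f(x)(-1)^{x\cdot S}\big]$, so that $p(S):=\tfrac12\big(1+\widehat f(S)\big)\in[0,1]$ equals $\Pr_{x}\!\big[\,b(x)\oplus(x\cdot S)=0\,\big]$, where $b(x)\in\01$ is the bit encoding of $f(x)$.

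First I would build a probability oracle $U_p$ for $p(S)$ using $O(1)$ queries to $O_f$. Starting from $\ket S\ket{0^n}\ket 0$, Hadamard the second register to get $\tfrac{1}{\sqrt{2^n}}\sum_x\ket S\ket x\ket 0$; query $O_f$ to XOR $b(x)$ into the flag qubit, then XOR $x\cdot S\bmod 2$ into it (this is $n$ Toffolis controlled on the $\ket S$ and $\ket x$ registers); finally uncompute $f$ with a second query to $O_f$. The result is $\tfrac{1}{\sqrt{2^n}}\sum_x\ket S\ket x\ket{b(x)\oplus(x\cdot S)}$, which is exactly of the probability-oracle form $\ket S\otimes\big(\sqrt{p(S)}\ket{\psi_g(S)}\ket 0+\sqrt{1-p(S)}\ket{\psi_b(S)}\ket 1\big)$, with $\ket{\psi_g},\ket{\psi_b}$ the orthogonal uniform superpositions over the good/bad preimages. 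Feeding $U_p$ into \cref{thm:phaseConv} then produces, for any target precision $\eps'$, an $\eps'$-approximate implementation of the phase oracle $\ket S\mapsto e^{ip(S)}\ket S$ using $O(\log(1/\eps'))$ calls to $U_p$ and $U_p^\dagger$, hence $O(\log(1/\eps'))$ queries to $O_f$.

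It remains to convert $e^{ip(S)}$ into $e^{it\widehat f(S)}$ for arbitrary real $t$; assume $t\ge 0$ (for $t<0$ simply flip the flag qubit of $U_p$, which negates $\widehat f$). Since $2p(S)=1+\widehat f(S)$ we have $e^{i(2t)p(S)}=e^{it}\cdot e^{it\widehat f(S)}$, so it suffices to realize $\ket S\mapsto e^{i(2t)p(S)}\ket S$ and discard the global phase $e^{it}$. Write $2t=m+r$ with $m=\lfloor 2t\rfloor\in\mathbb N$ and $r\in[0,1)$. Compose $m$ copies of the $\eps'$-approximate phase oracle above (realizing $e^{imp(S)}$ up to error $m\eps'$) with one additional approximate phase oracle built the same way from a \emph{rescaled} probability oracle $U_q$ for $q(S)=r\,p(S)$ — obtained by running $U_p$ and logically AND-ing its flag with an independent qubit prepared as $\sqrt r\ket0+\sqrt{1-r}\ket1$ — which contributes $e^{irp(S)}$ up to error $\eps'$. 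Composition and subadditivity of operator-norm errors give overall error $\le(m+1)\eps'\le(2t+1)\eps'$, so taking $\eps'=\eps/(2t+1)$ yields precision $\eps$; the query count is $(m+1)\cdot O(\log(1/\eps'))\cdot O(1)=O\big(t\log((t+1)/\eps)\big)=\widetilde O(t\log(1/\eps))$ queries to $O_f$.

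The only points requiring care are (i) matching the precise input interface of \cref{thm:phaseConv} — in particular that the branch states $\ket{\psi_g(S)},\ket{\psi_b(S)}$ produced by $U_p$ are permitted to be arbitrary orthogonal states, which here they are — and (ii) the treatment of a non-integer multiple of the base phase through the rescaled oracle $U_q$, together with the bookkeeping of how the errors of the $O(t)$ composed approximate phase oracles accumulate. The conceptual core — a Fourier coefficient is a bias, a bias gives a probability oracle, and probability oracles convert to phase oracles — is routine given the tools already assembled.
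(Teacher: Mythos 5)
Your proof is correct and shares the paper's high-level skeleton --- both funnel everything through the probability-oracle-to-phase-oracle conversion of \cref{thm:phaseConv} and then compose $O(t)$ (fractional) phase queries --- but the crucial middle step, the construction of the probability oracle, is genuinely different. The paper uses phase kickback plus a Hadamard transform to prepare $\sum_y\widehat f(y)\ket y$ and then flags the $y=x$ branch, so the quantity sitting in the \emph{amplitude} slot of the probability oracle is $\widehat f(x)$ itself; read literally against \cref{thm:phaseConv}, whose output phase is $e^{ip(x)}$ with $p(x)$ the probability (the squared amplitude), this would yield $e^{i\widehat f(x)^2}$ rather than $e^{i\widehat f(x)}$, a point the paper glosses over (its state-preparation step is also only probabilistically successful). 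Your route --- reading $\widehat f(S)=\Exp_x[f(x)(-1)^{x\cdot S}]$ as a bias so that $p(S)=(1+\widehat f(S))/2$ is an honest probability realized by the event $b(x)\oplus(x\cdot S)=0$ --- matches the interface of \cref{thm:phaseConv} exactly, needs no Fourier transform on the index register, and handles the affine shift cleanly by peeling off the global phase $e^{it}$; it also treats negative $t$ and the non-integer part of $2t$ explicitly, which the paper does only via the fractional-query shorthand. The one cosmetic slip is the phrase ``uncompute $f$ with a second query'': if $b(x)$ has already been XORed into the same flag qubit that then receives $x\cdot S$, a second query would cancel it, so you should compute $b(x)$ into a scratch qubit, XOR the scratch and $x\cdot S$ into the flag, and uncompute the scratch --- the final state you write down is the right one and costs two queries either way. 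Your error bookkeeping for composing $\lfloor 2t\rfloor+1$ approximate phase oracles and the resulting $\widetilde O(t\log(1/\eps))$ query count agree with the paper's.
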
 
	\begin{proof}
 We first observe that, using the phase kickback trick, with access to $O_f$, one can also implement the phase oracle. 		Applying $O_f$ on a uniform superposition, we get $\frac{1}{\sqrt{2^n}}\sum_x \ket{x,f(x)}$. Next, one can obtain $\frac{1}{\sqrt{2^n}}\sum_x f(x) \ket{x}$ with probability~$1/2$: replace $f(x)\in\pmset{}$ by $(1-f(x))/2\in\01$ unitarily, apply the Hadamard transform to the last qubit and measure it. With probability $1/2$ we obtain the outcome~0, in which case our procedure rejects. Otherwise the remaining state is $\frac{1}{\sqrt{2^n}}\sum_x f(x) \ket{x}$. So from here onwards, we can assume that we have access to
 $$
 O_f:\ket{0}\rightarrow \frac{1}{\sqrt{2^n}}\sum_x f(x)\ket{x}.
 $$
	Below we show that one use the oracle $U$ to construct a probability oracle $U_p$: 
		\begin{align*}
		U_f:\ket{x}\ket{0}\ket{0} \kern5mm
		&\overset{Had}{\mapsto}\kern12mm 
		\ket{x}\ket{0}\ket{+} \\
		&\overset{O_f}{\mapsto}\kern12mm 
		\ket{x}\frac{1}{\sqrt{2^n}}\sum_{y\in X} {f}(y)\ket{y}\ket{+}\\
  		&\overset{\kern-12mm Had^n\kern-12mm}{\mapsto} \kern12mm
		\ket{x}\sum_{y\in X} \widehat{f}(y)\ket{y}\ket{+}\\
		&\overset{\kern-12mm H_f\kern-12mm}{\mapsto} \kern12mm
		\ket{x}\left(\widehat{f}(x)\ket{x}\ket{-}+ \sum_{y\neq x} \widehat{f}(y)\ket{y}\ket{+}\right)\\
		&\overset{Had}{\mapsto} \kern12mm 
		\ket{x}\left(\widehat{f}(x)\ket{x}\ket{1} + \sum_{y\neq x} \widehat{f}(y)\ket{y}\ket{0}\right)\\
		&\overset{\kern-12mm Swap\kern-12mm}{\mapsto} \kern12mm 
		\ket{x}\left(\widehat{f}(x)\ket{x}\ket{1} + \sum_{y\neq x} \widehat{f}(y)\ket{y}\ket{0}\right).	
		\end{align*}
		As Theorem~\ref{thm:phaseConv} shows we can simulate a fractional phase query $\mathrm{O}^r_f$ where $r:=t/\lceil|t|\rceil$ with precision $\eps/\lceil|t|\rceil$ making $O(\log(t/\eps))$ queries to $U_f$. Observe that $\lceil|t|\rceil$ consecutive applications of $\mathrm{O}^r_f$ give $\mathrm{O}^t_f$, which is exactly the Hamiltonian simulation unitary that we wanted to implement.
	\end{proof}

The same proof as the lemma above implies the following corollary as well.
\begin{corollary}
\label{cor:hamiltoniantofunction}
	Let $t\geq 0$.	Let $p:\01^n\rightarrow [0,1]$ be a distribution. Given quantum query access to an oracle $O_p$
	 we can simulate a Hamiltonian corresponding to the probability distribution
		$$ 
		H_p=e^{it\sum_{S\in \01^n}p(S)\ketbra{S}{S}}
		$$
		for time $t\in \R$ with precision $\eps$ making $\widetilde{O}(t\log(1/\varepsilon))$ queries to $U_p$.
	\end{corollary}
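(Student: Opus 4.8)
The proof is a transcription of that of Lemma~\ref{cor:hamiltoniantofunction1}, and is in fact slightly shorter. First I would note that the oracle $O_p$ either already is, or can be turned with $O(1)$ extra gates into, a probability oracle of exactly the shape demanded by \cref{thm:phaseConv}, i.e.\ $U_p:\ket{S}\ket{0}\mapsto\ket{S}\otimes\big(\sqrt{p(S)}\ket{\psi_g(S)}\ket{0}+\sqrt{1-p(S)}\ket{\psi_b(S)}\ket{1}\big)$ for arbitrary orthogonal ancilla states $\ket{\psi_g(S)},\ket{\psi_b(S)}$; if $O_p$ is presented as the state-preparation map $\ket{0}\mapsto\sum_S\sqrt{p(S)}\ket{S}$, one simply runs it controlled on the index register and uses an equality check to mark the ``good'' branch. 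Compared with the Boolean case, two steps disappear: there is no phase-kickback step (we never had a $\pm1$-valued oracle to turn into a sign oracle), and there is no $n$-qubit Hadamard, i.e.\ no Fourier transform, because the diagonal entries of the target Hamiltonian are the values $p(S)$ themselves rather than Fourier coefficients. We also need no rescaling, since $p(S)\in[0,1]$ already.

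Next I would apply \cref{thm:phaseConv} with the fractional rate $r:=t/\lceil|t|\rceil\in[-1,1]$: it returns, using $O(\log(\lceil|t|\rceil/\eps))$ calls to $U_p$ and $U_p^\dagger$ and $a=O(\log\log(\lceil|t|\rceil/\eps))$ fresh ancillas, an $(\eps/\lceil|t|\rceil)$-precise implementation of the fractional phase oracle $\ket{S}\mapsto e^{ir\,p(S)}\ket{S}$. Composing $\lceil|t|\rceil$ such blocks gives, by subadditivity of the operator-norm error under composition, an $\eps$-approximation of $\ket{S}\mapsto e^{it\,p(S)}\ket{S}$, and this is precisely the Hamiltonian-simulation unitary $H_p=e^{it\sum_S p(S)\ketbra{S}{S}}$. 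The total cost is $\lceil|t|\rceil\cdot O(\log(\lceil|t|\rceil/\eps))=\widetilde O(t\log(1/\eps))$ queries to $U_p$, as claimed, where $\widetilde O(\cdot)$ absorbs the mild $\log(t)$ and iterated-log factors.

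The only place that needs attention — the ``hard part'', though it is routine — is the error and ancilla bookkeeping across the $\lceil|t|\rceil$ fractional blocks: one must choose the per-block precision to be $\eps/\lceil|t|\rceil$ so that the triangle inequality yields total error $\eps$, and one must check that the $O(\log\log)$ ancillas produced by \cref{thm:phaseConv} are uncomputed at the end of each block, so that consecutive blocks genuinely multiply to the intended diagonal unitary rather than leaving residual entanglement. Everything else is identical to the proof of Lemma~\ref{cor:hamiltoniantofunction1}.
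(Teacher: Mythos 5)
Your proof is correct and matches the paper's approach exactly: the paper dispenses with this corollary in a single sentence ("The same proof as the lemma above implies the following corollary as well"), and your write-up is a faithful unpacking of that remark, correctly identifying which steps of the Lemma~\ref{cor:hamiltoniantofunction1} argument (phase kickback, the $n$-qubit Hadamard/Fourier transform) become unnecessary when the oracle directly encodes the target diagonal values $p(S)$ rather than Fourier coefficients. The added bookkeeping remarks about per-block precision $\eps/\lceil|t|\rceil$ and ancilla uncomputation are sensible and consistent with what the lemma's proof implicitly relies on.
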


\subsection{Adaptive coherent memoryless learning lower bound}
To prove our main theorem, we use the following facts, starting with a well known bound on the size of an $\varepsilon$-net of the $n$-dimensional sphere.
\begin{fact}[{\label{fac:epsnet}\cite[Exercise 2.3.1]{tao2012topics}}]
For every $d \ge 1$ and any $0<\varepsilon<1/2$ there exists an $\varepsilon$-net of the sphere $S^{d-1}$ of cardinality at least
$t=(c/\varepsilon)^d$, i.e., there exists $\{v_1,\ldots,v_t\}\subseteq S^{d-1}$ such that $\|v_i-v_j\|_2\geq \varepsilon$ for all $i\neq j$.
\end{fact}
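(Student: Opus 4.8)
The plan is to prove this by the classical \emph{maximal packing} argument combined with a volume comparison. First I would let $\mathcal{N}=\{v_1,\dots,v_t\}\subseteq S^{d-1}$ be an $\varepsilon$-separated subset of \emph{maximum} cardinality; since $S^{d-1}$ is compact every $\varepsilon$-separated set is finite, so a largest one exists. The point of maximality is that it forces $\mathcal{N}$ to be an $\varepsilon$-net in the covering sense: every $x\in S^{d-1}$ is within Euclidean distance $\varepsilon$ of some $v_i$, since otherwise $\mathcal{N}\cup\{x\}$ would be a strictly larger $\varepsilon$-separated set. Hence the closed balls $B(v_i,\varepsilon)\subseteq \R^d$ cover $S^{d-1}$.

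Next I would lower bound $t$ by thickening the sphere. Set $\eta=\varepsilon/2<1$ and let $A_\eta=\{x\in\R^d:1-\eta\le\|x\|_2\le 1+\eta\}$, which is exactly the $\eta$-neighbourhood of $S^{d-1}$ (for $x\neq 0$ the nearest sphere point is $x/\|x\|_2$). Since the balls $B(v_i,\varepsilon)$ cover $S^{d-1}$, the dilated balls $B(v_i,\varepsilon+\eta)=B(v_i,\tfrac{3\varepsilon}{2})$ cover $A_\eta$: any $x\in A_\eta$ lies within $\eta$ of some $y\in S^{d-1}$, which lies within $\varepsilon$ of some $v_i$. Comparing $d$-dimensional volumes, with $\omega_d$ the volume of the unit ball,
$$
\omega_d\big((1+\eta)^d-(1-\eta)^d\big)=\mathrm{vol}(A_\eta)\le t\cdot\omega_d\Big(\tfrac{3\varepsilon}{2}\Big)^d,
$$
which rearranges to
$$
t\ \ge\ \frac{(1+\varepsilon/2)^d-(1-\varepsilon/2)^d}{(3\varepsilon/2)^d}.
$$

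Finally I would reduce the right-hand side to the stated form $(c/\varepsilon)^d$ by elementary estimates: factoring out $(1+\varepsilon/2)^d$ and using $\varepsilon<1/2$ to bound $\big(\tfrac{1-\varepsilon/2}{1+\varepsilon/2}\big)^d\le(1-\tfrac{4\varepsilon}{5})^d\le e^{-4\varepsilon d/5}$, one obtains $t\ge\tfrac12\big(\tfrac{2}{3\varepsilon}\big)^d$ as soon as $d$ is large compared with $1/\varepsilon$ (the regime in which the fact is applied for the learning lower bounds), and absorbing the $\tfrac12$ yields $t\ge(c/\varepsilon)^d$ for a universal constant $c$. The only mildly delicate point, and the step I would be most careful with, is exactly this last bookkeeping: the crude volume bound loses a $(2/3)^d$-type factor relative to the matching covering-number upper bound $(2/\varepsilon+1)^d$, so one has to track constants to land on a clean universal $c$; everything else is routine. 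An alternative is to work directly with the surface measure of $S^{d-1}$, using that a Euclidean $\varepsilon$-cap occupies a $\le\varepsilon^{d-1}$ fraction of the sphere, which yields the same conclusion with exponent $d-1$.
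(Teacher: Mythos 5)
The paper gives no proof of this fact; it simply cites Tao's book, so there is no ``paper's proof'' to compare against. Your maximal-packing/volume argument is the right and standard one. But the delicate point you flagged at the end is more than bookkeeping: it is a genuine gap, and it reflects the fact that the statement as written in the paper is actually false.

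Concretely, for the \emph{sphere} $S^{d-1}$ the maximum size of an $\varepsilon$-separated set is $\Theta\big((1/\varepsilon)^{d-1}\big)$, not $\Theta\big((1/\varepsilon)^{d}\big)$. The exponent must be $d-1$ because the sphere is a $(d-1)$-dimensional manifold. Already $d=1$ is a counterexample: $S^{0}=\{\pm 1\}$ has at most $2$ points, while $(c/\varepsilon)^1 = c/\varepsilon$ is unbounded as $\varepsilon\to 0$, so no universal $c>0$ can work. Similarly for $d=2$ the circle $S^1$ holds only $O(1/\varepsilon)$ separated points, not $(c/\varepsilon)^2$. Your annulus computation makes exactly this visible: $(1+\varepsilon/2)^d-(1-\varepsilon/2)^d\approx d\varepsilon$ when $\varepsilon d\lesssim 1$, and the extra factor of $\varepsilon$ in the numerator cancels one $\varepsilon$ in the denominator, dropping the exponent from $d$ to $d-1$. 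You cannot repair this by ``tracking constants''; the loss is structural. Your alternative argument with $(d-1)$-dimensional surface measure, yielding $(c/\varepsilon)^{d-1}$, is the correct statement for the sphere.

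What the authors almost certainly intend (and what their application actually uses) is the closed unit \emph{ball} $B^d\subset\R^d$, since the hard instances live in $\{\lambda\in\R^s : \sum_x|\lambda_x|^2\le 1\}$, which is a ball, not a sphere. For the ball the argument is cleaner than what you wrote and gives the exponent $d$ with no annulus: take a maximal $\varepsilon$-separated set $\{v_1,\dots,v_t\}\subseteq B^d$; by maximality the balls $B(v_i,\varepsilon)$ cover $B^d$, so comparing $d$-dimensional volumes gives
$$
\omega_d \;\le\; t\cdot \omega_d\,\varepsilon^d \quad\Longrightarrow\quad t\ge (1/\varepsilon)^d,
$$
valid for every $d\ge 1$ and every $\varepsilon\in(0,1)$, with $c=1$. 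So the fix is to replace $S^{d-1}$ by $B^d$ in the statement (or, keeping the sphere, to replace the exponent $d$ by $d-1$). Either version suffices for the lower-bound application in the paper, since there one only takes $\log t = \Theta(d\log(1/\varepsilon))$ and divides by $\log s$; the discrepancy between $d$ and $d-1$ is absorbed. You correctly sensed all of this; the conclusion I would draw more firmly than you did is that the issue lies in the \emph{statement}, not in your argument, and that the ball version is the one the paper actually needs.
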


\begin{theorem}
    There exists a class of Hamiltonians with $\|H\|_2 \leq 1$  such that learning $s$-sparse $n$-qubit Hamiltonians without quantum memory  upto error $\varepsilon$ using time step $t$, need to make $\Omega\big(s (\log 1/\varepsilon)/(t\log s)\big)$ adaptive quantum queries.
\end{theorem}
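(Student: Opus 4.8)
The plan is to reduce the problem to a query‑learning problem over the Boolean cube, where the hard instances form a packing net carrying $\Omega(s\log(1/\varepsilon))$ bits and each oracle call leaks only $O(\log s)$ bits; the $1/t$ in the bound comes from the fact that Hamiltonian simulation of a diagonal Hamiltonian costs only $\widetilde O(t)$ oracle calls (linear in $t$), so the ``one query of $U(t)$'' resource is cheap to emulate. Concretely, restrict attention to Hamiltonians acting on $m:=\lceil \log s\rceil$ of the $n$ qubits (identity on the rest) of the form $H_p=\sum_{x\in\01^{m}}p(x)\,\ketbra{x}{x}\otimes \Id_{2^{n-m}}$ for a probability distribution $p:\01^{m}\to[0,1]$; any such $H_p$ is automatically $s$‑sparse in the Pauli basis (a diagonal operator on $m$ qubits has at most $2^m=s$ nonzero Pauli coefficients), has $\|H_p\|_\infty\le 1$ and $\|H_p\|_2\le\|p\|_\infty\le 1$. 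By \cref{cor:hamiltoniantofunction}, one query to $U(t)=e^{-itH_p}$ can be simulated to precision $\delta$ with $\widetilde O(t\log(1/\delta))$ queries to the oracle $O_p:\ket 0\mapsto\sum_x\sqrt{p(x)}\ket x$, which acts on an $O(\log s)$‑qubit register. Running a memoryless learner on a $\delta$‑approximation of each of its $q$ queries with $\delta=\Theta(1/q)$ still returns an $\varepsilon$‑approximation of $H_p$ with probability $\ge 2/3-o(1)$ (a hybrid argument: replacing one query by a $\delta$‑close unitary perturbs the transcript distribution by $O(\delta)$ in total variation). Hence any $q$‑query memoryless learner for $s$‑sparse Hamiltonians yields an adaptive coherent memoryless algorithm that, using $q\cdot\widetilde O(t\log q)$ queries to $O_p$, outputs $\widetilde p$ with $\sqrt{2^{-m}\sum_x (p(x)-\widetilde p(x))^2}\le\varepsilon$.

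For the hard family I would set $p_\theta(x)=\tfrac1s(1+\theta_x)$ with $\theta\in\mathbb R^{s}$ subject to $\sum_x\theta_x=0$ and $\|\theta\|_\infty\le 1$, so that $p_\theta$ is a genuine distribution and $\|H_{p_\theta}-H_{p_{\theta'}}\|_2=s^{-3/2}\|\theta-\theta'\|_2$. Using \cref{fac:epsnet}, fix a $(3\varepsilon s^{3/2})$‑packing of the radius‑$\tfrac12\sqrt s$ Euclidean ball inside the hyperplane $\{\sum_x\theta_x=0\}$; this is nonempty for $\varepsilon\le 1/(6s)$ and has at least $(c/(6\varepsilon s))^{s-1}$ points, so in the regime $\varepsilon\le s^{-1-\Omega(1)}$ we get $\log(\text{family size})=\Omega(s\log(1/\varepsilon))$. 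For distinct packing points $\|H_{p_\theta}-H_{p_{\theta'}}\|_2>3\varepsilon$, so an $\varepsilon$‑accurate learner identifies $\theta$; by Fano's inequality the transcript $o_1,\dots,o_{q'}$ produced by the converted algorithm must satisfy $I(\Theta;o_1,\dots,o_{q'})=\Omega(s\log(1/\varepsilon))$, where $\Theta$ is uniform over the packing.

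Now bound the information per oracle call. By the chain rule $I(\Theta;o_1\cdots o_{q'})\le\sum_i I(\Theta;o_i\mid o_{<i})$, and conditioned on $o_{<i}$ the algorithm prepares a fixed state, applies $O_p$ on its $O(\log s)$‑qubit register, and measures. Since $O_p$ acts nontrivially only on that register, its action leaves the marginal on the remaining (workspace) qubits $\Theta$‑independent; measuring the workspace first therefore produces $\Theta$‑independent statistics and collapses to a $\Theta$‑independent state on the $O(\log s)$ active qubits, after which Holevo's bound gives $I(\Theta;o_i\mid o_{<i})=O(\log s)$, and data processing shows the same bound for any joint measurement. Consequently $q'=\Omega\big(s\log(1/\varepsilon)/\log s\big)$, and combining with $q'=q\cdot\widetilde O(t\log q)$ yields $q=\widetilde\Omega\big(s\log(1/\varepsilon)/(t\log s)\big)$, which is the claimed bound up to the logarithmic factors hidden in $\widetilde\Omega$.

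\textbf{Main obstacle.} The delicate step is the per‑call Holevo bound of $O(\log s)$ when the algorithm is allowed an arbitrarily large coherent workspace entangled with the query register: one must argue rigorously that the information is ``confined'' to the $O(\log s)$ qubits $O_p$ touches, via the measure‑the‑workspace‑first reduction sketched above (using that conjugation by $O_p\otimes\Id$ preserves the workspace marginal). A secondary nuisance is bookkeeping the simulation precision $\delta$ and the $\log q$ overhead so the final bound reads exactly as stated; this is why the statement is cleanest in the regime $\varepsilon\le 1/\mathrm{poly}(s)$ and $q\le\mathrm{poly}(s,1/\varepsilon)$, where $\log q=O(\log s+\log(1/\varepsilon))$ is absorbed.
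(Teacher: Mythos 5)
Your proposal takes essentially the same route as the paper: restrict to diagonal Hamiltonians on $O(\log s)$ qubits, convert $q$ queries to $U(t)$ into $\widetilde{O}(qt)$ queries to a probability oracle $O_p$ via \cref{cor:hamiltoniantofunction}, build a packing net of size $\exp\big(\Omega(s\log(1/\varepsilon))\big)$, and cap the information per oracle call at $O(\log s)$ bits via Holevo/Nayak. Where you differ is in rigor rather than strategy. You parametrize the hard family directly as probability distributions $p_\theta(x)=\tfrac1s(1+\theta_x)$, which guarantees the $p$ fed to \cref{cor:hamiltoniantofunction} is a genuine distribution; the paper instead takes an $\varepsilon$-net of Pauli coefficient vectors on the unit sphere and passes to the probability oracle without verifying that the corresponding diagonal entries are nonnegative and normalized (strictly this needs a shift-and-scale step the paper omits, though it does not change the asymptotics). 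You also make explicit the hybrid argument for simulation precision and the per-call Holevo bound in the presence of a large coherent workspace entangled with the $O(\log s)$-qubit query register — the point you correctly flag as delicate — whereas the paper handles this by a terse appeal to Nayak. The cleanest way to close your flagged gap is the standard computation $\chi\le S(\bar\rho)-S(\rho)\le O(\log s)$ using subadditivity of entropy and the Araki--Lieb inequality together with the fact that the workspace marginal of $\rho_\Theta$ is $\Theta$-independent; this replaces the slightly awkward ``measure the workspace first'' phrasing with a one-line bound. With that substitution the proposal is a correct, more carefully justified version of the paper's argument, not a different one.
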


\begin{proof}
   Let's assume that the Hamiltonian is supported on the \emph{first} $(\log s)$ qubits and only has support on the $\sigma_0,\sigma_3$ (i.e., it is diagonal), so the total sparsity equals $2^{(\log s)}=s$ (note that in the original learning algorithm, the support is \emph{unknown} to the learner so we are proving a lower bound on a simpler problem here). Let the corresponding Pauli coefficients be $\{\lambda_x:x\in \{0,3\}^{(\log s)}\}$.  Now consider an $\varepsilon$-net on these qubits to be the set of vectors $\{h^1,\ldots,h^t\}\subseteq \{0,3\}^{\log s}$ where $t=(1/\varepsilon)^{s}$. Now, let us consider the class of Hamiltonians $\mathcal{H}$, given by
    $$
    H_i=\sum_{x\in \{0,3\}^{(\log s)}} h^i(x)\sigma_x.
    $$
Observe that the unitary evolution corresponding to this Hamiltonian is given by
$$
U_p=e^{it\sum_{x\in \01^n}p(x)\ketbra{x}{x}}
$$
By Corollary~\ref{cor:hamiltoniantofunction},
every  learning algorithms that made queries to $U_p$ at time $t$, can be converted (with a factor $t$-overhead) to an algorithm that only is given \emph{quantum queries to $p$}, i.e., given access to the standard oracle model $O_p:\ket{0}\rightarrow \sum_x \sqrt{p(x)} \ket{x}$. 
 Clearly a learning algorithm for learning $H_i$ by making queries to $U_p$ upto error-$\varepsilon$  in the $\ell_2$-distance implies that the learning algorithm can \emph{identify} $i$ since the coefficient vectors $h^i$ form an $\varepsilon$-net. 

Now we use Holevo's bound to conclude the proof (we will use a version that appeared in Nayak's work~\cite{nayak1999optimal} and also recently in a work of Chen et al.~\cite{chen2023testing}). Since each quantum query contains $(\log s)$-qubits of information (since we assumed that the Hamiltonian was identity on the remaining $n-\log s$ qubits), if there is a $k$-query algorithm that \emph{identifies} $i$ as above,~then
$$
k\cdot (\log s)\geq \log |\mathcal{H}|,
$$
and now using the lower bound from Fact~\ref{fac:epsnet}, that implies a lower bound of 
$$
k\geq s/(\log s)\cdot \log (1/\varepsilon).
$$
Overall, this implies an $\Omega(s/(t\log s)\cdot \log (1/\varepsilon))$ lower bound on the learning problem.
\end{proof}
One inherent weakness in the proof technique above is, the $(\log s)$-factor in the denominator cannot be removed since Holevo's bound is a generic statement about transmitting arbitrary quantum states. In the next section, we are able to use a more careful analysis and show that one can get rid of the dependence on $(\log s)$-albeit in a weaker model of learning.

\subsection{Non-adaptive incoherent memoryless learning lower bound}

\begin{theorem}
    Learning $s$-sparse $n$-qubit Hamiltonians without quantum memory upto error $1/2s$ using time step $t$,~requires 
    $$
    \Omega\big((s\log s)/t\big)
    $$ non-adaptive quantum queries.
\end{theorem}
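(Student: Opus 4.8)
The plan is to prove a lower bound for non-adaptive, incoherent (single-copy measurement) memoryless learning of $s$-sparse Hamiltonians, removing the $\log s$ loss from the previous theorem at the cost of a weaker learning model. The high-level strategy is the same reduction as before: restrict attention to diagonal Hamiltonians supported on the first $\log s$ qubits and acting trivially elsewhere, so that a learner is effectively trying to identify one of many candidate probability-like weight vectors on $s$ points. By Corollary~\ref{cor:hamiltoniantofunction}, queries to $U(t)$ can be replaced by $\widetilde{O}(t)$ queries to a phase oracle $O_p:\ket{0}\mapsto\sum_x\sqrt{p(x)}\ket{x}$, so it suffices to lower bound the number of non-adaptive, single-copy queries to such a state-preparation oracle needed to identify the hidden vector up to $\ell_2$-error $1/2s$.

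First I would set up the hard family: take a large packing $\{h^1,\dots,h^t\}$ of weight vectors on $s$ coordinates that are pairwise $\Omega(1/s)$-separated in $\ell_2$; learning to error $1/2s$ forces the learner to pin down the index $i$. The key point is that, in the incoherent non-adaptive model, each query produces one classical measurement outcome of the single-copy state $\ket{\psi_i}=\sum_x\sqrt{p_i(x)}\ket{x}$ (possibly with ancilla, but no coherent combination across queries and no feedback). I would then bound the mutual information between the hidden index $I$ and the transcript of $k$ such measurement outcomes. The natural tool is a Holevo/quantum-information argument but applied per query to the \emph{single-copy} state rather than to $\log s$ qubits in bulk: each copy of $\ket{\psi_i}$ is a \emph{pure} state on $\log s$ qubits, and a more careful analysis — rather than the crude Holevo bound $k\log s\ge \log|\mathcal H|$ — should show that a single measurement of one such copy reveals only $O(1)$ bits about $I$ on average when the $\ket{\psi_i}$ are sufficiently ``spread out'' and close to each other. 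Concretely, I expect to use a Fano-type inequality: $\log|\mathcal H| \le I(I;\text{transcript}) + H(\text{error})$, and then bound $I(I;\text{transcript})\le \sum_{\text{queries}} \max_i \chi(\{p_{i}\text{-outcome distributions}\})$, where the per-query information is controlled by the fact that the candidate output distributions of any fixed measurement on the near-identical states $\ket{\psi_i}$ differ only by $O(1/s)$ in total variation, giving $O(1/s)$ bits — but summed against $\log|\mathcal H| = \Omega(s\log(1/\varepsilon)) = \Omega(s\log s)$ at error level $\varepsilon=1/2s$ — wait, that would give $k=\Omega(s^2\log s)$, too strong; so instead the per-query information must be shown to be $O(\log s/s)$ or rather I should think of it as: there are $s$ ``directions'' and each query extracts information about essentially one of them, yielding $k = \Omega(s\log s / t)$ after accounting for the $t$-overhead from Corollary~\ref{cor:hamiltoniantofunction}.

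The cleanest route to the $\Omega(s\log s)$ bound is probably a two-level packing argument combined with a coordinate-wise information analysis: use a product packing where the $j$-th block of $\log s / \log\log s$ coordinates (or simply each of the $s$ coordinates) independently ranges over a $\Theta(1)$-size $\varepsilon'$-net, so $|\mathcal H| = (\text{const})^s$ and $\log|\mathcal H| = \Theta(s)$; then argue that because the oracle is a single-copy pure state on $\log s$ qubits and each non-adaptive measurement outcome is one symbol, one needs $\Omega(s/\log s)$ \emph{bits per useful query} — no, again I need to be careful. The honest statement I would aim for: each query's measurement outcome, being an element of an outcome set of size up to $2^{\log s}\cdot\text{poly} = \text{poly}(s)$, carries at most $O(\log s)$ bits, but on these highly-overlapping states carries only $O(\log s)$ bits in the \emph{best} case and the structure of the packing forces $\Omega(s)$ bits total, hence $k = \Omega(s/\log s)$ \emph{in the worst incoherent case} — this matches neither. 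Let me instead commit to the argument that actually gives $\Omega(s\log s)$: use $t=(1/\varepsilon)^s$ candidates as in the adaptive proof with $\varepsilon = 1/2s$ so $\log|\mathcal H| = \Omega(s\log s)$, and show that in the \emph{incoherent non-adaptive} model each single-copy measurement of the $(\log s)$-qubit pure state reveals only $O(1)$ bits (not $\log s$) about the index — this is the place where incoherence and non-adaptivity are essential, because a measurement of a single pure state on $m$ qubits from a ``flat'' ensemble yields $\le 1 + \log(\text{something})$ but crucially the \emph{average} information collapses when the states are mutually near-orthogonal only in aggregate.

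\textbf{The main obstacle}, and the step I expect to be delicate, is precisely the per-query information bound: showing that in the incoherent non-adaptive setting a single copy of $\ket{\psi_i}$ gives only $O(1)$ (or $O(\log\log s)$) bits about $i$, rather than the $\log s$ bits that Holevo generously allows. The right tool is likely a direct analysis of the measurement outcome distributions: for any POVM $\{M_a\}$ on the $(\log s)$-qubit register, $\Pr_i[a] = \langle\psi_i|M_a|\psi_i\rangle$, and one shows that for the randomized packing these are all within $O(1)$-factors of a fixed reference distribution (a ``chi-squared'' / second-moment computation over the random choice of packing), so that $I(I;A) = O(1)$ per query by the standard bound $I \le \mathbb{E}_a \mathrm{KL}(\Pr[\cdot|a]\,\|\,\Pr[\cdot])$. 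Combining $k\cdot O(1) \ge \log|\mathcal H| = \Omega(s\log s)$ with the $t$-query overhead from Corollary~\ref{cor:hamiltoniantofunction} yields $k = \Omega\big((s\log s)/t\big)$, as claimed. I would flag the second-moment computation for the randomized packing as the technical crux; everything else (Fano, the reduction, the net size from Fact~\ref{fac:epsnet}) is routine.
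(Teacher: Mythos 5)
Your high-level skeleton (reduce time-$t$ evolution queries to oracle queries via the corollary, then run a Fano/mutual-information argument with a per-query information bound) matches the paper, but there is a genuine gap at exactly the step you flag as the crux, and your choice of hard instance would not let you close it. The paper does \emph{not} reuse the $\varepsilon$-net of the sphere from the adaptive lower bound. It takes the concept class $\Cc=\{f_V(x)=[x\in V^\perp]\}$ over all $(\log s)$-dimensional subspaces $V\subseteq\F_2^n$; these functions have Fourier support of size exactly $s$ (so the Hamiltonians are $s$-sparse), and — this is the decisive structural property — any two of them agree on all but a $\leq 1/s$ fraction of inputs, so $\eta_a=\Exp_{c,c'}\Pr_x[c(x)\neq c'(x)]\leq 1/s$. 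That pairwise closeness is what makes a single non-adaptive query nearly useless: the averaged query state $\rho=\frac{1}{|\Cc|}\sum_V \kb{\psi_{f_V}}$ has a dominant eigenvalue $\geq 1-2\eta_a$, whence $I(\mathbf A:\mathbf B_1)\leq S(\rho)=O(\eta_a(n+\log(1/\eta_a)))=O(n/s)$. Combined with $\log|\Cc|=\Omega((n-\log s)\log s)$ from counting subspaces, this gives $T=\Omega(s\log s)$; note the factor $s$ comes from the denominator $\eta_a\leq 1/s$, not from the cardinality of the family, and the family lives on all $n$ qubits rather than the first $\log s$.

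Your proposed family — a packing of the unit sphere with pairwise separation $\Omega(1/s)$ — only guarantees the candidates are pairwise \emph{far}; it says nothing about them being pairwise \emph{close}, and generic net points can be nearly orthogonal, in which case a single measurement can reveal many bits and no $O(1)$-bits-per-query bound holds. Your proposal never resolves this: you cycle through mutually inconsistent per-query estimates ($O(1)$, $O(1/s)$, $O(\log s/s)$ bits) and defer the whole issue to an unspecified ``second-moment computation over the randomized packing.'' Since the theorem is essentially equivalent to that per-query bound, and the bound requires engineering a family that is simultaneously $s$-sparse, large, pairwise $\Omega(1/s)$-separated (so that error $1/(2s)$ forces exact identification) \emph{and} pairwise $O(1/s)$-close (so that each query is uninformative), the missing construction is the entire content of the proof, not a routine technicality.
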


In order to prove this theorem, we will need the following 

\begin{proof}
In order to prove this theorem, we first construct our hard instance of Hamiltonians based on Boolean functions. To this end, we embed Boolean functions as Hamiltonians in a natural way: for every $f:\01^n\rightarrow \pmset{}$, let 
$$
H_f=\sum_{S\subseteq [n]}\widehat{f}(S) \prod_{i\in S}Z_i,
$$
in which case $H_f\ket{x}=f(x)\ket{x}$ for every basis state $x\in \01^n$.  Such an embedding was considered and shown to be useful in~\cite{hadfield2021representation}. Observe that the unitary evolution corresponding to this Hamiltonian is given by
$$
U_f=e^{it\sum_{S\in \01^n}\widehat{f}(S)\ketbra{S}{S}}
$$
By Lemma~\ref{cor:hamiltoniantofunction1},
every  learning algorithms that made queries to $U_f$ at time $t$, can be converted to an algorithm that only is given \emph{quantum queries to $f$}, i.e., given access to the standard oracle model $O_f:\ket{x,0}\rightarrow \ket{x,f(x)}$.\footnote{Technically, we need the phase oracle, but the bit-oracle and phase oracle are equivalent up to a constant overhead in query complexity when allowed controlled operations}. 

So from here onwards, we prove a lower bound on the number of quantum queries required for our learning task, and that complexity divided by $t$ will be our eventual Hamiltonian learning lower bound. Our next step is to construct our hard set of Boolean functions. To this end, for every $(\log s)$-dimensional subspace $V$, let $f_V(x)=[x\in V^\perp]$, in which case it is well-known that 
$$
\widehat{f}(S)=[x\in V]/s,
$$
see~\cite{o2014analysis} for a proof. In particular, note that the number of $T$ for which $\widehat{f}(T)\neq 0$ equals $s$. In particular,  the corresponding Hamiltonian $H_f$ whose Pauli coefficients are precisely $\widehat{f}(T)$, has sparsity $s$.  

Our third step now is to consider the class of Boolean functions
$$
\Cc=\{f_V:\01^n\rightarrow \01\vert f_V(x)=[x\in V^\perp] \text{ s.t.} V \text{ is a }(\log s)-\text{dimensional subspace} \}.
$$
We next show that every non-adaptive quantum learning algorithm for learning the unknown $f$ (given quantum query access $O_f$), needs to make $\Omega(k)$ queries.  The proof of this is similar to the information-theoretic proof in~\cite{arunachalam2018optimal}. 		We prove the lower bound for $\Cc$ using a three-step information-theoretic technique. Let $\mathbf{A}$ be a random variable that is uniformly distributed over $\Cc$.  Suppose $\mathbf{A}=f_V$, and let $\mathbf{B}=\mathbf{B}_1\ldots\mathbf{B}_T$ be $T$ quantum queries 
 $$
 \ket{\psi^i_{f_{V}}}= \sum_{x\in \01^{n}}\sqrt{\alpha_i(x)} \ket{x,f_V(x)},
 $$ 
 for $f_V\in \Cc$, where the amplitudes could potentially depend on the $i$th query, but independent of $i-1$ different measurement outcomes.  The random variable $\mathbf{B}$ is a function of the random variable~$\mathbf{A}$. 
		The following upper and lower bounds on $I(\mathbf{A}:\mathbf{B})$ are similar to~\cite[Theorem~12]{arunachalam2018optimal} and we omit the details of the first two steps here.
		\begin{enumerate}
			\item $I(\mathbf{A}:\mathbf{B})\geq \Omega(\log|\Cc|)$ because $\mathbf{B}$ allows one to recover $\mathbf{A}$ with high probability.
			\item $I(\mathbf{A}:\mathbf{B})\leq T\cdot I(\mathbf{A}:\mathbf{B}_1)$ using a chain rule for mutual information.
			
			\item $I(\mathbf{A}:\mathbf{B}_1)\leq O(n\cdot \eta_a)$.\\[1mm]
			\emph{Proof (of 3).} Since $\mathbf{A}\mathbf{B}$ is a classical-quantum state, we have 
			$$
			I(\mathbf{A}:\mathbf{B}_1)= S(\mathbf{A})+S(\mathbf{B}_1)-S(\mathbf{A}\mathbf{B}_1)=S(\mathbf{B}_1),
			$$ 
			where the first equality is by definition and the second equality uses $S(\mathbf{A})=\log |\Cc|$ since $\mathbf{A}$ is uniformly distributed over~$\Cc$, and $S(\mathbf{A}\mathbf{B}_1)=\log |\Cc|$ since the matrix 
			$$
			\sigma=\frac{1}{|\Cc|} \sum_{f_V \in \Cc} \ketbra{c}{c}\otimes \ketbra{\psi_{f_V}}{\psi_{f_V}}
			$$ is block-diagonal with $|\Cc|$ rank-1 blocks on the diagonal. It thus suffices to bound the entropy of the (vector of singular values of the) reduced state of $\mathbf{B}_1$, which~is
			$$
			\rho=\frac{1}{|\Cc|}\sum_{f_V \in \Cc}\ketbra{\psi_{f_V}}{\psi_{f_V}}.
			$$
For notational convenience, let $\eta_{\mathsf{a}}=\mathop{\Exp}_{c,c'\in \Cc}\Pr_x[c(x)\neq c'(x)]$. Also, let $\sigma_0\geq \sigma_1\geq\cdots\geq \sigma_{2^{n+1}-1}\geq 0$ be the singular values of $\rho$. Since~$\rho$ is a density matrix, these form a probability distribution. Now observe that $\sigma_0\geq 1-\eta_a$: consider the vector $u=\frac{1}{|\Cc|}\sum_{c'\in \Cc}\ket{\psi_{c'}}$  and observe that
   \begin{align*}
  u^\top \rho u &=\frac{1}{|\Cc|^3}\sum_{V,V',V''\in \Cc}\langle \psi_{f_V}|\psi_{f_{V'}}\rangle\langle \psi_{f_{V}}|\psi_{f_{V''}}\rangle\\
  &=\Exp_{V} \Big[\Exp_{V'}[\langle\psi_{f_V}|\psi_{f_{V'}}\rangle]\Big]\cdot\Big[\Exp_{V''}[\langle \psi_{f_V}|\psi_{f_{V''}}\rangle]\Big]\\
  &\geq \Big(\mathop{\Exp}_{V,V'}[\langle\psi_{f_V}|\psi_{f_{V'}}\rangle]\Big)\cdot \Big(\mathop{\Exp}_{V,V''}[\langle \psi_{f_{V}}|\psi_{f_{V''}}\rangle]\Big)=\big(\mathop{\Exp}_{{f_V},{f_{V'}}\in \Cc}\Pr_x[{f_V}(x)={f_{V'}}(x)]\big)^2\geq 1-2\eta_a,
   \end{align*}
where the first inequality is by Chebyshev's sum inequality (since all the inner products are non-negative) and the second  inequality  follows from the definition of $\eta_a$. Hence we have that $\sigma_0=\max_{u}\{u^\top \rho u / u^\top u\} \geq 1-2\eta_a$ (where we used that $\|u\|_2\leq 1$). 

   Let $\mathbf{N}\in\{0,1,\ldots,2^{n+1}-1\}$ be a random variable with probabilities $\sigma_0,\sigma_1,\ldots,\sigma_{2^{n+1}-1}$, and $\mathbf{Z}$ an indicator for the event ``$\mathbf{N}\neq 0$.'' Note that $\mathbf{Z}=0$ with probability $\sigma_0\geq 1-2\eta_a$, and $H(\mathbf{N}\mid \mathbf{Z}=0)=0$. By a similar argument as in~\cite[Theorem~15]{arunachalam2018optimal}, we~have 
			\begin{align*}
			S(\rho) & =H(\mathbf{N})=H(\mathbf{N},\mathbf{Z})=H(\mathbf{Z})+H(\mathbf{N}\mid\mathbf{Z})\\
			& =H(\sigma_0)+\sigma_0\cdot H(\mathbf{N}\mid \mathbf{Z}=0) + (1-\sigma_0)\cdot H(\mathbf{N}\mid \mathbf{Z}=1) \\
			& \leq H(\eta_a) + \eta_a(n+1)\\
   &\leq O(\eta_a(n+\log (1/\eta_a)) 
			\end{align*}
			using $H(\alpha)\leq O(\alpha\log (1/\alpha))$.
		\end{enumerate}
		Combining these three steps implies $T=\Omega(\log |\Cc| / (n\eta_a))$.
  
  It now remains to bound $|\Cc|,\eta_a$. To this end, we prove bounds on both quantities below.
  \begin{claim}
			The number of distinct $(\log s)$-dimensional subspaces of $\F_2^n$ is at least $ s^{\Omega(n-\log s)}$.
		\end{claim}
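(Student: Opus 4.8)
The plan is to count $(\log s)$-dimensional subspaces of $\F_2^n$ directly and then bound the resulting Gaussian binomial coefficient crudely from below. Recall that the number of $k$-dimensional subspaces of $\F_2^n$ equals
\[
\binom{n}{k}_2 \;=\; \prod_{i=0}^{k-1} \frac{2^{n}-2^{i}}{2^{k}-2^{i}} \;=\; \prod_{i=0}^{k-1} \frac{2^{n-i}-1}{2^{k-i}-1},
\]
which one sees by dividing the number $\prod_{i=0}^{k-1}(2^n-2^i)$ of ordered linearly independent $k$-tuples in $\F_2^n$ by the number $\prod_{i=0}^{k-1}(2^k-2^i)$ of ordered bases of any fixed $k$-dimensional subspace. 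So the whole claim reduces to a clean lower bound on this product.

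The key step is a per-factor estimate: for each $0 \le i \le k-1$ one has $\frac{2^{n-i}-1}{2^{k-i}-1} \ge 2^{n-k}$. Indeed, this is equivalent to $2^{n-i}-1 \ge 2^{n-k}(2^{k-i}-1) = 2^{n-i}-2^{n-k}$, i.e.\ to $2^{n-k}\ge 1$, which holds whenever $k \le n$. Multiplying the $k$ factors then gives $\binom{n}{k}_2 \ge 2^{k(n-k)}$.

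It remains to substitute $k=\log s$ (here $s$ is a power of two as in the construction of $\Cc$, so $\log s\in\mathbb{N}$; otherwise one replaces $\log s$ by $\lfloor \log s\rfloor$, changing only constants in the exponent). This yields that the number of $(\log s)$-dimensional subspaces of $\F_2^n$ is at least $2^{(\log s)(n-\log s)} = s^{\,n-\log s} = s^{\Omega(n-\log s)}$, as claimed. The argument is elementary and I do not expect a real obstacle; the only point needing a little care is the per-factor inequality above, and — if one prefers to avoid Gaussian binomials entirely — the slightly lossier route of bounding $\prod_{i=0}^{k-1}(2^n-2^i)\ge 2^{k(n-1)}$ and $\prod_{i=0}^{k-1}(2^k-2^i)\le 2^{k^2}$ also gives $s^{\Omega(n-\log s)}$.
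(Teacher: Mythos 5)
Your proposal is correct and follows essentially the same route as the paper: both count $(\log s)$-dimensional subspaces by dividing the number of ordered linearly independent $k$-tuples by the number of ordered bases of a fixed subspace, and then lower-bound the resulting ratio. Your per-factor estimate $\frac{2^{n-i}-1}{2^{k-i}-1}\ge 2^{n-k}$ is a slightly cleaner (and tighter, giving exponent exactly $k(n-k)$) version of the paper's cruder bound $\frac{(2^n-2^{d-1})^d}{(2^d-1)^d}\ge 2^{\Omega((n-d)d)}$, but this is a minor refinement rather than a different argument.
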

		
		\begin{proof}
			For simplicity below, let $d=\log s$. We can specify a $d$-dimensional subspace by giving $d$ linearly independent vectors in it. The number of distinct sequences of $d$ linearly independent vectors is exactly $(2^n-1)(2^n-2)(2^n-4)\cdots (2^n-2^{d-1})$, because once we have the first $t$ linearly independent vectors, with span $\Se_t$, then there are $2^n-2^t$ vectors that do not lie in $\Se_t$. 
			
			However, we are double-counting certain subspaces in the argument above, since there will be multiple sequences of vectors yielding the same subspace. The number of sequences yielding a fixed $d$-dimensional subspace can be counted in a similar manner as above and we get
			$(2^{d}-1)(2^{d}-2)(2^{d}-4)\cdots (2^{d}-2^{d-1})$.
			So the total number of subspaces is
			$$
			\frac{(2^n-1)(2^n-2)\cdots (2^n-2^{d-1})}{(2^{d}-1)(2^{d}-2)\cdots (2^{d}-2^{d-1})}\geq \frac{(2^n-2^{d-1})^{d}}{(2^{d}-1)^{d}} \geq 2^{\Omega((n-d)d)}= s^{\Omega(n-\log s)},
			$$ 
   where we used $d=\log s$
		\end{proof}
  Next, it remains to \emph{upper} bound $\eta_a$. To this end, first observe that
  \begin{align*}
  \eta_{\mathsf{a}}\leq \max_{c,c'\in \Cc}\Pr_x[c(x)\neq c'(x)]&=\max_{V,V'}\Pr_x[c_V(x)+ c_{V'}(x)\neq 0]\\
  &=\max_{V}\Pr_x[c_V(x)=1]=\frac{1}{2^n}\cdot |V^\perp|=1/s.
  \end{align*}
  Putting everything together, we have shown that, \emph{ exact} learning the concept class $\Cc$ requires 
  $$
T=\Omega(\frac{\log |\Cc| }{ n\eta_a})\geq \Omega(\frac{(\log s)\cdot (n-\log s)}{ n\cdot 1/s})={\Omega}(s\log s).
  $$
  In order to conclude the proof of the theorem, note that every algorithm that satisfies\footnote{Note that technically the learning algorithm, need not output a Boolean function $f':\01^n\rightarrow \pmset{}$, but in our setting since it \emph{knows} that the unknown Hamiltonian is characterized by a Boolean function, even if the algorithm outputs a real-valued function, then rounding it to a bit would also be a good approximation.}
  $$
  \varepsilon\geq \sum_x |\lambda_x-\lambda'_x|^2=\sum_S (\widehat{f}(S)-\widehat{f'}(S))^2=\Exp_x[|f(x)-f'(x)|^2]=\frac{1}{2^n}\sum_x [f(x)\neq f'(x)]=d(f,f')
  $$
We finally conclude by using~\cite[Claim~2.2]{haviv2016list}
  $$
  d(f_V,f_{V'})=\Pr_x[f_V(x)\neq f_{V'}(x)] \geq 1/(2s),
  $$
  so learning this function to error $<1/(2s)$ implies exact learnability.
\end{proof}

\bibliographystyle{alphaurl}
\bibliography{references}
\end{document}